\newcommand\ti[1]{\tilde{#1}}
\newcommand\bR{\mathbb{R}}
\newcommand\bN{\mathbb{N}}
\newcommand\bRp{\bR_{+}}
\newcommand\sN{\mathbb{N}}
\newcommand\sL{\mathcal{L}}
\newcommand\sH{\mathcal{H}}
\newcommand\sA{\mathcal{A}}
\newcommand\sC{\mathcal{C}}
\newcommand\sE{\mathcal{E}}
\newcommand\sS{\mathcal{S}}
\newcommand\sB{\mathcal{B}}
\newcommand\sY{\mathcal{Y}}
\newcommand\sF{\mathcal{F}}
\newcommand\sT{\mathcal{T}}
\newcommand\mM{M}
\newcommand\vx{x}
\newcommand\vy{y}
\newcommand\vz{z}
\newcommand\ve{e}
\newcommand\vzero{0}
\newcommand\va{a}
\newcommand\vh{h}
\newcommand\vu{u}
\newcommand\vv{v}
\newcommand\QOC{\rho}
\newcommand\cl[1]{\overline{#1}}
\newcommand{\tam}{\mathrm{argmin}}
\newcommand{\tlim}{\mathrm{lim}}
\newcommand{\tconv}{\mathrm{conv}}
\newcommand{\tmin}{\mathrm{min}}
\newcommand{\tmax}{\mathrm{max}}
\newcommand{\tker}{\mathrm{ker}}
\newcommand{\ls}{\langle}
\newcommand{\rs}{\rangle}
\newcommand{\re}{\mathcal{R}e}
\newcommand{\defin}{:=}
\newcommand{\supp}{\textrm{supp}}
\newcommand{\linspan}{\textrm{span}}
\newtheorem{theorem}{Theorem}[section]
\newtheorem{corollary}{Corollary}[section]
\newtheorem{proposition}{Proposition}[section]
\newtheorem{fact}{Fact}[section]
\newtheorem{lemma}{Lemma}[section]
\newtheorem{definition}{Definition}[section]
\newtheorem{remark}{Remark}[section]
\begin{document}

% first the title is needed
%\title{A general low complexity recovery framework}
\begin{frontmatter}
\title{
Stable recovery of low-dimensional cones in Hilbert spaces: \\
One RIP to rule them all}
\author[label2]{Yann Traonmilin\corref{cor1}}
 \cortext[cor1]{Corresponding author}
 \ead{yann.traonmilin@gmail.com}

\author[label2]{Rémi Gribonval}

%% \address{Address\fnref{label3}}
%% \fntext[label3]{}

 \address[label2]{INRIA Rennes - Bretagne Atlantique,\\ 
          Campus de Beaulieu \\
          35042 Rennes Cedex, France}

% a short form should be given in case it is too long for the running head

\begin{abstract}
Many inverse problems in signal processing deal with the robust estimation of unknown data from underdetermined linear observations. Low-dimensional models, when combined with  appropriate regularizers, have been shown to be efficient at performing this task. Sparse models with the 1-norm or low rank models with the nuclear norm are examples of such successful combinations. Stable recovery guarantees in these settings have been established using a common tool adapted to each case: the notion of restricted isometry property (RIP). In this paper, we establish generic RIP-based guarantees for the stable recovery of cones (positively homogeneous model sets) with arbitrary regularizers. These guarantees are illustrated on selected examples. For block structured sparsity in the infinite-dimensional setting, we use the guarantees for a family of regularizers which efficiency in terms of RIP constant can be controlled, leading to stronger and sharper guarantees than the state of the art.
%Further examples demonstrate the possibility to turn certain non-uniform recovery guarantees into stable uniform recovery guarantees.
\end{abstract}
\begin{keyword}
 sparse recovery; low dimensional models; restricted isometry; structured sparsity; regularization
 \end{keyword}
\end{frontmatter}

{This work was partly funded  by  the  European  Research  Council,
PLEASE project (ERC-StG-2011-277906).}

%\tableofcontents

\section{Introduction}

Linear inverse problems have become ubiquitous in signal processing and machine learning, in particular in the most extreme form of underdetermined linear inverse problems. In such settings, where some unknown data is to be estimated from underdetermined observations, a low-dimensional model is needed to complete these observations with some prior knowledge on the ``regularity'' of the unknown. Sparse and low-rank models have become particularly popular and provide a powerful theoretical and algorithmic framework to perform this estimation when the unknown is well approximated by a sparse vector or a low-rank matrix. 

\paragraph{Low-dimensional model sets}
A series of recent advances in low-dimensional regularization of inverse problems have established that stable reconstruction guarantees are also possible with more general low complexity models (beyond sparsity and low-rank) which provide the flexibility to model many types of data. Consider underdetermined observations $\vy = \mM \vx$ of a signal $\vx \in \sH$ ($\sH$ is a Hilbert space, finite or infinite-dimensional; $\vx$ may be a vector, a matrix, a tensor, a function \ldots) by a linear operator $\mM$ taking values in another Hilbert space $\sF$. The ``regularity'' or ``low-complexity'' of $\vx$ can be defined by the fact that $\vx \in \Sigma$, where $\Sigma \subset \sH$ is a {\em model set}. The two most traditional model sets encoding the low complexity of $\vx$ are the set of sparse vectors with at most $K$ non-zero coefficients, and the set of low-rank matrices with at most $r$ non-zero singular values. Many more model sets incorporating various other structures have been considered, most of them being unions of low-dimensional subspaces \cite{Blumensath_2009,Eldar_2009}. 

This paper focuses on general model sets with a special emphasis on models that are either {\em homogeneous ($t \cdot \vz \in \Sigma$ for any $t \in \bR$ and $\vz \in \Sigma$) or positively homogeneous ($t \cdot \vz \in \Sigma$ for any $t\geq0$ and $\vz \in \Sigma$)}. Homogeneous model sets are in fact unions of (possibly infinitely many) subspaces, and will be referred to as {\em UoS}. Positively homogeneous models sets are called {\em cones} and cover, e.g., the set of low-rank positive semi-definite matrices, the positive orthant, etc..

\paragraph{Uniform recovery guarantees and the RIP} Given a model set $\Sigma$, a widely studied fundamental question is the construction of decoders (i.e., functions taking the observations $y$ as input and yielding an estimation of $\vx$ as output) that recovers any $\vx \in \Sigma$ from observations $\vy=\mM\vx$ in the noiseless case,   
with stability guarantees in the presence of noise, and robustness when $\vx$ is only approximately in $\Sigma$. Such decoders are called \emph{instance-optimal}, see Section~\ref{sec:instance_optimality} for details.
Many efficient instance-optimal algorithms have been exhibited for classical sparse/low-rank model sets under appropriate restricted isometry properties (RIP). One can extend the notion of RIP of a linear operator $\mM$ to a general model set $\Sigma$ as follows (the set $\Sigma-\Sigma \defin \{\vx-\vx': \vx,\vx' \in \Sigma\}$ is called the {\em secant set} of $\Sigma$):
\begin{definition}[RIP]
$\mM: \sH \to \sF$ has the RIP on $\Sigma-\Sigma \subset \sH$ with constant $\delta$ if for all $\vx \in \Sigma -\Sigma$:
\begin{equation}\label{eq:DefRIP}
(1-\delta)\|\vx\|_\sH^2 \leq \|\mM \vx\|_\sF^2 \leq (1+\delta)\|\vx\|_\sH^2
\end{equation}
where $\|\cdot\|_{\sH}$ and $\|\cdot\|_{\sF}$ are the Euclidean norms of $\sH$ and $\sF$. 
\end{definition}

It is now well understood \cite{Bourrier_2014} that a lower RIP (the left hand side inequality in~\eqref{eq:DefRIP}) is necessary for the existence of instance-optimal decoders with respect to a given arbitrary model set $\Sigma$ in an arbitrary Banach space, and that a RIP is also sufficient for the existence of a decoder. However, such a decoder may not be easily computable.
Further, the existence of linear operators $\mM : \sH \to \bR^{m}$ satisfying the RIP on $\Sigma-\Sigma$ with an arbitrary model set $\Sigma \subset \sH$ has been established with general probabilistic arguments: $m$ need only be of the order of the {\em covering dimension of the normalized secant set} of $\Sigma$, which may be finite even when $\sH$ is infinite-dimensional  \cite{Eftekhari_2015,Dirksen_2014,Puy_2015}.

\paragraph{Decoding by constrained minimization}
For a selection of model sets $\Sigma$ and (often convex) regularizers $f$, a huge body of work \cite{Donoho_2006,Candes_2006b,Recht_2010, Candes_2010, Chandrasekaran_2012,Foucart_2013} has established that a RIP with small enough $\delta$ implies exact recovery guarantees for the particular decoder corresponding to the (convex) optimization program
\begin{equation} \label{eq:minimization}
  \underset{\vx \in \sH}{\tam}  f(\vx) \;s.t.\;  \mM \vx=\mM \vx_{0}.
\end{equation}
In the presence of bounded noise $\ve$ ($\|\ve\|_\sF \leq \eta$), decoders of the following form were shown to be instance-optimal:
\begin{equation} \label{eq:robust_minimization}
    \underset{\vx \in \sH}{\tam}   f(\vx) \;s.t.\;  \|\mM\vx-(\mM\vx_{0}+\ve)\|_\sF \leq \epsilon.
\end{equation}
A celebrated example \cite{Candes_2008} in $\sH = \bR^{n}$ is the combination of the $\ell^1$-norm decoder ($f(\vx) \defin \|\vx\|_{1}$) with Gaussian random projections ($\mM$ is $m \times n$ with i.i.d. Gaussian entries): when $m \gtrsim Klog(n/K)$, the matrix $\mM$ satisfies a RIP on the set $\Sigma-\Sigma$ of $2K$-sparse vectors that is sufficient to guarantee that $\ell^{1}$ minimization is instance-optimal on $K$-sparse vectors. 
Most RIP results aim at either generalizing the area of application \cite{Blumensath_2011,Candes_2011b} or the sharpness of hypotheses. For example the classical sufficient RIP condition $\delta < \sqrt{2}-1 \approx 0.414$ \cite{Candes_2008} for $\ell^{1}$ minimization was improved by several authors (see e.g. historical notes in \cite[Chapter 6]{Foucart_2013}). Cai and Zhang \cite{Cai_2014} closed this game for sparse vectors and low rank matrices by establishing recovery guarantees when $\delta < 1/\sqrt{2} \approx 0.707$, a result which is sharp in light of the work of Davies and Gribonval~\cite{Davies_2009}.

\paragraph{Convex regularization}

Convex regularizers are of practical interest. The decoder~\eqref{eq:robust_minimization} is indeed convex when $f$ is convex and most particularly when $f$ is a norm. Chandrasekaran et al. \cite{Chandrasekaran_2012}  gave general results for the recovery of sparse signals under random Gaussian observations (compressed sensing, random $M$) with atomic norms. These results give a unified understanding of a wide variety of norms used for low complexity recovery : $\ell^1$-norm, nuclear norm \ldots They give a way to calculate \emph{ non-uniform} recovery results by studying the relation between the geometry of the atomic norm and random observations. These results were extended to understand the phenomenon of phase transition~\cite{Ameluxen_2014} in compressed sensing. However such generic results are not available for \emph{uniform} recovery.

\paragraph{General objective} These observations lead us to the general objective of this paper. Our objective is to address the following question: \emph{given a general model set $\Sigma$ and a regularizer $f$, what RIP condition is sufficient to ensure that solving the optimization problem~\eqref{eq:robust_minimization} yields an instance-optimal decoder?} 

We summarize how we answer this question in this article in the following Section~\ref{sec:contributions}.

%Although not central in this paper, a more far-reaching objective is to use this generic RIP condition to design a regularizer $f(\cdot)$ from a given model. Indeed, when facing a real world inverse problem, the model is often given by the physics of the data we wish to observe. Thus the regularizer should be chosen accordingly. 

\subsection{Contributions and outline}\label{sec:contributions}

\paragraph{Admissible RIP constants for stable recovery with an arbitrary regularizer $f$} 
In this article, we give an admissible RIP constant $\delta_\Sigma(f)$ on $\Sigma -\Sigma$ that guarantees stable recovery of elements of $\Sigma$ with a regularizer $f$. As the notation suggests, the constant $\delta_\Sigma(f)$ only depends on the model $\Sigma$ and the regularizer $f$. It is related to the geometry of the model set $\Sigma$ through the \emph{atomic norm} $\|\cdot\|_\Sigma$ (a function which construction depends only on the model $\Sigma$, see Section~\ref{sec:atomicnorms}) and to the geometry of the  descent vectors of $f$ at points of $\Sigma$ (which union is denoted $\sT_f(\Sigma)$, its precise definition is given in Section~\ref{sec:definitions}). 
This constant is expressed as
\begin{equation}
 \delta_\Sigma(f) \defin \underset{\vz\in \sT_f(\Sigma) \setminus \{0\} }{\inf} \delta_\Sigma(\vz)
\end{equation}
where $ \delta_\Sigma(\vz) \defin \underset{\vx \in \Sigma}{\sup}\; \delta_{\Sigma}(\vx,\vz)$. Depending whether $\Sigma$ is a UoS or a cone:
\begin{eqnarray}
\delta_{\Sigma}(\vx,\vz) = \delta_\Sigma^{UoS}(\vx,\vz) &\defin&  \frac{-\re \ls \vx,\vz \rs}{\|\vx\|_\sH \sqrt{ \|\vx+\vz\|_\Sigma^2 -  \|\vx\|_\sH^2 - 2\re \ls \vx,\vz \rs}}\label{eq:DefDeltaUoS}\\
\delta_{\Sigma}(\vx,\vz) =  \delta^{cone}_\Sigma(\vx,\vz) &\defin& \frac{-2\re \ls \vx,\vz \rs}{ \|\vx+\vz\|_\Sigma^2- 2\re \ls \vx,\vz \rs}.\label{eq:DefDeltaCone}
\end{eqnarray}
We will show that in most conventional examples these constants can be bounded by estimates that are sharp with respect to a collection of models (e.g. $\delta_\Sigma(f) \geq 1/\sqrt{2}$ for sparse vectors and $\ell^1$-norm). Moreover, we can derive better than state-of-the art constants in other cases (block structured sparsity). For the cone case, we actually characterize a sharper constant with a more complicated expression, which will be exhibited precisely in Theorem~\ref{th:robust_RIP_eucl_precise}. 

Section~\ref{sec:theorems} is dedicated to the proof of our main result. This result  builds on the sharp RIP condition of Cai and Zhang for stable recovery of sparse vectors (respectively low rank matrices) by  $\ell^1$ norm (respectively nuclear norm) minimization, establishing that it can be extended to this general setting.
\begin{theorem}[RIP condition for exact recovery]\label{th:sharp_RIP}
Assume that $\Sigma$ is either a UoS or a cone. Then, for any $\mM$ continuous on $\sH$  that satisfies the RIP on $\Sigma-\Sigma$ with constant $\delta < \delta_\Sigma(f)$ we have: for all $\vx_{0} \in \Sigma$, $\vx_{0}$ is the unique minimizer of~\eqref{eq:minimization}.
\end{theorem}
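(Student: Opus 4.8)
The plan is to reduce exact recovery to a null-space-type condition on the descent vectors of $f$, and then to show that the RIP with $\delta<\delta_\Sigma(f)$ forbids any nonzero descent vector from lying in $\tker \mM$. First I would fix $\vx_{0}\in\Sigma$ and note that the feasible set of~\eqref{eq:minimization} is $\{\vx_{0}+\vz:\vz\in\tker\mM\}$, so that $\vx_{0}$ is the unique minimizer if and only if $f(\vx_{0}+\vz)>f(\vx_{0})$ for every $\vz\in\tker\mM\setminus\{0\}$. By the definition of the descent set (Section~\ref{sec:definitions}), $f(\vx_{0}+\vz)\leq f(\vx_{0})$ holds exactly when $\vz\in\sT_f(\vx_{0})$, so it suffices to prove $\tker\mM\cap(\sT_f(\Sigma)\setminus\{0\})=\emptyset$; since $\sT_f(\Sigma)=\bigcup_{\vx_{0}\in\Sigma}\sT_f(\vx_{0})$, this single statement handles all $\vx_{0}\in\Sigma$ simultaneously.

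Next I would argue by contradiction: suppose some $\vz\neq0$ with $\vz\in\sT_f(\Sigma)$ satisfies $\mM\vz=0$. The crux is a pointwise lower bound on the RIP constant, namely that for \emph{every} $\vx\in\Sigma$ the combination of $\mM\vz=0$ and~\eqref{eq:DefRIP} forces $\delta\geq\delta_\Sigma(\vx,\vz)$. Granting this bound, taking the supremum over $\vx\in\Sigma$ yields $\delta\geq\delta_\Sigma(\vz)\geq\inf_{\vz'\in\sT_f(\Sigma)\setminus\{0\}}\delta_\Sigma(\vz')=\delta_\Sigma(f)$, which contradicts the hypothesis $\delta<\delta_\Sigma(f)$ and closes the argument.

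To establish the pointwise bound I would exploit $\mM\vz=0$ in the form $\mM(\vx+\vz)=\mM\vx$, together with an atomic decomposition $\vx+\vz=\sum_i\vu_i$ with $\vu_i\in\Sigma$ and $\sum_i\|\vu_i\|_\sH$ approaching $\|\vx+\vz\|_\Sigma$. Expanding $\|\mM\vx\|_\sF^2=\re\ls\mM\vx,\mM(\vx+\vz)\rs=\sum_i\re\ls\mM\vx,\mM\vu_i\rs$ reduces matters to controlling each cross term $\re\ls\mM\vx,\mM\vu_i\rs$ by the polarization consequences of the RIP. This is where the UoS/cone dichotomy enters: for a cone only the differences $\vx-\vu_i$ lie in the secant set $\Sigma-\Sigma$, producing one-sided estimates that give~\eqref{eq:DefDeltaCone}; for a UoS one also has $-\vu_i\in\Sigma$, so both $\vx\pm\vu_i$ are admissible and the resulting two-sided estimate (with the square root arising from optimizing the free scaling parameter in the polarization) gives the sharper~\eqref{eq:DefDeltaUoS}. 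Combining these with the lower RIP on $\vx$ and rearranging isolates $\delta$.

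The hard part will be reproducing the \emph{sharp} constant. A naive decomposition followed by a single Cauchy--Schwarz step loses a factor: one ends up with a denominator of the shape $\|\vx\|_\sH\,\|\vx+\vz\|_\Sigma$ rather than $\|\vx+\vz\|_\Sigma^2$, hence a strictly weaker threshold. Matching the sharp Cai--Zhang value (and thereby $1/\sqrt{2}$ in the sparse/low-rank specializations) should require the correct convex/atomic decomposition of the error term, generalizing the \emph{sparse representation of a polytope}, together with an optimal joint use of the upper and lower RIP inequalities. Secondary technical points I would address are the attainment of the infimum defining $\|\vx+\vz\|_\Sigma$ (or a suitable limiting argument using the continuity of $\mM$ on $\sH$), and the precise closed-versus-open convention for $\sT_f$ so that the descent characterization in the first step is exact.
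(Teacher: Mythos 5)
Your reduction to the null-space condition $\tker \mM \cap \sT_f(\Sigma) = \{\vzero\}$ and the contradiction skeleton are both sound (the paper itself records this equivalence in Section~\ref{sec:definitions}), and the pointwise claim you isolate --- that $\mM\vz = \vzero$ with $\vz \in \sT_f(\Sigma)\setminus\{0\}$ forces $\delta \geq \delta_\Sigma(\vx,\vz)$ for every $\vx \in \Sigma$ --- is indeed true and does imply the theorem. The genuine gap is that you never prove this bound with the stated constants, and the route you sketch structurally cannot. Controlling each cross term $\re\ls \mM\vx, \mM\vu_i\rs$ separately (polarization plus Cauchy--Schwarz, atom by atom) produces the linear quantity $\sum_i \|\vu_i\|_\sH \approx \|\vx+\vz\|_\Sigma$, hence a denominator of the form $\|\vx\|_\sH\left(\|\vx\|_\sH + \|\vx+\vz\|_\Sigma\right)$; as you note yourself, this is strictly weaker than $\delta_\Sigma^{UoS}$ and $\delta_\Sigma^{cone}$ (for instance it yields $1/(1+\sqrt{\alpha}) = 1/2$ instead of the sharp $1/\sqrt{1+\alpha} = 1/\sqrt{2}$ when $\QOC=0$, $\alpha=1$). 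Acknowledging that "the correct convex/atomic decomposition" is needed does not supply it: that sharp pointwise bound \emph{is} the entire technical content of the theorem, and your proposal defers it to an unspecified future step.

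For comparison, the paper does not prove Theorem~\ref{th:sharp_RIP} directly but obtains it as the noiseless corollary of the stability result (Theorem~\ref{th:robust_RIP_eucl_precise}), whose proof supplies precisely the three missing ingredients. First, Fact~\ref{fact:norm_formulation} characterizes $\|\vx+\vz\|_\Sigma^2$ as an infimum of $\sum_i \lambda_i \|\vu_i\|_\sH^2$ over convex combinations $\vx+\vz = \sum_i \lambda_i \vu_i$ with $\vu_i \in \bRp\Sigma$; this \emph{quadratic} dependence on the atoms is what your linear decomposition $\sum_i \|\vu_i\|_\sH$ cannot capture. Second, the identity of Fact~\ref{prop:ident0}, applied to the one-parameter family $\vh_i = \vu_i - (1+\beta)\vx$, converts $\sum_i \lambda_i \|\mM\vh_i\|_\sF^2$ into $\sum_i \lambda_i \|\mM\left((1-\beta)\vx - \vu_i\right)\|_\sF^2 - 4\beta\,\re\ls \mM\vx, \mM\vz\rs$, so the lower and upper RIP are applied \emph{jointly} to all atoms at once rather than term by term. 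Third, the constants come from optimizing over $\beta$, and the UoS/cone dichotomy enters not through $\pm\vu_i$ as you suggest, but through the admissible range of the parameter multiplying $\vx$: one needs $(1+\beta)\vx$ and $(1-\beta)\vx$ in $\Sigma$, so $\beta \in \bR$ for a UoS versus $\beta \in (-1,1)$ for a cone. The square root in $\delta_\Sigma^{UoS}$ is the unconstrained optimum $\beta_{\textrm{opt}} = \sqrt{1+\alpha-2\QOC}$, while the cone formula (and the two-case sharp variant in Theorem~\ref{th:robust_RIP_eucl_precise}) arises from saturating $\beta=1$ when $\beta_{\textrm{opt}} > 1$. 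Without these ingredients your argument, carried to completion, proves a true but strictly weaker statement with a worse admissible RIP constant.
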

When $\Sigma$ is not a cone (consider, e.g., a point cloud, see Section~\ref{sec:other_ex}), we can consider the cone generated by $\Sigma$, $\Sigma' \defin \bRp \Sigma$ instead. Since $\Sigma' \supset \Sigma$ is a cone, a RIP hypothesis on $\Sigma'-\Sigma'$ allows to apply the above Theorem and to provide recovery guarantees on $\Sigma$.

A stable recovery formulation for minimization~\eqref{eq:robust_minimization} is also established.

\begin{theorem}[RIP condition for stable recovery]\label{th:robust_RIP_eucl}
Assume that $\Sigma$ is either a UoS or a cone. Then, for any $\mM$ continuous on $\sH$  that satisfies the RIP on $\Sigma-\Sigma$ with constant $\delta < \delta_\Sigma(f)$ we have: for all $\vx_{0}\in\Sigma$, $\|\ve\|_\sF \leq \eta \leq \epsilon$, with $\vx^*$ the result of minimization~\eqref{eq:robust_minimization},
\begin{equation}
\| \vx^*-\vx_{0}\|_\sH \leq   C_{\Sigma}(f,\delta) \cdot (\eta +\epsilon )
\end{equation}
where $C_{\Sigma}(f,\delta) < +\infty$. 
\end{theorem}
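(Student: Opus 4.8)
The plan is to follow the standard two-ingredient scheme for stable recovery and to turn the \emph{qualitative} exact-recovery mechanism behind Theorem~\ref{th:sharp_RIP} into a \emph{quantitative} one. Write $\vh\defin\vx^*-\vx_{0}$. First I would exploit feasibility and optimality: since $\vx_{0}$ satisfies $\|\mM\vx_{0}-(\mM\vx_{0}+\ve)\|_\sF=\|\ve\|_\sF\leq\eta\leq\epsilon$, it is admissible for~\eqref{eq:robust_minimization}, so $f(\vx^*)\leq f(\vx_{0})$; hence $\vh$ is a descent direction of $f$ at $\vx_{0}\in\Sigma$, i.e. $\vh\in\sT_f(\Sigma)$. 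On the data side, the triangle inequality gives
\begin{equation*}
\|\mM\vh\|_\sF=\|\mM\vx^*-\mM\vx_{0}\|_\sF\leq\|\mM\vx^*-(\mM\vx_{0}+\ve)\|_\sF+\|\ve\|_\sF\leq\epsilon+\eta .
\end{equation*}
Thus it suffices to establish a uniform \emph{restricted lower bound}: a constant $c=c(\delta,\delta_\Sigma(f))>0$ such that $\|\mM\vz\|_\sF\geq c\,\|\vz\|_\Sigma$ for every $\vz\in\sT_f(\Sigma)$. Combined with the display this yields $\|\vh\|_\Sigma\leq c^{-1}(\eta+\epsilon)$, whence the theorem with $C_\Sigma(f,\delta)\defin c^{-1}$, the left inequality $\|\cdot\|_\sH\leq\|\cdot\|_\Sigma$ being the general domination of the Hilbert norm by the atomic norm established in Section~\ref{sec:atomicnorms}.

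\textbf{The key lemma.} The core is the restricted lower bound, which I would derive as the quantitative counterpart of the computation proving Theorem~\ref{th:sharp_RIP}. Fix $\vz\in\sT_f(\Sigma)$ and an anchor $\vx\in\Sigma$. The idea is to decompose, via the atomic norm, the point $\vx+\vz$ into generators of $\Sigma$ with total weight $\|\vx+\vz\|_\Sigma$, so that $\vz$ is expressed as a combination of secant vectors of the form $\ti{\vu}-\vx\in\Sigma-\Sigma$ with $\ti{\vu}\in\Sigma$, on which the RIP~\eqref{eq:DefRIP} applies. Expanding $\|\mM\vz\|_\sF^2$ through these secants and the attendant inner products, and combining the lower and upper RIP inequalities with the polarization identity, produces an estimate in which the cross term $\re\ls\vx,\vz\rs$ and the atomic quantity $\|\vx+\vz\|_\Sigma^2$ organize themselves precisely into the pointwise constants~\eqref{eq:DefDeltaUoS}--\eqref{eq:DefDeltaCone}: one obtains $\|\mM\vz\|_\sF>0$ exactly when $\delta<\delta_\Sigma(\vx,\vz)$, with a quantitative margin growing in $\delta_\Sigma(\vx,\vz)-\delta$. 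Since the anchor $\vx$ is ours to choose, optimizing over $\vx\in\Sigma$ upgrades $\delta_\Sigma(\vx,\vz)$ to $\delta_\Sigma(\vz)=\sup_{\vx}\delta_\Sigma(\vx,\vz)$, and the hypothesis $\delta<\delta_\Sigma(f)=\inf_{\vz}\delta_\Sigma(\vz)$ supplies the \emph{uniform} margin $\delta_\Sigma(\vz)-\delta\geq\delta_\Sigma(f)-\delta>0$ for every descent direction $\vz$.

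\textbf{From pointwise to uniform --- the main obstacle.} It then remains to convert these strictly positive pointwise margins into a single constant $c>0$ valid over all of $\sT_f(\Sigma)$, and this is where I expect the real difficulty to lie. Because $\sT_f(\Sigma)$ is a cone, both $\|\mM\vz\|_\sF$ and $\|\vz\|_\Sigma$ are positively homogeneous in $\vz$, so the ratio $\|\mM\vz\|_\sF/\|\vz\|_\Sigma$ only needs to be controlled on the slice $\{\|\vz\|_\Sigma=1\}$. The plan is to package the pointwise bound as an explicit function of $(\delta,\delta_\Sigma(\vz))$ that is monotone in its second argument, so that the uniform inequality $\delta_\Sigma(\vz)\geq\delta_\Sigma(f)>\delta$ forces $\|\mM\vz\|_\sF\geq c(\delta,\delta_\Sigma(f))>0$ \emph{independently of} $\vz$. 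The delicate point is that neither the infimum defining $\delta_\Sigma(f)$ nor the supremum defining $\delta_\Sigma(\vz)$ need be attained (the slice is generally non-compact and $\sH$ may be infinite-dimensional), so one cannot invoke compactness; the constant $c$ must instead be produced \emph{constructively} from the strict inequality $\delta<\delta_\Sigma(f)$ and the explicit formula. Once $c>0$ is secured, $C_\Sigma(f,\delta)=c^{-1}<+\infty$ and the theorem follows; the sharper cone constant announced in Theorem~\ref{th:robust_RIP_eucl_precise} would come from running the same scheme while retaining, rather than relaxing, the finer dependence on $\vx$ in~\eqref{eq:DefDeltaCone}.
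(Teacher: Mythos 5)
Your architecture is the paper's own: the opening reduction (feasibility of $\vx_{0}$ gives $f(\vx^*)\leq f(\vx_{0})$, hence $\vh\in\sT_f(\Sigma)$; the triangle inequality gives $\|\mM\vh\|_\sF\leq\eta+\epsilon$) is exactly how the proof of Theorem~\ref{th:robust_RIP_eucl_precise} begins, and your ``key lemma'' describes precisely its mechanism: decompose $\vx+\vz$ atomically via Fact~\ref{fact:norm_formulation}, apply the polarization-type identity of Fact~\ref{prop:ident0} to a family of secants of the form $\vu_i-(1+\beta)\vx$ and $(1-\beta)\vx-\vu_i$ in $\Sigma-\Sigma$, invoke the lower/upper RIP, and optimize over $\beta$ and the anchor $\vx$ (the restriction $\beta\in(-1,1)$ for cones versus $\beta\in\bR$ for UoS is what produces the two formulas \eqref{eq:DefDeltaUoS}--\eqref{eq:DefDeltaCone}). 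The paper's chain of inequalities is indeed equivalent to your restricted lower bound $\|\mM\vz\|_\sF\geq c\,\|\vz\|_\Sigma$ on $\sT_f(\Sigma)\setminus\{0\}$.

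The gap is in your final step, and it is genuine. You claim the pointwise bound can be packaged as a function of $(\delta,\delta_\Sigma(\vz))$, monotone in the second argument, so that the margin $\delta_\Sigma(\vz)\geq\delta_\Sigma(f)>\delta$ alone forces $\|\mM\vz\|_\sF\geq c(\delta,\delta_\Sigma(f))\|\vz\|_\Sigma$. This is false. What the RIP computation actually yields, for an anchor $\vx$ with $\QOC=\QOC(\vx,\vz)$ and $\alpha=\alpha_{\Sigma}(\vx,\vz)$, is (UoS case)
\begin{equation*}
\|\mM\vz\|_\sF \;\geq\; \frac{G_{\textrm{UoS}}(\QOC,\alpha,\delta)}{2\sqrt{1+\delta}\,(1+\sqrt{\alpha})}\,\|\vz\|_\Sigma,
\qquad
G_{\textrm{UoS}}(\QOC,\alpha,\delta)=2(1-\QOC)-2\delta\sqrt{1+\alpha-2\QOC},
\end{equation*}
and this prefactor is \emph{not} a function of $\bigl(\delta,\,\delta_{\textrm{UoS}}(\QOC,\alpha)\bigr)$ alone, where $\delta_{\textrm{UoS}}(\QOC,\alpha)=(1-\QOC)/\sqrt{1+\alpha-2\QOC}$. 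Indeed, take $\QOC=1-t/\sqrt{2}$ and $\alpha=1-\sqrt{2}\,t+t^{2}$ with $t\to 0_{+}$: then $\delta_{\textrm{UoS}}(\QOC,\alpha)=1/\sqrt{2}$ for every $t$, yet $G_{\textrm{UoS}}(\QOC,\alpha,\delta)/(1+\sqrt{\alpha})=t(\sqrt{2}-2\delta)/(1+\sqrt{\alpha})\to 0$. So a fixed margin $\delta_\Sigma(\vx,\vz)-\delta\geq m>0$ does not by itself produce any lower bound on the stability prefactor; one also needs to control $(1-\QOC)$ and $\alpha$ separately. The paper does not attempt your reduction: it sets $C_{\Sigma}(f,\delta)=2\sqrt{1+\delta}/D_{\Sigma}(f,\delta)$ where $D_{\Sigma}(f,\delta)$ is the inf-sup over $(\vz,\vx)$ of the full two-variable expression $G_{\Sigma}(\QOC,\alpha,\delta)/(1+\sqrt{\alpha})$ --- a quantity depending on the geometry of $(\Sigma,f)$ through more than $\delta_\Sigma(f)$ --- and the finiteness assertion in Theorem~\ref{th:robust_RIP_eucl} is exactly the positivity of that inf-sup. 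In the worked examples this positivity is obtained constructively, by exhibiting for every $\vz\in\sT_f(\Sigma)\setminus\{0\}$ an anchor with $\QOC(\vx,\vz)=0$ and $\alpha_{\Sigma}(\vx,\vz)\leq\alpha_\Sigma(f)$, which bounds $\delta_\Sigma(f)$ and the stability constant simultaneously. To repair your argument you must carry both parameters $(\QOC,\alpha)$ (or a uniform bound of this type) through to the conclusion, not only their combination $\delta_\Sigma(\vx,\vz)$.
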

The constant $C_{\Sigma}(f,\delta)$ is explicit in most classical examples (see also Section~\ref{sec:structured_sparsity} for an apparently new explicit value for block structured sparsity). These two theorems are direct consequences of the more technical Theorem~\ref{th:robust_RIP_eucl_precise} proven in Section~\ref{sec:theorems}.
As described in Section \ref{sec:instance_optimality}, the hypotheses of Theorem~\ref{th:robust_RIP_eucl} also imply robustness to modeling error (stable and robust instance optimality) for positively homogeneous, non-negative, convex functions  $f$. In these cases, the level of modeling error is not needed to set the parameter $\epsilon$. Such knowledge would be needed if the modeling error was considered as an observation noise. We will show on various classical examples how the additional hypotheses on $f$ are met.

\paragraph{Low complexity compressed sensing framework} These results give a general framework to check if a given regularizer $f$ is a good candidate to perform low complexity recovery and get a compressed sensing result when combining with the work of Dirsksen et al. \cite{Dirksen_2014} and Puy et al. \cite{Puy_2015}. Given a UoS (or a cone) $\Sigma$ and a function $f$, we can get a uniform recovery result for random observations of arbitrary signals in $\Sigma$ by establishing that (with high probability) a random linear measurement operator satisfies a RIP on the secant set $\Sigma-\Sigma$ . 

 Let $\Sigma$ be a model set and $f$ a candidate as a regularizer. We can follow these steps to get a potential low complexity compressed sensing result : 
\begin{enumerate}
\item Calculate (or bound) $\delta_\Sigma(f)$ and $C_\Sigma(f,\delta)$.
 \item Construct a dimension-reducing matrix $\mM$ satisfying the desired RIP on $\Sigma-\Sigma$, e.g., by applying techniques from \cite{Dirksen_2014} and \cite{Puy_2015} involving random subgaussian matrices. 
 \item Apply Theorem~\ref{th:robust_RIP_eucl} to conclude that regularization with $f$ leads to stable recovery of vectors from $\Sigma$ from their noisy low-dimensional linear measurements with $\mM$.
\end{enumerate}
This framework is illustrated on a few examples.
\begin{itemize}
\item %\paragraph
{\bf Block Structured Sparsity in Infinite dimension.} 
We use our theorems in Section~\ref{sec:structured_sparsity} to give uniform recovery results under RIP conditions for the set $\Sigma$ of block group sparse vectors, using regularization with structured sparsity inducing norms (group norms, without overlap between groups).  
 
In a finite-dimensional setting, Ayaz et al. \cite{Ayaz_2014} established uniform recovery guarantees for compressed sensing of such structured sparse vectors under a RIP hypothesis. In this case, the regularizer is a mixed $\ell^1-\ell^2$ norm (also called group norm, it can be interpreted as an atomic norm, see Section~\ref{sec:atomic_norm_ex}). Given any matrix $M$ with a RIP constant $\delta <\sqrt{2}-1$ for vectors in $\Sigma - \Sigma$, this regularizer recovers vectors from $\Sigma$. 

Adcock and Hansen~\cite{Adcock_2012} propose a generalized sampling strategy to recover (structured) sparse signals in an infinite-dimensional setting. This setting is important to model the acquisition of analog signals, e.g. in magnetic resonance imaging (MRI). Bastounis et al.~\cite{Bastounis_2015} showed that the RIP constant must necessarily depend on the ratio of sparsity in each block and the number of blocks for the $\ell^1$ norm and block sparsity.
%, Fourier coefficients of the image are observed, this image being an element of the function of finite energy $L^2$ which has infinite dimension.

With our new general framework:
\begin{itemize}
\item we improve the RIP constant on group sparse vectors of Ayaz et al. \cite{Ayaz_2014} to the sharp $\frac{1}{\sqrt{2}}$;
\item we generalize the results of \cite{Ayaz_2014} to (non-overlapping) group norms and block sparsity in an {\em infinite-dimensional context} which encompasses the generalized sampling of Adcock and Hansen;
\item we exhibit a particular weighting of the group norm that removes the dependency of admissible RIP constant on the ratio of sparsity between blocks. 
\end{itemize}
\item
%\paragraph
{\bf Other models and regularizers} In Section~\ref{sec:other_ex}, we show how some other classical examples can be handled by this framework (including {\em low rank matrix recovery}, where our results match the sharp RIP results of Cai and Zhang \cite{Cai_2014} on which they are built).

The theoretical derivations in this article highlight the role of a particular {\em atomic norm} $\|\cdot\|_\Sigma$ constructed directly from the model $\Sigma$ (the definition and properties of this norm can be found in Section~\ref{sec:definitions}). This norm can be found in the literature \cite{Argyriou_2012,Richard_2014} for some specific models but cannot be directly used as a regularizer with uniform recovery guarantees on $\Sigma$ in general. In Section~\ref{sec:sigma_norm}, we study $f=\|\cdot\|_\Sigma$ as a candidate for regularization:
\begin{itemize}
\item we first show that this norm enables stable recovery under a RIP assumption with constant $\delta< \frac{2(1-\mu(\Sigma)) }{ 3+2\mu(\Sigma) }$ when the cone model $\Sigma$ is an {\em arbitrary finite union of one-dimensional half-spaces} with coherence parameter $\mu(\Sigma)$ (see Section \ref{sec:one_dim_union}). 
\item we further show that the set $\Sigma$ of (non-negative multiples of) permutation matrices can be stably recovered with $O(nlog(n))$ subgaussian linear measurements by using the regularizer $\|\cdot\|_\Sigma$, the norm induced by the Birkhoff polytope of bi-stochastic matrices. This is the consequence of a sufficient RIP constant $\delta<\frac{2}{3}$ for recovery. Similar results were established regarding {\em non-uniform}\footnote{For every given matrix $\vx \in \Sigma$, the probability of drawing a (Gaussian) matrix $\mM$ yielding exact recovery of this particular $\vx$ is high.} recovery of permutation matrices \cite{Chandrasekaran_2012} with the Gaussian ensemble, with applications to the recovery of rankings from low-dimensional projections. Here we obtain {\em uniform}\footnote{The probability of drawing a (subgaussian) matrix $\mM$ yielding exact recovery of {\em every} $\vx \in \Sigma$ is high.} recovery guarantees with qualitatively as many linear measurements.
\item however, this norm cannot perform stable recovery in the general case. We show that the set $\Sigma$ of $K$-sparse vectors  {\em cannot be uniformly recovered} by regularization with $\|\cdot\|_\Sigma$ (called the $K$-support norm in \cite{Argyriou_2012}).
\end{itemize}
\end{itemize}

Finally, we discuss these results and conclude in Section~\ref{sec:discussion}, especially, future directions for the design of regularizers.

\section{Preliminaries and definitions}\label{sec:definitions}
We consider a real or complex Hilbert space $\sH$ equipped with a Hermitian inner product $\ls \cdot,\cdot\rs$ and a Euclidean norm  $\|\vx\|_\sH^2 \defin \ls \vx,\vx\rs$. The Pythagorean property $\|\vx+\vx'\|_\sH^2 = \|\vx\|_\sH^2 + 2\re \ls \vx,\vx'\rs  + \|\vx'\|_\sH^2$ will play a major role in the proofs. Convergence of sequences, of sums, closure of sets and continuity of functions are all defined with respect to $\|\cdot\|_\sH$. For $\gamma \in \bRp=\{a \in \bR : a \geq 0 \}$, we define the sphere and the closed ball of radius $\gamma$,
\begin{equation}
\begin{split}
 &\sS(\gamma)=\{x \in \sH : \|x\|_\sH = \gamma \}, \\
 &\sB(\gamma)=\{x \in \sH : \|x\|_\sH \leq \gamma \}.
\end{split}
\end{equation}
The model set is some subset $\Sigma \subset \sH$. 
 The operator $\mM$ is a linear map from $\sH$ to a Hilbert space $(\sF, \|\cdot\|_\sF)$,   thus $\|\cdot\|_\sF$ is as well Euclidean.
 Its domain $\sH_{\mM} \defin \textrm{dom}(\mM) \subset \sH$ may be strictly smaller than $\sH$.

 The regularizer $f$ is a generalized function from $\sH$ to $\bR \cup \{+\infty\}$ which is finite on the set $\sH_{f} \defin \textrm{dom}(f) \subset \sH$, and we assume that $\Sigma \subset \sH_{f} \cap \sH_{\mM}$ so that for $\vx \in \Sigma$, $f(\vx) < \infty$ and $\mM \vx$ is well-defined. 
 \subsection{Descent sets}
 The notion of descent vector plays an important role in the analysis conducted in this paper. 
 \begin{definition}[Descent vectors]
 For any $\vx \in \sH$, the collection of descent vectors of $f$ at $\vx$ is
 \begin{equation}
 \sT_{f}(\vx) \defin \left\{ \vz \in \sH : f(\vx+\vz) \leq f(\vx) \right\}
 \end{equation}
\end{definition}
In the noiseless case, the main question investigated in this paper is whether for all $\vx \in \Sigma$,
\begin{equation}
\label{eq:recovery}
  \{\vx\} = \underset{\tilde{\vx} \in \sH_{\mM}}{\tam}  f(\tilde{\vx})\; s.t.\;  \mM \tilde{\vx}=\mM \vx,
\end{equation}
i.e., the minimizer is unique and matches $\vx$.

This is easily seen to be equivalent to $\tker \mM \cap \sT_{f}(\vx) = \{\vzero\}$, for all $\vx \in \Sigma$, where $\tker \mM \defin \{\vz \in \sH_{\mM}, \mM \vz = \vzero\}$ is the null space of $\mM$. In other words, the property we wish to establish under a RIP assumption is
\begin{equation}
\tker \mM \cap \sT_f(\Sigma) = \{\vzero\}.
\end{equation}
where as a shorthand the union of all descent vectors on points of the model $\Sigma$, later also called descent set for brevity, is denoted 
\begin{equation}
\sT_f(\Sigma) \defin \underset{\vx\in \Sigma}{\bigcup} \sT_{f}(\vx).
\end{equation}

\subsection{Atomic norms}
\label{sec:atomicnorms}
A particular family of {\em convex} regularizers $f$ is defined using the notion of {\em atomic ``norm''} \cite{Chandrasekaran_2012} (which in fact is not always a norm). Considering a set $\sA \subset \sH$, commonly called the set of {\em atoms}, the corresponding {\em atomic ``norm''} is built using the convex hull of $\sA$.

\begin{definition}[Convex hull]
The convex hull of a set $\sA$ is : 
\begin{equation}
\tconv(\sA) \defin  \left\{\vx = \sum c_i \va_i : \va_i \in \sA, c_i \in \bRp, \sum c_i = 1\right\}
\end{equation}
\end{definition}

\begin{definition}[Atomic norm]
 The atomic ``norm'' induced by the set $\sA$ is defined as: 
\begin{equation}
 \|\vx\|_\sA \defin \inf \left\{ t \in \bRp:  \vx \in t\cdot\cl{ \tconv}(\sA) \right\}
\end{equation}
where $\cl{ \tconv}(\sA)$ is the closure of $\tconv(\sA)$ in $\sH$. This ``norm'' is finite only on 
\begin{equation}
\sE(\sA) \defin \bRp \cdot \cl{\tconv}(\sA) = \{\vx = t\cdot\vy, t \in \bRp,\vy \in \cl{\tconv}(\sA)\} \subset \sH.
\end{equation}
It is extended to $\sH$ by setting $\|\vx\|_\sA = +\infty$ if $\vx \notin \sE(\sA)$. 
\end{definition}

The main properties of $\|\cdot\|_\sA$ depend on the considered set $\sA$ through those of the convex set $\sC \defin \cl{\tconv(\sA)}$. 
In  fact, the atomic ``norm'' $\|\cdot\|_\sA$ is the gauge function associated with the convex set $\sC$ \cite{Rockafellar_2009}. Table~\ref{tab:norm_prop} summarizes the properties of such gauges.  The proof of these results in infinite dimension can be found in \cite{Charalambos_2006}, the finite dimension is covered by \cite{Rockafellar_1970}.

Moreover, from \cite{Bonsall_1991,Feichtinger_2002}, $\sE(\sA)$ is continuously embedded in $\sH$ if $\sA$ is {\em bounded}  (because  $\sE(\sA \cup -\sA )$ is continuously embedded in $\sH$). Thus we will suppose  $\sA$  bounded from now on. 

By abuse of notation, we define  $\sT_{\sA}(\vx) \defin \sT_{\|\cdot\|_{\sA}}(\vx)$ and $\sT_\sA(\Sigma) = \underset{\vx\in \Sigma}{\bigcup} \sT_{\sA}(\vx)$.

\begin{table}[!h]
\begin{tabular}{|p{0.5\textwidth}|p{0.5\textwidth}|}
\hline
 Hypotheses on $\sC \defin \cl{\tconv}(\sA)$ & Properties of $\|\cdot\|_\sA$ in $\sE(\sA) = \bRp \cdot \sC$ \\
 \hline
   No hypotheses        & Non-negative, lower semi-continuous and sublinear \\
 \hline
 $0 \in \sC$    & Positively homogeneous\\
 \hline
 $0 \in \text{int} \sC $ & Continuous \\
 \hline
$\sC$ is bounded (and closed by definition) & Coercive ( $\tlim_{\|x\|_\sH\to+\infty} \|x\|_\sA = + \infty)$\\
 \hline
 $-\sC= \sC$ & Is a norm\\
 \hline
\end{tabular}
\caption{Link between $\sA$ and $\|\cdot\|_\sA$ using the properties of gauges.}\label{tab:norm_prop}
\end{table}

\subsubsection{Some well known atomic norms}\label{sec:atomic_norm_ex} As pointed out in \cite{Chandrasekaran_2012}, many well know norms used for low complexity recovery are atomic norms with atoms of interest. Atoms are often \emph{normalized}. We say that a vector $u$ is \emph{normalized} if  $\|u\|_\sH=1$ :
\begin{itemize}
\item  $\ell^1$ norm : $\sA$ is the set of canonical orthonormal basis vectors multiplied by a complex scalar with modulus $1$, i.e. the normalized $1$-sparse vectors.
\item  Nuclear norm :  $\sA$ is the set of normalized rank one matrices (Section~\ref{sec:other_ex}).
\item Mixed $\ell^1-\ell^2$ norms (covered extensively in Section~\ref{sec:structured_sparsity}) : $\sA$ is the set of normalized $1$-group-sparse vectors. 
\item Gauge generated by a finite polytope : $\sA$ is composed of the vertices of a polytope (Section~\ref{sec:sigma_norm}). 
\item Spectral norm : $\sA$ is the set of normalized orthogonal matrices.
\end{itemize}

\subsubsection{Duality}
The dual of an atomic norm, defined as 
\begin{equation}
 \|\vx\|_\sA^*\defin \underset{\|\vu\|_\sA=1}{\sup} |\ls \vx,\vu \rs|
\end{equation}
is
\begin{equation}
 \|\vx\|_\sA^* = \underset{\va \in \sA}{\sup} |\ls \vx,\va \rs|.
\end{equation}
and satisfies the usual property $\|\vx\|_{\sH}^{2} \leq \|\vx\|_{\sA} \cdot \|\vx\|_{\sA}^{*}$ for all $\vx \in \sH$.

\subsubsection{Alternate characterization of atomic norms with normalized atoms}
The following bound, which is valid in particular when $\sA$ is made of {\em normalized atoms}, will be useful in the proof of our main results. 
\begin{fact}\label{fact:norm_formulation}
Suppose $\sA \subset \sB(1)$. We define
\begin{equation}
\|\vx\|_{\sA}^{'} \defin \inf \left\{ \sqrt{\sum \lambda_i \|\vu_i\|_\sH^2} :  \lambda_i \in \bRp, \sum \lambda_i =1, 
 \vu_i \in \bRp\sA, \vx=\sum  \lambda_i \vu_i\right\}.
\end{equation}
Then for any $\vx \in \sE(\sA)$ we have
\begin{equation}
 \|\vx\|_\sA \geq \underset{\gamma \to 0_+}{\lim} \;{\inf} \{ \|\tilde{\vx}\|_{\sA}^{'} : \tilde{\vx} \in t \cdot\tconv(\sA), \|\tilde{\vx}-\vx\|_{\sH} \leq \gamma\}.
\end{equation}
When $\tconv(\sA)$ is closed we have in fact $\|\vx\|_{\sA} = \|\vx\|_{\sA}^{'}$ for any $\vx$.
\end{fact}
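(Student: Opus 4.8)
The plan is to reduce the whole statement to one elementary estimate: whenever $\tilde{\vx}$ lies in $t\cdot\tconv(\sA)$ for some $t\in\bRp$, we have $\|\tilde{\vx}\|_{\sA}^{'}\leq t$. Indeed, writing $\tilde{\vx}=t\sum_j\mu_j\va_j$ as a (finite) scaled convex combination of atoms, I would set $\lambda_j\defin\mu_j$ and $\vu_j\defin t\va_j\in\bRp\sA$; since $\sA\subset\sB(1)$ we get $\|\vu_j\|_\sH=t\|\va_j\|_\sH\leq t$, so the candidate value $\sqrt{\sum_j\mu_j\|\vu_j\|_\sH^2}\leq\sqrt{\sum_j\mu_j t^2}=t$ bounds the infimum defining $\|\tilde{\vx}\|_{\sA}^{'}$. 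This single computation drives both parts of the statement.

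For the first (inequality) part, I would argue by approximation and passage to the limit. By definition of $\|\vx\|_\sA$ there exist $t$ arbitrarily close to $\|\vx\|_\sA$ from above with $\vx\in t\cdot\cl{\tconv}(\sA)$; for such a $t$, the vector $\vx/t$ is a limit of points $\vy_n\in\tconv(\sA)$, so $\tilde{\vx}_n\defin t\vy_n\in t\cdot\tconv(\sA)\subset\bRp\tconv(\sA)$ converges to $\vx$. Given any tolerance $\gamma>0$, every sufficiently large $n$ makes $\tilde{\vx}_n$ feasible in the inner infimum, and the key estimate gives $\|\tilde{\vx}_n\|_{\sA}^{'}\leq t$; hence that infimum is $\leq t$ for each $\gamma$, its monotone limit as $\gamma\to0_+$ is $\leq t$, and letting $t\to\|\vx\|_\sA$ yields the stated bound. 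Here $\vx\in\sE(\sA)=\bRp\cl{\tconv}(\sA)$ is exactly what guarantees the approximating sequence $\tilde{\vx}_n$ exists inside $\bRp\tconv(\sA)$.

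When $\tconv(\sA)$ is closed I would upgrade this to the equality $\|\vx\|_\sA=\|\vx\|_{\sA}^{'}$ by proving the two inequalities separately. The direction $\|\vx\|_{\sA}^{'}\leq\|\vx\|_\sA$ is immediate from the key estimate, now applied to $\vx$ itself since closedness makes the infimum attained, i.e. $\vx\in\|\vx\|_\sA\cdot\tconv(\sA)$. For the reverse, take any admissible representation $\vx=\sum_i\lambda_i\vu_i$ with $\vu_i=c_i\va_i$, $c_i\in\bRp$; then $\vx=\sum_i(\lambda_ic_i)\va_i$ displays $\vx$ as a scaled convex combination of atoms, whence $\|\vx\|_\sA\leq\sum_i\lambda_ic_i$, and Cauchy--Schwarz gives $\sum_i\lambda_ic_i\leq(\sum_i\lambda_i)^{1/2}(\sum_i\lambda_ic_i^2)^{1/2}=(\sum_i\lambda_ic_i^2)^{1/2}$. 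Taking the infimum over representations then produces $\|\vx\|_\sA\leq\|\vx\|_{\sA}^{'}$.

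I expect the genuine obstacle to sit precisely in this last Cauchy--Schwarz step, through the identification of $\sum_i\lambda_ic_i^2$ with the objective $\sum_i\lambda_i\|\vu_i\|_\sH^2=\sum_i\lambda_ic_i^2\|\va_i\|_\sH^2$: these agree only when the atoms are normalized, $\|\va_i\|_\sH=1$, and for atoms merely in $\sB(1)$ one loses a factor and recovers only $\|\vx\|_\sA\geq\|\vx\|_{\sA}^{'}$ -- which is coherent with the statement being phrased as an inequality in general and an equality ``in particular'' for normalized atoms. A secondary technical point is justifying that the closure commutes with cone generation, namely $\bRp\cl{\tconv}(\sA)\subseteq\cl{(\bRp\tconv(\sA))}$, since this is what makes the sequence $\tilde{\vx}_n$ available in the first part; I would verify it directly from $\vx=t\vy$ with $\vy\in\cl{\tconv}(\sA)$ by scaling an approximating sequence for $\vy$.
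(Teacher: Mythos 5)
Your proof is correct and follows essentially the same route as the paper's: the same key estimate ($\tilde{\vx} \in t\cdot\tconv(\sA)$ implies $\|\tilde{\vx}\|_{\sA}^{'} \leq t$, using only $\sA \subset \sB(1)$), the same approximation argument for the limit inequality, and the same two-inequality scheme for the equality case. The only real divergence is in the reverse inequality $\|\vx\|_\sA \leq \|\vx\|_{\sA}^{'}$: the paper writes $\|\vx\|_\sA^2 \leq \sum_i \lambda_i \|\vu_i\|_\sA^2 \leq \sum_i \lambda_i \|\vu_i\|_\sH^2$, i.e.\ convexity of $\|\cdot\|_\sA^2$ combined with the claim that $\vu \in \bRp\sA$ implies $\|\vu\|_\sA \leq \|\vu\|_\sH$, whereas you use subadditivity of the gauge plus Cauchy--Schwarz on the coefficients; the two computations are equivalent, and your appeal to attainment of the defining infimum is legitimate since $\tconv(\sA)$ is closed and bounded (the paper sidesteps attainment by taking the infimum over admissible $t$ instead). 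More importantly, the caveat you flag at the end is a genuine catch, and it applies verbatim to the paper's own proof: the step $\|\vu\|_\sA \leq \|\vu\|_\sH$ for $\vu \in \bRp\sA$ is exactly where normalization enters, and it is false for atoms merely in $\sB(1)$. Concretely, take $\sA = \{\va\}$ with $\|\va\|_\sH = 1/2$: then $\tconv(\sA)=\{\va\}$ is closed, yet $\|c\va\|_\sA = c$ while $\|c\va\|_{\sA}^{'} = c/2$, so the equality claim of the Fact fails as literally stated for $\sA \subset \sB(1)$; it does hold under $\sA \subset \sS(1)$, which is the only case the paper actually uses (namely $\sA(\Sigma) = \Sigma \cap \sS(1)$), while the limit inequality survives for $\sA \subset \sB(1)$, exactly as you predicted.
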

\begin{proof}
By definition of $\tconv(\sA)$, if $\vx \in t \cdot\tconv(\sA)$ there are vectors $\va_i \in \sA$ and scalars $\lambda_{i} \in \bRp$ such that  $\vx = t\sum \lambda_i \va_i$ and $\sum_{i} \lambda_{i}=1$. Denoting $\vu_{i} \defin t \va_{i}$, we have $\vx=\sum_{i} \lambda_{i} \vu_{i}$ and $\vu_{i} \in \bRp\sA$ and we can indeed define $\|\vx\|_{\sA}^{'}$. 
Since $\sA \subset \sB(1)$ we have $\|\vu_{i}\|_{\sH} = \|t\va_i\|_\sH \leq t$, hence $\sum \lambda_i \|\vu_i\|_\sH^2 \leq t^2$, yielding $\|\vx\|_{\sA}^{'} \leq t$. Finally, observe that $\vu_{i} \in \bRp \sA$ implies $\|\vu_{i}\|_{\sA} \leq \|\vu_{i}\|_{\sH}$. Hence, by convexity of $\|\cdot\|_\sA^2$, for any $\lambda_{i} \in \bRp$ and $\vu_{i} \in \bRp \sA$ such that $\sum \lambda_{i}=1$ and $\sum \lambda_{i} \vu_{i} = \vx$ we thus have
\begin{equation}
 \|\vx\|_\sA^2 =  \|\sum_i  \lambda_i \vu_i\|_\sA^2 \leq \sum_i \lambda_i\| \vu_i\|_\sA^2 \leq \sum_i \lambda_i\| \vu_i\|_\sH^2 
\end{equation}
The infimum of the right side establishes that $\|\vx\|_{\sA} \leq \|\vx\|_{\sA}^{'}$.

Consider now an arbitrary $\vx \in \sH$, and $t > 0$ such that $\vx \in t \cdot \cl{\tconv}(\sA)$. By definition of $\cl{\tconv}(\sA)$, for any $\gamma > 0$ there is $\tilde{\vx} \in t \cdot\tconv(\sA)$ such that $\|\vx-\tilde{\vx}\|_{\sH} \leq \gamma$. 
As shown above, for any such $\gamma$ and $\tilde{\vx}$, since $\tilde{\vx} \in t \cdot\tconv(\sA)$ we have $t \geq \|\tilde{\vx}\|_{\sA}^{'}$, hence
\begin{equation}
t 
\geq 
\underset{\gamma>0}{\sup} \;  \inf \{ \|\tilde{\vx}\|_{\sA}^{'} : \tilde{\vx} \in t \cdot \tconv(\sA), \|\tilde{\vx}-\vx\|_{\sH} \leq \gamma\} = 
\underset{\gamma \to 0_{+}}{\lim} \;  \inf \{ \|\tilde{\vx}\|_{\sA}^{'} : \tilde{\vx} \in t\cdot \tconv(\sA), \|\tilde{\vx}-\vx\|_{\sH} \leq \gamma\}.
%\ \inf \{  \sum \lambda_i \|\vu_i\|_\sH^2 : \lambda_i \in \bRp, \sum \lambda_i =1, \vu_i \in \bRp \sA, \|\vx-\sum  \lambda_i \vu_i\|_{\sH} \leq \gamma\}.
\end{equation}
This holds for any $t>0$ such that $\vx \in t\cdot \cl{\tconv}(\sA)$. We conclude by observing that $\|\vx\|_{\sA}$ is precisely defined as the infimum of such $t$.
%and combining with Fact~\ref{prop:norm_formulation} gives the result.

When $\tconv(\sA)$ is closed, the above argument simplifies since $\vx \in t \cdot \cl{\tconv}(\sA)$ is equivalent to $\vx \in t\cdot \tconv(\sA)$, implying that $\|\vx\|_{\sA} \leq \|\vx\|_{\sA}^{'} \leq t$. Taking the infimum over such $t$ yields $\|\vx\|_{\sA} = \|\vx\|_{\sA}^{'}$.
\end{proof}

\subsubsection{Atomic norms associated to cones}
Given a cone $\Sigma \subset \sH$, the norm associated to {\em normalized atoms} 
\begin{equation}
\sA(\Sigma) \defin \Sigma \cap \sS(1)
\end{equation}
will be of particular interest for the RIP analysis conducted in this paper. As a shorthand we define:
\begin{equation}
\|\cdot\|_\Sigma \defin \|\cdot\|_{\Sigma \cap \sS(1)}.
\end{equation}
From the general properties of atomic norms (Table~\ref{tab:norm_prop}) we have:
\begin{fact}[The ``norm'' $\|\cdot\|_{\Sigma}$] Given any cone $\Sigma$ we have:
\begin{enumerate}
  \item For all $\vz\in \sH$, $\|\vz\|_\Sigma \geq \|\vz\|_\sH$.
 \item For all $\vx\in \Sigma$, $\|\vx\|_\Sigma =\|\vx\|_\sH$.
\item $\|\cdot\|_\Sigma$ is  non-negative, lower semi-continuous, sublinear, coercive and positively homogeneous.
 \item When $\Sigma$ is homogeneous (i.e., a UoS), $\|\cdot\|_\Sigma$ is indeed a norm (since $\Sigma = -\Sigma$).
\end{enumerate}
\end{fact}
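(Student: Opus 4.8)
The plan is to establish the four items separately, obtaining items~1 and~2 from the duality identities of Section~\ref{sec:atomicnorms} and items~3 and~4 from the gauge properties gathered in Table~\ref{tab:norm_prop}, with the normalization $\sA(\Sigma) = \Sigma \cap \sS(1) \subset \sB(1)$ doing most of the work.

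For item~1 I would start from the two displayed properties of the dual norm, namely $\|\vz\|_\sH^2 \leq \|\vz\|_\Sigma \cdot \|\vz\|_\Sigma^*$ and $\|\vz\|_\Sigma^* = \sup_{\va \in \sA(\Sigma)} |\ls \vz,\va\rs|$. The decisive point is that every atom is normalized, so Cauchy--Schwarz gives $|\ls \vz,\va\rs| \leq \|\vz\|_\sH\|\va\|_\sH = \|\vz\|_\sH$ and hence $\|\vz\|_\Sigma^* \leq \|\vz\|_\sH$. Substituting into the first inequality yields $\|\vz\|_\sH^2 \leq \|\vz\|_\Sigma\|\vz\|_\sH$, and dividing by $\|\vz\|_\sH$ (the cases $\vz=0$ and $\|\vz\|_\Sigma=+\infty$ being trivial) gives $\|\vz\|_\sH \leq \|\vz\|_\Sigma$. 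For item~2, I would exploit that $\Sigma$ is a cone: for $\vx \in \Sigma\setminus\{0\}$ the vector $\vx/\|\vx\|_\sH$ lies in $\Sigma\cap\sS(1)=\sA(\Sigma) \subset \cl{\tconv}(\sA(\Sigma))$, so $\vx \in \|\vx\|_\sH\cdot\cl{\tconv}(\sA(\Sigma))$ and the definition of the gauge gives $\|\vx\|_\Sigma \leq \|\vx\|_\sH$. Together with item~1 this forces $\|\vx\|_\Sigma = \|\vx\|_\sH$, the case $\vx=0$ being immediate.

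Item~3 I would essentially read off Table~\ref{tab:norm_prop}. Since the atoms lie on $\sS(1)$, the set $\sC \defin \cl{\tconv}(\sA(\Sigma))$ is closed, bounded and convex; the ``no hypotheses'' row then supplies non-negativity, lower semicontinuity and sublinearity, and the ``bounded and closed'' row supplies coercivity. Positive homogeneity is already contained in sublinearity, but since the table attaches it to the condition $0\in\sC$ — which may fail for a one-sided cone — I would rather verify it straight from the gauge: for $\lambda>0$ one has $\vx\in t\sC \iff \lambda\vx \in \lambda t\,\sC$, whence $\|\lambda\vx\|_\Sigma=\lambda\|\vx\|_\Sigma$, while $\|0\|_\Sigma=0$ because $0\in 0\cdot\sC$. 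For item~4, homogeneity of $\Sigma$ means $\Sigma=-\Sigma$, which transfers to the atoms as $\sA(\Sigma)=-\sA(\Sigma)$ and hence, after taking convex hull and closure, to $\sC=-\sC$; the last row of Table~\ref{tab:norm_prop} then yields that $\|\cdot\|_\Sigma$ is a norm on its domain $\sE(\sA(\Sigma))$, with definiteness coming for free from item~1 ($\|\vz\|_\Sigma \geq \|\vz\|_\sH$ forces $\|\vz\|_\Sigma=0 \Rightarrow \vz=0$).

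The only genuine subtlety I anticipate is the bookkeeping around the extended-real-valued nature of $\|\cdot\|_\Sigma$ (which equals $+\infty$ off $\sE(\sA(\Sigma))$) and around the possible absence of the origin from $\sC$: for a general cone one cannot quote positive homogeneity verbatim from the table and should instead argue it from the gauge definition as above. Everything else reduces to Cauchy--Schwarz, the cone structure, and the already-tabulated properties of gauges, so I do not expect any further difficulty.
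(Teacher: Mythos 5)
Your proof is correct and follows essentially the same route as the paper, which states this Fact without proof as an immediate consequence of the gauge properties in Table~\ref{tab:norm_prop}: your items~3--4 are exactly that reading (with a welcome extra care about positive homogeneity when $0 \notin \sC$, which the table's row technically does not cover), and your items~1--2 supply the short arguments the paper leaves implicit. As a minor remark, item~1 admits an even more direct proof than your duality/Cauchy--Schwarz detour — since $\sA(\Sigma) \subset \sS(1)$ one has $\cl{\tconv}(\sA(\Sigma)) \subset \sB(1)$, so $\vz \in t\cdot\cl{\tconv}(\sA(\Sigma))$ forces $\|\vz\|_\sH \leq t$ — but your version is equally valid given the duality property stated in Section~\ref{sec:atomicnorms}.
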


\subsection{Atomic norms are often sufficient} \label{sec:atomic_norm_sufficient}
Ultimately, our aim is to find a convex regularizer that enables the uniform recovery of $\Sigma$ for some linear observation operator $\mM$. We show below that if such a convex regularizer exists, there exists an atomic norm with atoms $\sA \subset \Sigma$ (independent of $\mM$) that also ensures uniform recovery of $\Sigma$. This means that the search for a convex regularizer can be limited to the subset of such atomic norms.  In fact, Lemma~\ref{lem:atomic_necessary} below implies that we can restrict our study to such atomic norms, as soon as there exists some ``well-behaved'' regularizer $f$ with uniform recovery guarantees.
\begin{lemma}\label{lem:atomic_necessary}
Let $\Sigma$ be a cone and $f$ be a proper coercive continuous regularizer. Assume that for some $t > f(0)$, the level set $\sL(f,t)= \{y\in \sH : f(y)\leq t \}$ is a convex set. Then there exists a family of atoms $\sA \subset \Sigma$ such that : 
\begin{equation}
\sT_\sA(\Sigma) \subset \bRp^*. \sT_f(\Sigma)
\end{equation}
where $\bRp^*=\bRp \setminus \{0\}$.
\end{lemma}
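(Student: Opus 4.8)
The plan is to exhibit the atoms explicitly as the intersection of $\Sigma$ with the boundary of the convex level set, and to prove the containment first at the atoms themselves, then propagate it to all of $\Sigma$ by positive homogeneity — the factor $\bRp^*$ being exactly what absorbs the scaling. Set $\sC := \sL(f,t)$ and let $g$ be its gauge. The reduction I am aiming for is the observation that once $\sT_\sA(a) \subseteq \bRp^* \sT_f(a)$ is known for each atom $a$, the desired inclusion $\sT_\sA(\Sigma) \subseteq \bRp^*\sT_f(\Sigma)$ follows from the ray structure of the cone.

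First I would record the geometry of $\sC$. Since $f$ is continuous, $\sC$ is closed; since $f$ is coercive, $\sC$ is bounded; it is convex by hypothesis; and since $t > f(0)$ with $f$ continuous, $0 \in \mathrm{int}\,\sC$. A first key point is that $\partial\sC \subseteq \{f = t\}$: a boundary point lies in $\sC$, so $f \le t$ there, and it is a limit of exterior points on which $f > t$, so $f \ge t$ there by continuity. I then define $\sA := \Sigma \cap \partial\sC \subseteq \Sigma$ (which is bounded, as required, being contained in $\sC$). Because $\sC$ is bounded with $0$ in its interior, each ray $\bRp u$ with $u \in \Sigma \cap \sS(1)$ crosses $\partial\sC$ at the single point $a = u/g(u) \in \Sigma$, so every nonzero $x \in \Sigma$ is a positive multiple of some atom.

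The heart of the argument is a squeeze at an atom $a \in \sA$. On one side, because $f(a) = t$, the definition of descent vectors gives $\sT_f(a) = \{z : f(a+z) \le t\} = \sC - a$ directly, needing no convexity of $f$, only that $a$ sits on the level set. On the other side, $\sA \subseteq \partial\sC \subseteq \sC$ with $\sC$ closed convex forces $\cl{\tconv}(\sA) \subseteq \sC$, hence $\|\cdot\|_\sA \ge g$; combined with $a \in \cl{\tconv}(\sA)$ (so $\|a\|_\sA \le 1$) and $g(a) = 1$, this pins down $\|a\|_\sA = 1$. Then for any $z \in \sT_\sA(a)$ we get $g(a+z) \le \|a+z\|_\sA \le \|a\|_\sA = 1$, i.e. $a + z \in \sC$, i.e. $z \in \sC - a = \sT_f(a)$. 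Thus $\sT_\sA(a) \subseteq \sT_f(a)$, even without the $\bRp^*$ at this level.

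To globalize, I write an arbitrary $x \in \Sigma \setminus \{0\}$ as $x = \lambda a$ with $a \in \sA$ and $\lambda > 0$; positive homogeneity of the gauge gives $\sT_\sA(x) = \lambda\,\sT_\sA(a) \subseteq \lambda\,\sT_f(a) \subseteq \bRp^*\sT_f(\Sigma)$, while $\sT_\sA(0) = \{0\}$, and taking the union over $x \in \Sigma$ yields the claim. Here the $\bRp^*$ is genuinely used, since $f$ is not assumed homogeneous and $\sT_f(\lambda a)$ need not equal $\lambda\,\sT_f(a)$. I expect the main obstacle to be the pair of facts $\partial\sC \subseteq \{f = t\}$ and $\|a\|_\sA = 1$: both rely on $0 \in \mathrm{int}\,\sC$ (hence on the hypothesis $t > f(0)$) and on the gauge comparison for nested convex bodies, and together they are what make the level set of the possibly non-convex $f$ behave, along the directions of $\Sigma$, like the unit sphere of an honest gauge.
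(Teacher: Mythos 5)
Your proof is correct, and at its core it runs on the same engine as the paper's: the atoms are carved out of the convex level set $\sL(f,t)$ inside $\Sigma$, the decisive fact is that $\cl{\tconv}(\sA)$ is the smallest closed convex set containing $\sA$ and therefore sits inside $\sL(f,t)$, and the cone structure of $\Sigma$ supplies the rescaling that the factor $\bRp^*$ absorbs. The execution differs in two genuine respects. First, the paper takes the full truncated sublevel set $\sA = \sL(f,t)\cap\Sigma$ as atoms, while you take only the boundary part $\sA = \Sigma\cap\partial\sL(f,t)$; these in general yield different atomic norms (the paper's atom set contains $0$, yours cannot since $0$ is interior to the level set), but the lemma is an existence statement, so either choice is legitimate. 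Second, the paper argues directly on the witness: given $\vz\in\sT_\sA(\Sigma)$ with witness $\vx\in\Sigma$, it rescales $\vx$ onto the level surface by an intermediate-value argument (continuity plus coercivity give $\lambda_0>0$ with $f(\lambda_0\vx)=t$) and concludes in one stroke from $\sL(\|\cdot\|_\sA,1)\subseteq\sL(f,t)$ that $\lambda_0\vz\in\sT_f(\Sigma)$. You instead prove the scaling-free inclusion $\sT_\sA(a)\subseteq\sT_f(a)$ at every atom, via the gauge comparison $g\le\|\cdot\|_\sA$ pinned down by $\|a\|_\sA=g(a)=1$, and then propagate to all of $\Sigma$ by positive homogeneity of $\|\cdot\|_\sA$. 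Your organization buys a slightly sharper intermediate statement (no $\bRp^*$ needed at atoms) and makes visible exactly where the scaling factor is unavoidable, namely at non-atoms, since $f$ itself is not homogeneous; the paper's version is shorter because it never needs the boundary facts $\partial\sL(f,t)\subseteq\{f=t\}$ and the ray-crossing property of the gauge, only an intermediate-value argument along a single ray. Both hinge on the same hypotheses in the same way, so this is a reorganization of the same proof idea rather than a new one.
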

\begin{proof}
% Suppose that $f$ is centered $f(0)=0$ (or consider $f' = f-f(0)$). Let $t>0$ such that $\sL(f,t)= \{y\in \sH : f(y)\leq t \} \neq \emptyset$.  
We define $\sA \defin \sL(f,t) \cap \Sigma = \{\vx\in \Sigma : f(\vx) \leq t \}$. % This set is not empty because $0 \in \Sigma$ minimizes $f$. 
 %We prove the contraposition (first for normalized elements of $\Sigma$): 
Let $\vz \in \sT_{\sA}(\Sigma)$. By definition, there exists $\vx\in \Sigma$ such that 
%$f(x) \leq t$ and 
$ \|\vx+\vz\|_\sA \leq \|\vx\|_\sA $.  
%Let  $x \in \Sigma, z \in  H \setminus \{0\}$ such that  $\|x+z\|_\sA \leq \|x\|_\sA$.  
We have $f(0 \cdot \vx) = f(0) < t$, and since $f$ is coercive $f(\lambda \vx)\underset{\lambda \to +\infty}{\to} +\infty$. Thus, by the continuity of $f$ there is $\lambda_0 >0$ such that  $ f(\lambda_0 \vx) = t$. Since $\Sigma$ is a cone, the vector $\vx' = \lambda_0 \vx$ belongs to $\Sigma$ and, since  $f(\vx')=t$, by definition of $\sA$ we have indeed $\vx' \in \sA$ and $\|\vx'\|_{\sA} \leq 1$. Furthermore, we have  $\|\vx'+ \lambda_0 \vz\|_\sA = \lambda_{0} \|\vx+\vz\|_{\sA} \leq \lambda_{0} \|\vx\|_{\sA} = \|\vx'\|_\sA$. 

We now observe that, on the one hand, the level set $\sL( \|\cdot\|_\sA,1) = \overline{\tconv}(\sA)$  is the smallest closed convex set containing $\sA$; on the other hand $\sA \subset \sL( f,t)$ and $\sL( f,t)$ is convex and closed (by the continuity of $f$). Thus $\sL( \|\cdot\|_\sA,1) \subset \sL(f,t)$ and the fact that $\|\vx'+\lambda_{0}\vz\|_{\sA} \leq \|\vx'\|_{\sA} \leq 1$ implies \begin{equation}
f(\vx'+\lambda_{0} \vz) \leq t = f(\vx').
\end{equation}
This shows that $\lambda_{0} \vz \in \sT_f(\Sigma)$ and finally that $\vz \in \bRp^*. \sT_f(\Sigma)$.

\end{proof}

\begin{corollary}
With such a set of atoms, we have  $\delta_{\Sigma}(\|\cdot\|_{\sA}) \geq \delta_{\Sigma}(f)$.
\end{corollary}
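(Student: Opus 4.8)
The plan is to reduce the corollary to the set inclusion $\sT_\sA(\Sigma) \subset \bRp^* \cdot \sT_f(\Sigma)$ furnished by Lemma~\ref{lem:atomic_necessary}, combined with a scale-invariance property of the map $\vz \mapsto \delta_\Sigma(\vz)$. Recalling that for a regularizer $g$,
\begin{equation}
\delta_\Sigma(g) = \inf_{\vz \in \sT_g(\Sigma) \setminus \{0\}} \delta_\Sigma(\vz), \qquad \delta_\Sigma(\vz) = \sup_{\vx \in \Sigma} \delta_{\Sigma}(\vx,\vz),
\end{equation}
and that $\sT_{\|\cdot\|_\sA}(\Sigma) = \sT_\sA(\Sigma)$ by definition, it suffices to show that every $\vz \in \sT_\sA(\Sigma)\setminus\{0\}$ satisfies $\delta_\Sigma(\vz) \geq \delta_\Sigma(f)$; taking the infimum over such $\vz$ then gives the claim.

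First I would establish the key fact that $\delta_\Sigma(\lambda \vz) = \delta_\Sigma(\vz)$ for every $\lambda > 0$ and every $\vz \neq 0$. This rests on two ingredients: $\Sigma$ is a cone (so $\vx \mapsto \lambda \vx$ is a bijection of $\Sigma$ onto itself), and $\|\cdot\|_\Sigma$ is positively homogeneous. Performing the substitution $\vx = \lambda \vx'$ inside the supremum defining $\delta_\Sigma(\lambda\vz)$ and using $\|\lambda\vx'+\lambda\vz\|_\Sigma^2 = \lambda^2\|\vx'+\vz\|_\Sigma^2$ together with $\re\ls\lambda\vx',\lambda\vz\rs = \lambda^2\re\ls\vx',\vz\rs$, the cone expression becomes
\begin{equation}
\delta^{cone}_\Sigma(\lambda\vx',\lambda\vz) = \frac{-2\lambda^2 \re\ls\vx',\vz\rs}{\lambda^2\|\vx'+\vz\|_\Sigma^2 - 2\lambda^2\re\ls\vx',\vz\rs} = \delta^{cone}_\Sigma(\vx',\vz),
\end{equation}
the factors $\lambda^2$ cancelling in numerator and denominator. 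The same cancellation holds for the UoS expression~\eqref{eq:DefDeltaUoS}, where the numerator scales as $\lambda^2$ while $\|\vx\|_\sH$ and the square root each scale as $\lambda$, so the denominator also scales as $\lambda^2$. Taking the supremum over $\vx' \in \Sigma$ yields $\delta_\Sigma(\lambda\vz) = \delta_\Sigma(\vz)$ in both cases.

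With this scale invariance in hand the conclusion follows quickly. Let $\vz \in \sT_\sA(\Sigma)\setminus\{0\}$. By Lemma~\ref{lem:atomic_necessary} there exist $\lambda > 0$ and $\vw \in \sT_f(\Sigma)$ with $\vz = \lambda \vw$, and $\vw \neq 0$ since $\vz \neq 0$. Then $\delta_\Sigma(\vz) = \delta_\Sigma(\vw) \geq \delta_\Sigma(f)$, where the inequality is simply the definition of $\delta_\Sigma(f)$ as an infimum over $\sT_f(\Sigma)\setminus\{0\}$. Taking the infimum over all $\vz \in \sT_\sA(\Sigma)\setminus\{0\}$ gives $\delta_\Sigma(\|\cdot\|_\sA) \geq \delta_\Sigma(f)$.

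I expect the only substantive step to be the scale-invariance claim, and even there the difficulty is bookkeeping rather than conceptual: one must rescale the variable $\vx$ of the inner supremum (legitimate precisely because $\Sigma$ is a cone) rather than hoping the ratio is homogeneous in $\vz$ alone, which it is not. It is the positive homogeneity of $\|\cdot\|_\Sigma$ that makes the rescaled supremum match the original, so the two hypotheses on $\Sigma$ and on $\|\cdot\|_\Sigma$ are exactly what the argument consumes.
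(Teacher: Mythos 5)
Your proof is correct and follows essentially the same route as the paper: the inclusion $\sT_\sA(\Sigma) \subset \bRp^* \cdot \sT_f(\Sigma)$ from Lemma~\ref{lem:atomic_necessary} combined with the scale invariance $\delta_\Sigma(\lambda\vz) = \delta_\Sigma(\vz)$ for $\lambda>0$, then passing to infima. The only difference is that you spell out the verification of the scale-invariance remark (rescaling the inner supremum variable via the cone property and positive homogeneity of $\|\cdot\|_\Sigma$), which the paper merely asserts; this is a worthwhile filling-in of detail, not a different argument.
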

\begin{proof}
First remark that for all $\lambda>0$,  $\delta_\Sigma(\lambda\vz) = \delta_\Sigma(\vz)$ both in the UoS and cone cases. Thus, because $\sT_\sA(\Sigma) \subset \bRp^*. \sT_f(\Sigma)$, we have
\begin{equation}
 \delta_\Sigma  (\|\cdot\|_{\sA}) = \underset{z\in\sT_\sA(\Sigma) \setminus \{0\}}{\inf} \delta_\Sigma(z) \geq  \underset{z\in\bRp^*. \sT_f(\Sigma) }{\inf} \delta_\Sigma(z) =  \underset{z\in \sT_f(\Sigma) \setminus \{0\} }{\inf}\delta_\Sigma(z) = \delta_\Sigma  (f) 
\end{equation}
 
\end{proof}

In other words this corollary shows that given a sufficiently well behaved regularizer $f$ (in terms of recovery guarantees Theorem~\ref{th:robust_RIP_eucl} under a RIP assumption on $\Sigma-\Sigma$), we can find an atomic norm that behaves at least as well, possibly with an even less restrictive RIP condition.

\subsection{Tools to calculate admissible RIP constants }
In all examples, explicit bounds on $\delta_\Sigma(f)$ are achieved by decomposing descent vectors $\vz \in \sT_f(\Sigma)$ as $\vz = (\vx+\vz) -\vx$, with $\vx \in \Sigma$. The ``goodness'' of such a decomposition is a trade-off between: 
\begin{itemize}
\item the incoherence between the two components, measured by the {\em quasi-orthogonality constant} 
\begin{equation}
\label{eq:DefQOC}
\QOC(\vx,\vz) \defin \frac{\re \ls \vx,\vx+\vz \rs}{\|\vx\|_\sH^2};
\end{equation}
\item the fact that $\vz$ is (quasi) a descent vector for the $\|\cdot\|_{\Sigma}$ norm at $\vx \in \Sigma$, as measured by the {\em quasi-descent constant} 
\begin{equation}
\label{eq:DefQDC}
\alpha_{\Sigma}(\vx,\vz) \defin \frac{\|\vx+\vz\|_{\Sigma}^{2}}{\|\vx\|_{\Sigma}^{2}}.
\end{equation}
\end{itemize}
For most examples in Section~\ref{sec:structured_sparsity} and Section~\ref{sec:other_ex},  we bound $\delta_{\Sigma}(f)$ by exhibiting for each descent vector $\vz \in \sT_f(\Sigma) \setminus \{0\}$ an element of the model, $\vx \in \Sigma$, such that  $\QOC(\vx,\vz)=0$ and $\alpha_{\Sigma}(\vx,\vz) \leq \alpha_\Sigma(f)$, where $\alpha_{\Sigma}(f)$ is a constant that depends only on $f$ and $\Sigma$.  When this is possible we conclude that $\delta_\Sigma(f) \geq 1/\sqrt{1+\alpha_\Sigma(f)}$.

\section{Stable and robust recovery of cones under a RIP condition}\label{sec:theorems}
This section is dedicated to the precise statement and proof of our main result. Applications of these results will be discussed in later sections.

\subsection{Exact recovery and stability to noise}

The main part of the argument is a generalization of the sharp RIP result by Cai and Zhang \cite{Cai_2014} whose RIP theorem states that $\delta <\sqrt{1/2}$ on $\Sigma-\Sigma$ implies stable recovery of sparse vectors (when $\Sigma$ is the set of $K$-sparse vectors) with the $\ell^1$-norm, and of low rank matrices (when $\Sigma$ is the set of rank-$r$ matrices) with the nuclear norm. This RIP theorem is sharp because for any $\epsilon >0$ there exists a dimension $N$ of $\sH=\bR^N$, a sparsity model $\Sigma_K$ and matrix $\mM$ satisfying the RIP with constant $\delta \geq \sqrt{1/2} +\epsilon$ for which uniform recovery of by $\ell^{1}$ minimization is impossible \cite{Davies_2009}.

Theorem~\ref{th:sharp_RIP} and \ref{th:robust_RIP_eucl} are immediate corollaries of Theorem~\ref{th:robust_RIP_eucl_precise} below. The proof of Theorem~\ref{th:robust_RIP_eucl_precise} relies on the following identity, where we recall that $\|\cdot\|_\sF$ is the Euclidean norm. 
\begin{fact}\label{prop:ident0}
Let $\lambda_i \in \bRp$ such that $\sum \lambda_i =1$, and 
$\vh_i \in \sH$ such that the $\|\mM\vh_i\|_\sF$ are uniformly bounded.

Then 
 \begin{equation}\label{eq:ident0}
 \sum_i \lambda_i   \|\mM \vh_i \|_\sF^2=4\sum_i \lambda_i \| \big(\textstyle\sum_j \lambda_j\mM  \vh_j\big) -\tfrac{1}{2} \mM\vh_i \|_\sF^2.
\end{equation}
\end{fact}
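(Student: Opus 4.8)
The plan is to treat this as a direct expansion of the squared Euclidean norm on $\sF$, exploiting that $\|\cdot\|_\sF$ comes from a Hermitian inner product satisfying the Pythagorean property. First I would introduce the shorthand $g_i \defin \mM\vh_i$ and the weighted average $\bar g \defin \sum_j \lambda_j g_j$. Before manipulating the sums I would check that every object is well defined in the (possibly infinite-index) setting: since the $\|g_i\|_\sF$ are uniformly bounded by some $B$ and $\sum_i \lambda_i = 1$ with $\lambda_i \geq 0$, the series $\sum_i \lambda_i g_i$ converges absolutely in $\sF$ (because $\sum_i \lambda_i \|g_i\|_\sF \leq B$), so $\bar g$ is meaningful, and likewise $\sum_i \lambda_i \|g_i\|_\sF^2 \leq B^2 < \infty$. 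This absolute convergence is precisely what the uniform-boundedness hypothesis provides, and it licenses the rearrangements below.

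Next I would expand the generic summand on the right-hand side, writing $\ls\cdot,\cdot\rs$ for the inner product associated to $\|\cdot\|_\sF$ and applying the Pythagorean property with $\vx=\bar g$, $\vx'=-\tfrac12 g_i$:
\[
\|\bar g - \tfrac12 g_i\|_\sF^2 = \|\bar g\|_\sF^2 - \re\ls\bar g, g_i\rs + \tfrac14\|g_i\|_\sF^2.
\]
Multiplying by $\lambda_i$ and summing over $i$, the three contributions simplify separately using $\sum_i \lambda_i = 1$: the first term gives $\|\bar g\|_\sF^2$; the second gives $\re\ls\bar g, \sum_i \lambda_i g_i\rs = \re\ls\bar g, \bar g\rs = \|\bar g\|_\sF^2$, where I pull the convergent weighted sum inside the inner product (justified by the absolute convergence above); and the third gives $\tfrac14 \sum_i \lambda_i \|g_i\|_\sF^2$. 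The first two cancel, leaving $\sum_i \lambda_i \|\bar g - \tfrac12 g_i\|_\sF^2 = \tfrac14 \sum_i \lambda_i \|g_i\|_\sF^2$. Multiplying by $4$ and substituting $g_i = \mM\vh_i$ yields exactly \eqref{eq:ident0}.

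I do not expect any genuine difficulty here. The identity is essentially a second-moment-about-the-mean computation: indeed one can shortcut the algebra by noting $4\|\bar g - \tfrac12 g_i\|_\sF^2 = \|2\bar g - g_i\|_\sF^2$, after which $\sum_i \lambda_i \|g_i - 2\bar g\|_\sF^2 = \sum_i \lambda_i \|g_i\|_\sF^2 - 4\|\bar g\|_\sF^2 + 4\|\bar g\|_\sF^2$ by the same cancellation. The only point requiring care, and the sole reason for the boundedness hypothesis, is the legitimacy of interchanging the (possibly infinite) summation with the inner product and of splitting the sum into three separately convergent pieces; once absolute convergence is in hand this is routine bookkeeping rather than a real obstacle.
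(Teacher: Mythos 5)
Your proposal is correct and takes essentially the same route as the paper's own proof: both expand the squared norm $\|\bar g - \tfrac12 g_i\|_\sF^2$ via the inner product, use $\sum_i \lambda_i = 1$ to turn the cross term into $\|\bar g\|_\sF^2$ so that it cancels against the first term, and justify the (possibly infinite) summations by the uniform boundedness of the $\|\mM\vh_i\|_\sF$. Your convergence bookkeeping is, if anything, slightly more explicit than the paper's, which invokes convexity of $\|\cdot\|_\sF^2$ for the same purpose; there is no gap.
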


\begin{proof}
The boundedness of the $\|\mM\vh_i\|_\sF$ imply that $\textstyle\sum_j \lambda_j \mM \vh_j$ and $\sum_i \lambda_i \|\mM\vh_i\|_\sF^2$ converge. Moreover, by convexity  $\| \textstyle\sum_j \lambda_j \mM \vh_j \|_\sF^2 \leq \sum_j \lambda_j\| \mM\vh_j \|_\sF^2 \leq \sup_j \| \mM \vh_j \|_\sF^2$. Hence  $\textstyle\sum_i \lambda_i  \| \textstyle\sum_j \lambda_j \mM \vh_j \|_\sF^2$ converges. Consequently, all sums in the following converge. We simply develop 
\begin{eqnarray*}
 \sum_i \lambda_i \|  \big(\textstyle\sum_j \lambda_j  \mM\vh_j\big) -\tfrac{1}{2} \mM\vh_i\|_\sF^2
& =&  \displaystyle\sum_i \lambda_i  \| \textstyle\sum_j \lambda_j \mM \vh_j \|_\sF^2
   - 2\re \ls  \textstyle\sum_j \lambda_j \mM \vh_j , \textstyle\sum_i \lambda_i \tfrac{1}{2} \mM\vh_i\rs_\sF \\
 &  &+ \tfrac{1}{4} \displaystyle\sum_i \lambda_i \|\mM \vh_i \|_\sF^2 \\
 & = &  \tfrac{1}{4} \displaystyle\sum_i \lambda_i  \|\mM \vh_i \|_\sF^2.
\end{eqnarray*}
\end{proof}

\begin{theorem}[RIP condition for stable recovery]\label{th:robust_RIP_eucl_precise}
Assume that $\Sigma$ is a UoS or a cone and define $\delta_{\Sigma}(f) \defin \underset{\vz\in \sT_f(\Sigma) \setminus \{0\}}{\inf} \delta_{\Sigma}(\vz)$, with $\delta_{\Sigma}(\vz) \defin \underset{\vx \in \Sigma}{\sup} \delta_{\Sigma}(\vx,\vz)$, where
\begin{itemize}
\item {\bf UoS setting}
\begin{equation}
\delta_\Sigma(\vx,\vz) = \delta_{\Sigma}^{\textrm{UoS}}(\vx,\vz) \defin  \frac{-\re \ls \vx,\vz \rs}{\|\vx\|_\sH \sqrt{ \|\vx+\vz\|_\Sigma^2 -  \|\vx\|_\sH^2 - 2\re \ls \vx,\vz \rs}}\label{eq:DefDeltaUoSMain}
\end{equation}
\item {\bf Cone setting}
\end{itemize} 
\begin{equation}
\delta_{\Sigma}(\vx,\vz) 
= \delta_{\Sigma}^{\textrm{cone,sharp}}(\vx,\vz)
\defin
\begin{cases}
\displaystyle \delta^{\textrm{UoS}}_{\Sigma}(\vx,\vz)
%\frac{-\re \ls \vx,\vz \rs}{\|\vx\|_\sH \sqrt{ \|\vx+\vz\|_\Sigma^2 -  \|\vx\|_\sH^2 - 2\re \ls \vx,\vz \rs}}, 
&\ \mbox{if}\ \re \ls \vx,\vz \rs \geq \|\vx+\vz\|_{\Sigma}^{2}/2; \\
\\
\displaystyle   \frac{-2\re \ls \vx,\vz \rs}{ \|\vx+\vz\|_\Sigma^2- 2\re \ls \vx,\vz \rs}, & \ \mbox{if}\ \re \ls \vx,\vz \rs < \|\vx+\vz\|_{\Sigma}^{2}/2 
\end{cases}\label{eq:DefDeltaConeMain}
\end{equation}
%\begin{equation}
%\begin{split}
%\delta_\Sigma^{cone,sharp} \defin \tmin \Bigg(&\inf_{z\in \sT_f(\Sigma)} \sup_{x \in \Sigma, 2\rho(x,z) \geq \alpha(x,z)}  \frac{-\re \ls \vx,\vz \rs}{\|\vx\|_\sH \sqrt{ \|x+z\|_\Sigma^2 -  \|\vx\|_\sH^2 - 2\re \ls \vx,\vz \rs}},\\
% &\inf_{z\in \sT_f(\Sigma)} \sup_{x \in \Sigma, 2\rho(x,z) < \alpha(x,z) }  \frac{-2\re \ls \vx,\vz \rs}{ \|x+z\|_\Sigma^2- 2\re \ls \vx,\vz \rs}  \Bigg).
%\end{split}
%\end{equation}
If $\mM$ is continuous on $\sH$ and satisfies the RIP on $\Sigma-\Sigma$ with constant $\delta < \delta_\Sigma(f)$, then for all $\vx_{0}\in\Sigma$, $\|\ve\|_\sF \leq \eta \leq \epsilon$ we have, with $\vx^*$ the result of minimization~\eqref{eq:robust_minimization},
\begin{equation}
\| \vx^*-\vx_{0}\|_\sH \leq\| \vx^*-\vx_{0}\|_\Sigma \leq  \frac{2\sqrt{1+\delta}}{D_{\Sigma}(f,\delta)} \cdot (\eta +\epsilon )
\end{equation}
where 
\begin{equation}
D_{\Sigma}(f,\delta) \defin \inf_{z \in \sT_f(\Sigma) \setminus \{0\}} \sup_{x \in \Sigma }D(x,z,\delta) > 0
\end{equation}
with 
\begin{itemize}
 \item {\bf UoS setting:}
\begin{equation}
 D_{\Sigma}(\vx,\vz,\delta) = D^{\textrm{UoS}}_{\Sigma}(\vx,\vz,\delta)
  \defin \frac{-2 \re \ls \vx,\vz\rs-\delta \|\vx\|_{\sH} \sqrt{\|\vx+\vz\|_{\Sigma}^{2}-\|\vx\|_{\sH}^{2}- 2\re \ls \vx,\vz \rs}}{\|\vx\|_{\sH}+\|\vx+\vz\|_{\Sigma}}. 
\end{equation}
 \item {\bf Cone stetting:} We have $D_{\Sigma}(\vx,\vz,\delta) = D_{\Sigma}^{\textrm{cone,sharp}}(\vx,\vz,\delta)$ with
 \begin{equation}
  D_{\Sigma}^{\textrm{cone,sharp}}(\vx,\vz,\delta) 
 \defin 
 \begin{cases}
 \displaystyle D^{\textrm{UoS}}_{\Sigma}(\vx,\vz,\delta)
 %\frac{2(1-\QOC(x,z))-\delta(2+\alpha(x,z)-2\QOC(x,z))}{2(1+\sqrt{\alpha(x,z)})\sqrt{1+\delta}} 
 &\ \mbox{if}\ \re \ls \vx,\vz \rs \geq \|\vx+\vz\|_{\Sigma}^{2}/2; \\
 \\
\displaystyle   \frac{-2\re \ls \vx,\vz \rs -\delta \left( \|\vx+\vz\|_\Sigma^2- 2\re \ls \vx,\vz \rs\right)}
{ \|\vx\|_{\sH}+\|\vx+\vz\|_\Sigma}, & \ \mbox{if}\ \re \ls \vx,\vz \rs < \|\vx+\vz\|_{\Sigma}^{2}/2 
 \end{cases}
\end{equation}
\end{itemize}
\end{theorem}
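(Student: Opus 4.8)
The plan is to derive Theorems~\ref{th:sharp_RIP} and~\ref{th:robust_RIP_eucl} from a single \emph{geometric} lower bound on $\|\mM\vz\|_\sF$ and to reduce everything to it. Set $\vh\defin\vx^*-\vx_0$. Since $\vx_0$ is feasible for~\eqref{eq:robust_minimization} (because $\|\mM\vx_0-(\mM\vx_0+\ve)\|_\sF=\|\ve\|_\sF\le\eta\le\epsilon$) and $\vx^*$ is optimal, we get $f(\vx^*)\le f(\vx_0)$, so $\vh\in\sT_f(\vx_0)\subset\sT_f(\Sigma)$; moreover the triangle inequality gives $\|\mM\vh\|_\sF\le\|\mM\vx^*-(\mM\vx_0+\ve)\|_\sF+\|\ve\|_\sF\le\epsilon+\eta$. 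Because $\|\vh\|_\sH\le\|\vh\|_\Sigma$, it then suffices to prove the bound $\|\mM\vz\|_\sF\ge D_\Sigma(\vx,\vz,\delta)/(2\sqrt{1+\delta})$ for \emph{every} $\vx\in\Sigma$ and every $\vz$ (the descent property of $\vh$ enters only at the end, to restrict which $\vz$ matter).

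For the core bound, fix $\vx\in\Sigma$ and $\vz$ with $\|\vx+\vz\|_\Sigma<\infty$ (otherwise $D_\Sigma(\vx,\vz,\delta)\le0$ and the claim is vacuous). Using Fact~\ref{fact:norm_formulation} I write $\vx+\vz=\sum_i\lambda_i\vu_i$ with $\lambda_i\ge0$, $\sum_i\lambda_i=1$, $\vu_i\in\bRp\sA\subset\Sigma$ and $\sum_i\lambda_i\|\vu_i\|_\sH^2$ arbitrarily close to $\|\vx+\vz\|_\Sigma^2$ (when $\tconv(\sA)$ is not closed one argues for an approximant $\tilde\vx$ of $\vx+\vz$ and passes to the limit using continuity of $\mM$), so that $\vz=\sum_i\lambda_i(\vu_i-\vx)$ with each $\vu_i-\vx\in\Sigma-\Sigma$. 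The heart is to bound $\re\ls\mM\vx,\mM\vz\rs=\sum_i\lambda_i\re\ls\mM\vx,\mM(\vu_i-\vx)\rs$ from above via the RIP on $\Sigma-\Sigma$. In the cone case I polarize $\re\ls\mM\vx,\mM(\vu_i-\vx)\rs=\tfrac14(\|\mM\vu_i\|_\sF^2-\|\mM(2\vx-\vu_i)\|_\sF^2)$, where both $\vu_i$ and $2\vx-\vu_i$ lie in $\Sigma-\Sigma$ (here $2\vx\in\Sigma$ uses that $\Sigma$ is a cone); applying the upper and lower RIP and summing with $\sum_i\lambda_i\|\vu_i\|_\sH^2=\|\vx+\vz\|_\Sigma^2$ and $\sum_i\lambda_i\re\ls\vx,\vu_i\rs=\|\vx\|_\sH^2+\re\ls\vx,\vz\rs$ gives $\re\ls\mM\vx,\mM\vz\rs\le\tfrac{\delta}{2}\|\vx+\vz\|_\Sigma^2+(1-\delta)\re\ls\vx,\vz\rs$, i.e. $-\re\ls\mM\vx,\mM\vz\rs\ge N(\vx,\vz,\delta)/2$ with $N$ exactly the numerator of $D_\Sigma^{\mathrm{cone,sharp}}$ in the regime $\re\ls\vx,\vz\rs<\|\vx+\vz\|_\Sigma^2/2$. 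In the UoS setting one instead exploits $-\Sigma=\Sigma$ together with the quadratic identity of Fact~\ref{prop:ident0} (the Cai--Zhang device) to obtain the sharper $\sqrt{R}$-type estimate, $R\defin\|\vx+\vz\|_\Sigma^2-\|\vx\|_\sH^2-2\re\ls\vx,\vz\rs=\sum_i\lambda_i\|\vu_i-\vx\|_\sH^2$, producing the numerator of $\delta_\Sigma^{\mathrm{UoS}}$; the case split in~\eqref{eq:DefDeltaConeMain} simply records which of the two estimates is larger.

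In all cases $-\re\ls\mM\vx,\mM\vz\rs\ge N(\vx,\vz,\delta)/2$, which is trivial when the right-hand side is nonpositive and otherwise, by Cauchy--Schwarz, gives $\|\mM\vx\|_\sF\|\mM\vz\|_\sF\ge N/2$. Adding the nonnegative term $\|\mM(\vx+\vz)\|_\sF\|\mM\vz\|_\sF$ and bounding $\|\mM\vx\|_\sF\le\sqrt{1+\delta}\|\vx\|_\sH$ by the RIP and $\|\mM(\vx+\vz)\|_\sF^2\le\sum_i\lambda_i\|\mM\vu_i\|_\sF^2\le(1+\delta)\|\vx+\vz\|_\Sigma^2$ by convexity of $\|\cdot\|_\sF^2$ and the upper RIP, I get $\|\mM\vz\|_\sF\ge N/\big(2\sqrt{1+\delta}(\|\vx\|_\sH+\|\vx+\vz\|_\Sigma)\big)=D_\Sigma(\vx,\vz,\delta)/(2\sqrt{1+\delta})$, which is the stated denominator. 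Since $\Sigma$ is a cone, substituting $s\vx$ shows $\vz\mapsto\sup_{\vx\in\Sigma}D_\Sigma(\vx,\vz,\delta)$ is positively homogeneous of degree one in $\vz$; taking the supremum over $\vx$ and using $\|\mM\vh\|_\sF\le\eta+\epsilon$ yields $\sup_{\vx}D_\Sigma(\vx,\vh,\delta)\le2\sqrt{1+\delta}(\eta+\epsilon)$, and by homogeneity (with $\sT_f(\Sigma)$ conic for a positively homogeneous regularizer, or after the reduction of Lemma~\ref{lem:atomic_necessary}) this is $\ge\|\vh\|_\Sigma\,D_\Sigma(f,\delta)$ with the normalized infimum $D_\Sigma(f,\delta)$. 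Rearranging and using $\|\vh\|_\sH\le\|\vh\|_\Sigma$ gives the theorem; the specialization $\eta=\epsilon=0$ forces $\vx^*=\vx_0$, proving Theorem~\ref{th:sharp_RIP}. Positivity of $D_\Sigma(f,\delta)$ follows from the strict gap $\delta<\delta_\Sigma(f)$: writing $N=(\|\vx+\vz\|_\Sigma^2-2\re\ls\vx,\vz\rs)(\delta_\Sigma(\vx,\vz)-\delta)$ in the cone case shows $D_\Sigma(\vx,\vz,\delta)>0$ exactly when $\delta<\delta_\Sigma(\vx,\vz)$, and the gap transfers to a uniform positive lower bound over normalized $\vz$.

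The main obstacles are threefold. First, obtaining the \emph{sharp} cross-term estimates: the naive per-atom bound loses a factor and one must invoke the identity of Fact~\ref{prop:ident0} to recover the exact $\sqrt{R}$ numerator (hence the $1/\sqrt2$ of Cai--Zhang), while the two-regime structure of~\eqref{eq:DefDeltaConeMain} requires selecting, for each $(\vx,\vz)$, whichever of the UoS-type and cone-type estimates is larger. Second, the infinite-dimensional atomic representation, where $\tconv(\sA)$ need not be closed, forcing the $\gamma\to0$ approximation argument to be controlled through the continuity of $\mM$. Third, establishing uniform positivity of $D_\Sigma(f,\delta)$, i.e. ensuring the infimum over descent directions does not collapse to zero — this is precisely where the strict inequality $\delta<\delta_\Sigma(f)$ and the degree-one homogeneity of $\sup_\vx D_\Sigma(\vx,\cdot,\delta)$ are indispensable.
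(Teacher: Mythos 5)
Your reduction matches the paper's opening moves (feasibility of $\vx_0$ gives $f(\vx^*)\le f(\vx_0)$, hence $\vh\in\sT_f(\Sigma)$, and the triangle inequality gives $\|\mM\vh\|_\sF\le\eta+\epsilon$), and your cone-case cross-term estimate is correct: the per-atom polarization $\re\ls\mM\vx,\mM(\vu_i-\vx)\rs=\tfrac14\bigl(\|\mM\vu_i\|_\sF^2-\|\mM(2\vx-\vu_i)\|_\sF^2\bigr)$, with the upper RIP on $\vu_i$ and the lower RIP on $2\vx-\vu_i$, summed against the decomposition from Fact~\ref{fact:norm_formulation}, is exactly the paper's argument specialized to the endpoint $\beta=1$ of its parametrized identity, and it does yield the regime-two numerator. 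The genuine gap is that the other half of the theorem --- the UoS-type bound with the square-root numerator, which is the entire point since it is what reproduces the sharp $1/\sqrt2$ of Cai--Zhang --- is asserted rather than proved. Citing ``Fact~\ref{prop:ident0} (the Cai--Zhang device)'' is not a derivation: the paper's mechanism is a one-parameter family $\vh_i=\vu_i-(1+\beta)\vx$, with the lower RIP applied to $\vu_i-(1+\beta)\vx$ and the upper RIP to $(1-\beta)\vx-\vu_i$, followed by an optimization over $\beta$; the square root appears only at the optimum $\beta_{\textrm{opt}}=\sqrt{1+\alpha-2\rho}$, where $\rho\defin\re\ls\vx,\vx+\vz\rs/\|\vx\|_\sH^2$ and $\alpha\defin\|\vx+\vz\|_\Sigma^2/\|\vx\|_\sH^2$. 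A single polarization is precisely the choice $\beta=1$ and cannot produce it, so the step you describe as ``one instead exploits $-\Sigma=\Sigma$'' is exactly the missing heart of the proof.

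There is also a conceptual error that would break the argument if followed literally: you say the case split in~\eqref{eq:DefDeltaConeMain} ``records which of the two estimates is larger''. In fact the UoS-type value is \emph{always} the larger one (with $s=\alpha-2\rho$ one checks $2\sqrt{1+s}\le 2+s$, i.e.\ the cone-type value never exceeds the UoS-type value), so ``take the larger'' would return the UoS bound in both regimes and erase the cone/UoS distinction that the theorem is about. The split records \emph{admissibility}, not size: for a cone, the vectors $\vu_i-(1+\beta)\vx$ and $(1-\beta)\vx-\vu_i$ lie in $\Sigma-\Sigma$ only when $|\beta|\le1$, so the unconstrained optimizer may be used only when $\beta_{\textrm{opt}}\le 1$, i.e.\ $\rho\ge\alpha/2$; in the other regime the larger (UoS) estimate is simply not derivable and one must settle for $\beta=1$. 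Finally, your endgame converts $\|\mM\vh\|_\sF\le\eta+\epsilon$ into a bound on $\|\vh\|_\Sigma$ via degree-one homogeneity of $\vz\mapsto\sup_{\vx}D_\Sigma(\vx,\vz,\delta)$, which forces you to assume $\sT_f(\Sigma)$ is conic (positively homogeneous $f$, or the detour through Lemma~\ref{lem:atomic_necessary}); the theorem and the paper's proof cover arbitrary $f$. The paper needs no such assumption: it never rescales $\vz$, but instead uses the triangle inequality $\|\vz\|_\Sigma\le\|\vx+\vz\|_\Sigma+\|\vx\|_\sH=(1+\sqrt{\alpha})\|\vx\|_\sH$ and divides through, so that its stability constant $G_\Sigma(\rho,\alpha,\delta)/(1+\sqrt{\alpha})$ depends on $(\vx,\vz)$ only through the scale-invariant pair $(\rho,\alpha)$ and the infimum over descent vectors cannot collapse by rescaling. (Part of your difficulty is inherited from the statement itself, whose displayed $D$ formulas are degree-one homogeneous and do not match the normalized constants the proof actually produces; but the repair is the triangle-inequality step, not homogeneity of the descent set.)
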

\begin{remark}
This theorem directly implies Theorem~\ref{th:sharp_RIP} and Theorem~\ref{th:robust_RIP_eucl} because   $\delta^{cone} \leq \delta^{cone,sharp}$: when they are different, we have $\delta^{cone,sharp}(x,z)/\delta^{cone}(x,z)= \frac{\sqrt{\|x+z\|_\Sigma^2/\|x\|_\sH^2- 2\re \ls \vx,\vz \rs/\|x\|_\sH^2 -1}}{\|x+z\|_\Sigma^2/\|x\|_\sH^2- 2\re \ls \vx,\vz \rs/\|x\|_\sH^2} \leq 1$. Similarly we have $D_\Sigma^{UoS}(x,z,\delta) \geq D_\Sigma^{cone,sharp}(x,z,\delta)$.
\end{remark} 
The proof of Theorem~\ref{th:robust_RIP_eucl_precise} mainly relies on Lemma~\ref{lem:MainResult} which will soon be stated. This Lemma gives a bound of the norm of elements of  $\sT_f(\Sigma)$. We start by the proof of the theorem then state and prove Lemma~\ref{lem:MainResult}.
\begin{proof}[Proof of Theorem~\ref{th:robust_RIP_eucl_precise}]
Let $\vx_{0} \in \Sigma $ and $\ve \in \sF$ with $\|\ve\|_\sF \leq \eta \leq \epsilon$.  Choose any
\begin{equation}
\vx^* \in \underset{\vx \in \sH}{\tam}\  f(\vx) \; s.t.\;  \|\mM\vx-(\mM\vx_{0}+\ve
)\|_\sF \leq \epsilon
\end{equation} 
Let  $\vz = \vx^* -\vx_{0}$. %If $\vz =0$, we have $\|\vx^* -\vx_{0}\|_\Sigma=0$  and the result is verified. In the following, suppose $\vz\neq 0$. 
By definition, we have  $f(\vx^*)  \leq f(\vx_{0})$ i.e.  $f(\vx_{0}+\vz)  \leq f(\vx_{0})$, hence $\vz \in \sT_f(\Sigma)$.
% \setminus \{0\}$. 
We use Lemma~\ref{lem:MainResult}:
\begin{equation}
 \|\vx^{*}-\vx_{0}\|_{\Sigma} = \|\vz\|_\Sigma \leq  2\frac{\sqrt{1+\delta}}{D_{\Sigma}(f,\delta)} \|Mz\|_\sF.
\end{equation}
Following the classical proof of stable $\ell^{1}$ recovery \cite{Candes_2008}, observe that (with the triangle inequality): 
\begin{equation}\label{eq:MainThStep0}
 \|\mM\vz\|_\sF \leq \eta + \epsilon.
\end{equation}
We conclude  
\begin{equation}
  \|\vx^{*}-\vx_{0}\|_{\sH} \leq  \|\vx^{*}-\vx_{0}\|_{\Sigma} \leq 2\frac{\sqrt{1+\delta}}{D_{\Sigma}(f,\delta)} (\eta + \epsilon).
\end{equation}
\end{proof}
\begin{lemma}\label{lem:MainResult}
Under the hypotheses of Theorem~\ref{th:robust_RIP_eucl_precise},  for any $\vz \in \sT_f(\Sigma)$, % \setminus \{0\}$, 
we have 
\begin{equation}
 \|\vz\|_\Sigma \leq  2\frac{\sqrt{1+\delta}}{D_{\Sigma}(f,\delta)} \|Mz\|_\sF.
\end{equation}
\end{lemma}

\begin{proof}
Let $\vz \in \sT_f(\Sigma)$. If $\vz = 0$ the result is trivial. From now on we assume $\vz \neq 0$. Consider an arbitrary $\vx \in \Sigma$ and $\rho \defin \re \ls \vx,\vx+\vz \rs/\|\vx\|_\sH^2$, $\alpha \defin \|\vx+z\|_\Sigma^2/\|\vx\|_\sH^2$. 
To exploit Fact~\ref{fact:norm_formulation}, we consider $\gamma' > 0$ and $\vz'$ such that $(\vx+\vz') \in \bRp \cdot\tconv(\Sigma\cap S(1))$ and $\|\vx+\vz-(\vx+\vz')\|_{\sH} = \|\vz-\vz'\|_{\sH} \leq \gamma'$, and $\lambda_i \in \bRp$, $\vu_i \in \Sigma$,  such that $\sum \lambda_i =1$ and $\vx+\vz' = \sum \lambda_i \vu_i$. %Denote $\QOC' \defin \frac{\re \ls \vx,\vx+\vz' \rs}{\|\vx\|_\sH^2}$. 
%In due time we will assume that $\|\vz'-\vz\|_{\sH}$ is small and that $\QOC'$ is thus close to $\rho = \frac{\re \ls \vx,\vx+\vz \rs}{\|\vx\|_\sH^2}$.

Consider an arbitrary $\beta \in \bR$ and $\vh_i \defin \vu_i-(1+\beta )\vx$. Remark that $\sum_j \lambda_j  \vh_j -\frac{1}{2} \vh_i= \frac{1}{2}(\left(1 - \beta \right)\vx-  \vu_i) +\vz'$. We apply Fact~\ref{prop:ident0} to obtain the identity
 \begin{equation}\label{eq:stab1}
 \begin{split}
  \sum \lambda_i \|\mM\left( \vu_i-(1+\beta) \vx\right)\|_\sF^2&=\sum \lambda_i \| \mM\left(\left(1- \beta \right)\vx- \vu_i +2 \vz' \right)\|_\sF^2 \\
  &= \sum \lambda_i\left\{  \| \mM\left(\left(1- \beta \right)\vx- \vu_i\right)\|^2  + 4 \re \ls  \mM\left( \left(1- \beta \right)\vx- \vu_i\right), \mM\vz' \rs +4\| \mM\vz' \|_\sF^2 \right\}\\
  &= \sum \lambda_i \| \mM\left(\left(1- \beta \right)\vx- \vu_i\right)\|^2   -4\re \ls  \mM\left(\beta \vx+\vz'  \right), \,\mM\vz' \rs +4\| \mM\vz' \|_\sF^2 \\
   &= \sum \lambda_i \| \mM\left(\left(1- \beta \right)\vx- \vu_i\right)\|^2   -4 \beta \re \ls  \mM\vx, \mM\vz' \rs.
  \end{split}
 \end{equation}
Consider the vectors $\vh_i=\vu_i -(1+\beta) \vx$ and $\left(1 - \beta \right)\vx- \vu_i$. When $\Sigma$ is a UoS, these vectors are in $\Sigma-\Sigma$ for any $\beta \in \bR$. This is still the case when $\Sigma$ is a cone, provided that $\beta \in (-1,\ 1)$. In both cases, they are in $\Sigma-\Sigma$ provided that $\beta \in I$ where $I \defin \bR$ when $\Sigma$ is a UoS, and  $I \defin (-1,1) \subset \bR$ when $\Sigma$ is a cone. 

Thus, when $\beta \in I$, we apply the RIP hypothesis on both vectors (lower RIP on the left side, upper RIP on the right side) and~\eqref{eq:stab1} yields
\begin{equation}\label{eq:stab1bis}
 \begin{split}
\left(1 - \delta\right)\sum \lambda_i \|\vu_i-(1+\beta) \vx\|_\sH^2 &\leq    \sum \lambda_i \|\mM \left(\left(1- \beta \right)\vx- \vu_i\right)\|_\sF^2 -4 \beta \re \ls  \mM\vx, \mM\vz' \rs\\
 & \leq   (1+\delta) \sum \lambda_i \|\left(1- \beta \right)\vx- \vu_i\|_\sH^2   -4 \beta\re \ls  \mM \vx, \mM\vz' \rs.
 \end{split}
\end{equation}
We calculate:
\begin{equation}
  \begin{split}
(1+\delta) &\sum \lambda_i \|\left(1 - \beta  \right)\vx- \vu_i\|_\sH^2 -   (1 - \delta) \sum \lambda_i \| \vu_i-(1+\beta) \vx\|_\sH^2  \\
   =&   (1+\delta) \sum \lambda_i \left(\left(1 - \beta  \right)^2 \| \vx\|_\sH^2 +  \| \vu_i\|_\sH^2\right) 
    -  (1 - \delta)  \sum \lambda_i\left( \|  \vu_i\|_\sH^2+(1+\beta)^2 \|  \vx\|_\sH^2\right)   \\
    &+ 2\left(-(1+\delta) \left(1 - \beta  \right) +(1 - \delta)(1+\beta) \right) \cdot \re \ls \vx,\sum \lambda_i \vu_i\rs \\
   =&   \sum \lambda_i   \left(    (1+\delta) \left(1 - \beta  \right)^2 - (1 - \delta)(1+\beta)^2  \right) \| \vx\|_\sH^2 + 2\delta   \sum \lambda_i \| \vu_i\|_\sH^2 
+ 4\left(\beta-\delta\right) \re \ls \vx,\vx+\vz' \rs\\
=&
\left(2\delta \beta^{2} -4\beta + 2 \delta\right) \cdot \| \vx\|_\sH^2 + 2\delta  \cdot \sum \lambda_i \| \vu_i\|_\sH^2 
+ 4\left(\beta-\delta\right) \re \ls \vx,\vx+\vz' \rs.\label{eq:MainThStep1}
\end{split}
\end{equation}
%with $g_{1}(\beta|\delta) \defin (1+\delta) \left(1 - \beta  \right)^2 - (1 - \delta)(1+\beta)^2 = 2\delta \cdot \beta^{2} -4\beta + 2 \delta$.
Combining with~\eqref{eq:stab1bis}, since $\re \ls \vx,\vx+\vz' \rs = \re \ls \vx,\vx+\vz \rs + O(\gamma')$ and (by continuity of $\mM$) $\re \ls  \mM \vx, \mM\vz' \rs = \re \ls  \mM \vx, \mM\vz \rs + O(\gamma')$, we obtain using Fact~\ref{fact:norm_formulation} (taking the limit when $\gamma' \to 0_{+}$ of the infimum over $\vz'$ and $\lambda_{i},\vu_{i}$ of the right hand side in~\eqref{eq:MainThStep1}) that
%\begin{equation}
\[
\left(2\delta \beta^{2} -4\beta + 2 \delta\right) \cdot \| \vx\|_\sH^2 + 2\delta  \cdot \|\vx+\vz\|_{\Sigma}^{2} 
+ 4\left(\beta-\delta\right) \re \ls \vx,\vx+\vz \rs
 -4 \beta\re \ls  \mM \vx, \mM\vz \rs \geq 0.
\]
%\end{equation}
Since $\vx \in \Sigma$, we have  $\|\vx\|_\sH^2 = \|\vx\|_\Sigma^2$.
Dividing the above inequality by $2\|\vx\|_{\sH}^{2}$ we get: 
%\begin{equation}
\[
%g_1(\beta|\QOC,\delta) 
%\leq  
\delta \beta^2 -2\beta(1-\QOC) +\delta (1+\alpha-2\QOC)
%+
% \delta \alpha %\frac{\|\vx+\vz\|_\Sigma^2}{\|\vx\|_\Sigma^2} 
 -
 2\beta \frac{ \re \ls  M \vx , M\vz \rs }{\|\vx\|_\sH^2} \geq 0.
 \]
%\end{equation}
Denoting $g_1(\beta|\QOC,\alpha,\delta) \defin \delta  \beta^2 -2\beta(1-\QOC) +\delta(1+\alpha-2\QOC)$, the Cauchy-Schwartz inequality yields %($\beta \geq 0$)
\begin{equation}
\begin{split}
-g_1(\beta|\QOC,\alpha,\delta) %- \delta \alpha 
&\leq\frac{ 2 |\beta  |\|  M \vx \|_\sF \| M\vz \|_\sF}{ \|\vx\|_\sH^2}.\\
\end{split}
\end{equation}
Using the upper RIP we have $\|\mM \vx\|_{\sF} \leq \sqrt{1+\delta} \|\vx\|_{\sH}$ hence using~\eqref{eq:MainThStep0} we obtain for $\beta \neq 0$
\begin{equation}\label{eq:stab2}
\frac{1}{|\beta|} \cdot \left[-g_1(\beta|\QOC,\alpha,\delta) %- \delta\alpha 
\right] \leq 2 \sqrt{1+\delta} \frac{ \| M \vz \|_\sH}{\|\vx\|_\sH}
\end{equation}
The rest of the proof consists first in optimizing the choice of $\beta \in I$, then that of $\vx \in \Sigma$ to best exploit the above inequality.

{\bf Choice of $\beta$.} We now compute the supremum of the left hand side, $g(\beta) \defin \frac{1}{|\beta |} \left[-g_1(\beta|\QOC,\alpha,\delta) %- \delta\alpha 
\right]=  -\delta |\beta| +2\frac{\beta}{|\beta|}(1-\QOC) -\delta(1+\alpha-2\QOC)\frac{1}{|\beta|}$, over admissible $\beta \in I \setminus\{0\}$.
Since $g(-|\beta|) \geq g(|\beta|)$, and since the set $I$ is symmetric around zero, we need only consider the maximization of the function $g$ over positive $\beta$.  

Since the function $g$ is bounded from above, we have $1+ \alpha-2 \rho  \geq 0$. 
Straightforward calculus shows that $g$ is increasing for $0< \beta< \beta_{\textrm{opt}}$ and decreasing for $\beta > \beta_{\textrm{opt}}$ with
$\beta_{\textrm{opt}} \defin \sqrt{1+\alpha-2\QOC}$.
\begin{itemize}
\item {\bf UoS setting.} $I=\bR$.  The supremum is reached at $\beta_{\textrm{opt}}$. Thus 
\[
%\begin{equation}
\begin{split}
G_{\textrm{UoS}}(\QOC,\alpha,\delta) \defin & \sup_{\beta \in I} g(\beta) =g(\beta_{\textrm{opt}})
= -\delta\sqrt{1+\alpha-2\QOC} +2(1-\QOC)-\delta\sqrt{1+\alpha-2\QOC}\\
=&  2(1-\QOC)-2\delta\sqrt{1+\alpha-2\QOC}.
\end{split}
%\end{equation}
\]
Moreover, $G_{\textrm{UoS}}(\QOC,\alpha,\delta)>0$ if and only if $\delta < \delta_{\textrm{UoS}}(\rho,\alpha) \defin \frac{1-\QOC}{\sqrt{1+\alpha-2\QOC}}$.
\item {\bf Cone setting.} $I = (-1,1)$. We compute $G_{\textrm{cone}}(\QOC,\alpha,\delta) \defin \sup_{\beta \in I} g(\beta)$ by considering two cases:
\begin{itemize}
\item if $\rho \geq \alpha/2$ then $\beta_{\textrm{opt}}\leq 1$ and $g(\beta)$ is again maximized at $\beta=\beta_{\textrm{opt}}$ yielding 
$G_{\textrm{cone}}(\QOC,\alpha,\delta) =  G_{\textrm{UoS}}(\QOC,\alpha,\delta)$, which is positive when $\delta < \delta_{\textrm{UoS}}(\QOC,\alpha)$;
\item if $\rho < \tfrac{\alpha}{2}$, then $g(\beta)$ is maximized at $\beta=1$, yielding
$G_{\textrm{cone}}(\QOC,\alpha,\delta) =g(1)=  2(1-\QOC) -\delta \left(2+\alpha-2\QOC\right)$, which is positive
 if and only if $\delta < \frac{2(1-\QOC)}{2+\alpha-2\QOC}$.
\end{itemize}
Overall, we obtain
\begin{equation}
G_{\textrm{cone}}(\QOC,\alpha,\delta) \defin \sup_{\beta \in I} g(\beta) =
\begin{cases}
2(1-\QOC)-2\delta\sqrt{1+\alpha-2\QOC},\ \mbox{when}\ \QOC \geq \alpha/2;\\
2(1-\QOC)-\delta(2+\alpha-2\QOC), \mbox{when}\ \QOC < \alpha/2.
\end{cases}
\end{equation}
This is positive if and only if $\delta < \delta_{\textrm{cone}}(\QOC,\alpha)$, with
\begin{equation}
\delta_{\textrm{cone}}(\QOC,\alpha) \defin 
\begin{cases}
\frac{(1-\QOC)}{\sqrt{1+\alpha-2\QOC}},\ \mbox{when}\ \QOC \geq \alpha/2;\\
\frac{2(1-\QOC)}{2+\alpha-2\QOC},\mbox{when}\ \QOC < \alpha/2.
\end{cases}
\end{equation}
\end{itemize}
%\end{proof}
From now on when $\Sigma$ is a UoS (resp. a cone) we use the notation $G_{\Sigma}(\cdot)$ as a shorthand for $G_{\textrm{UoS}}(\cdot)$ (resp. $G_{\textrm{cone}}(\cdot)$), and a similar convention for $\delta_{\Sigma}(\cdot)$.

\paragraph{Conclusion of the proof}

Under the theorem hypothesis, simple algebraic manipulations of the expressions of $\delta_{\Sigma}(\cdot)$ and $G_{\Sigma}(\cdot)$ above show that we have, with the notations of the theorem,
\[
\delta < \delta_{\Sigma}(\vz) = \sup_{\vx \in \Sigma} \delta_{\Sigma}(\QOC(\vx,\vz),\alpha_{\Sigma}(\vx,\vz)).
\] 
Hence there is indeed at least one $\vx \in \Sigma$ such that $\delta < \delta_{\Sigma}(\QOC,\alpha)$ with $\QOC = \QOC(\vx,\vz)$, $\alpha = \alpha_{\Sigma}(\vx,\vz)$, or equivalently so that $G_{\Sigma}(\QOC,\alpha,\delta)>0$. 
For any such $\vx$, with the triangle inequality, we obtain
\begin{equation}
\|\vz\|_{\Sigma} \leq \|\vx+\vz\|_\Sigma +\|\vx\|_\Sigma
          \leq  (1+\sqrt{\alpha)}\|\vx\|_\Sigma =   (1+\sqrt{\alpha)}\|\vx\|_\sH.          
\end{equation}
Combined with~\eqref{eq:stab2}, this yields $\|\vz\|_\Sigma \leq  2\sqrt{1+\delta}\|\mM\vz\|_\sF/D_{\Sigma}(\vx,\vz,\delta)$ 
with 
\begin{equation}
\begin{split}
D_{\Sigma}(\vx,\vz,\delta) &\defin \frac{G_\Sigma(\QOC,\alpha,\delta)} {1+\sqrt{\alpha}}
\end{split}
\end{equation}
Maximizing $D_{\Sigma}(\vx,\vz,\delta)$ over $\vx$ yields the best stability constant for the considered descent vector $\vz$
\begin{equation}
D_{\Sigma}(\vz,\delta) \defin  \sup_{\vx \in \Sigma} \frac{G_{\Sigma}(\QOC,\alpha,\delta)}{1+\sqrt{\alpha}}. 
\end{equation}
This gives 
\begin{equation}
\|\vx^{*}-\vx_{0}\|_{\Sigma} = \|\vz\|_\Sigma \leq  2\frac{\sqrt{1+\delta}}{D_{\Sigma}(f,\delta)}\|\mM\vz\|_\sF
\end{equation}
with
\begin{equation} \label{eq:D_const_definition}
D_{\Sigma}(f,\delta) \defin  \inf_{\vz \in \sT_{f}(\Sigma) \setminus \{0\} } \sup_{\vx \in \Sigma} \frac{G_{\Sigma}(\QOC,\alpha,\delta)}{1+\sqrt{\alpha}}. 
\end{equation} 
 \end{proof}

\subsection{Robustness to model error - Instance optimality}\label{sec:instance_optimality}

An important question is whether the results of the previous section are robust to model error, i.e., when the observed signal $\vx_{0}$ is not {\em exactly} in $\Sigma$.  For arbitrary $f$, it is not difficult to derive such robust instance optimality results by directly using Theorem~\ref{th:robust_RIP_eucl} and considering modeling error as observation noise. However, in this case, we would need to know the level $d(\vx_{0},\Sigma)$ of modeling error in order to ``tune'' $\epsilon$ in the optimization problem~\eqref{eq:minimization}. 

Instead, we establish below robustness results that do not require such prior knowledge at the price of certain assumptions on the regularizer $f$. We first give an upper bound on the reconstruction error in Lemma~\ref{lem:inst_opt} then show in Theorem~\ref{th:inst_opt2} that this bound implies instance optimality for $f$ with certain properties. We will give examples in classical cases after the statement of our result. 

Lemma~\ref{lem:inst_opt} uses the notion of $M$-norm \cite{Bourrier_2014}: given a constant $C$, the $M$-norm is defined by \begin{equation}
\label{eq:DefMNorm}
\|\cdot\|_{M,C} := C \cdot \|M\cdot\|_\sF + \|\cdot\|_\sH.
\end{equation}
\begin{lemma}\label{lem:inst_opt}
Let $\Sigma$ be a cone or a UoS. Consider a continuous linear operator $\mM$ with RIP $\delta<\delta_\Sigma(f)$ on $\Sigma-\Sigma$, and a noise level $\eta \leq \epsilon$. Let  $C_\Sigma=C_\Sigma(f,\delta) = \frac{2\sqrt{1+\delta}}{D_{\Sigma}(f,\delta)}$.
Then for all $\vx_{0} \in \sH$, $\ve \in \sF$,  such that  $\| \ve \|_\sH \leq \eta \leq \epsilon$, any minimizer $\vx^*$ of~\eqref{eq:robust_minimization} satisfies
\begin{equation}
 \|\vx^*-\vx_{0}\|_\sH \leq  C_\Sigma \cdot (\eta +\epsilon) + \inf_{z' \in \sT_f(\Sigma) } \|\vx^*-\vx_0 -z'\|_{M,C_\Sigma}.
\end{equation}
\end{lemma}
\begin{remark}
 Note that contrary to Theorem~\ref{th:robust_RIP_eucl_precise}, the unknown $x_0$ is no longer restricted to $\Sigma$. The constant  $C_\Sigma=C_\Sigma(f,\delta)$ has the same definition as in Theorem~\ref{th:robust_RIP_eucl_precise}. 
\end{remark}

\begin{proof} 
Let $\vx_{0} \in \sH$, $\ve \in \sF$. Let 
\begin{equation}
\vx^* \in \underset{\vx \in \sH}{\tam}\  f(\vx) \; s.t.\;  \|\mM\vx-(\mM\vx_{0}+\ve
)\|_\sF \leq \epsilon.
\end{equation} 
Let $z = x^*-x_0$. We have $f(x_0+z)= f(x^*) \leq f(x_0)$. Consider any $z' \in \sT_f(\Sigma)$. % \setminus \{0\}$. 
With Lemma~\ref{lem:MainResult}, 
\begin{equation}
   \|z'\|_\Sigma  \leq C_\Sigma \cdot \|Mz'\|_\sF .
\end{equation}
Thus 
\begin{equation}
\|z\|_\sH\leq \|z'\|_\sH + \|z-z'\|_\sH \leq \|z'\|_\Sigma  + \|z-z'\|_\sH \leq C_\Sigma \cdot \|Mz'\|_\sF + \|z-z'\|_\sH \leq  C_\Sigma \cdot \|Mz\|_\sF + \|z-z'\|_\sH + C_\Sigma \|M(z-z')\|_\sF.
\end{equation}
As in Equation~\eqref{eq:MainThStep0},  $\|Mz\|_\sF \leq \eta + \epsilon$ and we obtain
\begin{equation}
\|z\|_\sH \leq  C_\Sigma \cdot (\eta +\epsilon)  +\|z-z'\|_{M,C_\Sigma}.
\end{equation}
%Remark that this is also true for $z'=0$. 
Taking the infimum with respect to $z' \in \sT_f(\Sigma)$ yields the result.
\end{proof}

%To go further in the general case %, we compare $f$  and $\|\cdot\|_{M,C_\Sigma}$ when $f$  is a positively homogeneous, non-negative convex function.
To go further, we  need to replace the bound involving the $M$-norm with an estimate directly measuring a modeling error. For this, given a regularizer $f$ we use the symmetrized ``distance'' of a vector to the model with respect to $f$: 
\begin{equation}
 d_{f}(\vx_0,\Sigma) = \underset{\tilde{\vx} \in \Sigma}{\inf} \frac{f(\vx_0-\tilde{\vx})+f(\tilde{\vx}-\vx_0)}{2}
\end{equation}
Instance optimality is a direct consequence of Lemma~\ref{lem:inst_opt} using an appropriate \emph{robustness constant} .
\begin{definition}
A constant $D \in \bRp \cup \{+\infty\}$ is called  a \emph{robustness constant} of the regularizer $f$ for the recovery of elements of $\Sigma$ from their measurement with $M$ if, under the assumptions of Lemma~\ref{lem:inst_opt}, we have: for all $x_0 \in \sH$,  
\[
\inf_{z' \in \sT_f(\Sigma) } \|\vx^*-\vx_0 -z'\|_{M,C_\Sigma} \leq D \cdot d_f(x_0,\Sigma).
\] 
\end{definition}
 
The following theorem provides a first general route to finding and exploiting such robustness constants assuming that the (positively homogeneous, non-negative and convex) regularizer $f$ dominates the $M$-norm. We will provide more tailored results for structured sparsity in Section~\ref{sec:infinite_dim}.

\begin{theorem}\label{th:inst_opt2}
Let $\Sigma$ be a cone or a UoS. Let $f$ be positively homogeneous, non-negative and convex such that $f(x)<+\infty$ for $x\in \Sigma$. Consider a continuous linear operator $\mM$ with RIP $\delta<\delta_\Sigma(f)$ on $\Sigma-\Sigma$ and a noise level $\eta \leq \epsilon$. Let $ C_{f,M,\Sigma} < \infty$ such that for all $u \in \sH$, $\|u\|_{M,C_\Sigma}\leq  C_{f,M,\Sigma} \cdot f(u)$. Then 
\begin{itemize}
\item $D := 2C_{f,M,\Sigma}$ is a robustness constant as defined above.
\item As a consequence, for all $\vx_{0} \in \sH$, $\ve \in \sF$,  such that  $\| \ve \|_\sH \leq \eta \leq \epsilon$, any minimizer $\vx^*$ of~\eqref{eq:robust_minimization} satisfies
\begin{equation}
 \|\vx^*-\vx_{0}\|_\sH \leq  C_\Sigma \cdot (\eta +\epsilon) + 2 C_{f,M,\Sigma} \cdot d_f(x_0,\Sigma).
\end{equation}
\end{itemize}
\end{theorem}
\begin{proof} 
If $d_{f}(\vx_{0},\Sigma) = \infty$ the result is trivial. We now assume that $d_{f}(\vx_{0},\Sigma) < \infty$.
Let $\vx_{0} \in \sH$, $\ve \in \sF$ and $x^*$ the result of minimization~\eqref{eq:robust_minimization}. Let $z = x^*-x_0$. We have $f(x_0+z) \leq f(x_0)$. 

Consider $\vx_1 \in \Sigma$: with the sub-additivity of $f$ (which follows from convexity),
\begin{equation}\label{eq:Robust1}
f(x_1+z) \leq f(x_0+z) + f(x_1-x_0)\leq f(x_0) + f(x_1-x_0) \leq f(x_1) + f(x_0-x_1) +f(x_1-x_0).
\end{equation}
Consider now $x_2 \in \Sigma$ such that 
\begin{equation}\label{eq:Robust2}
f(x_2+z)\leq f(x_2) + f(x_0-x_1) +f(x_1-x_0).
\end{equation}
For example\footnote{The use of other choices for this intermediate $x_2$ will be useful for the derivation of better robustness constants in particular cases in Section~\ref{sec:infinite_dim}.}
 choose $\vx_2=\vx_1$. 
%Since $x_2 \in \Sigma$, $f(x_2)< +\infty$. 
%If $f(x_0-x_1)=+\infty$ or $f(x_1-x_0)= +\infty$, we immediately have $\|\vx^*-\vx_{0}\|_\sH \leq C_\Sigma \cdot (\eta +\epsilon) +2 C_{f,M,\Sigma}\cdot \frac{f(x_0-x_1)+ f(x_1-x_0)}{2}$. 
Since $d_{f}(\vx_{0},\Sigma) < \infty$, both $f(x_0-x_1)$ and $f(x_1-x_0)$ are finite. Let 
\begin{equation}\label{eq:Robust3}
y:= \tfrac{f(x_2)}{f(x_2) + f(x_0-x_1) + f(x_1-x_0)}\cdot (x_2+z);\quad \text{and}\quad z' := y-x_2.
\end{equation}
With the positive homogeneity of $f$, we have $f(x_2+z')=f(y) = \tfrac{f(x_2)}{f(x_2) +  f(x_0-x_1) + f(x_1-x_0)}\cdot f(x_2+z)\leq f(x_2)$. This means that 
\begin{equation}\label{eq:Robust4}
z' \in \sT_f(x_2) \subset \sT_f(\Sigma).
\end{equation}
Moreover remark that
 \begin{equation}\label{eq:Robust5}
 \begin{split}
 z-z'=  x_2 + z -y  &= \left(1 -\frac{f(x_2)}{f(x_2) +  f(x_0-x_1) + f(x_1-x_0)}\right ) \cdot (x_2+z)\\
 &= \frac{  f(x_0-x_1) + f(x_1-x_0)}{f(x_2) +  f(x_0-x_1) + f(x_1-x_0)}  \cdot (x_2+z).\\
 \end{split}
\end{equation}
Thus, with homogeneity
\begin{equation}
f(z-z') =  \frac{  f(x_0-x_1) + f(x_1-x_0)}{f(x_2) +  f(x_0-x_1) + f(x_1-x_0)}  \cdot f(x_2+z) \leq  f(x_0-x_1) + f(x_1-x_0).
\end{equation}
Using the property of $C_{f,M,\Sigma}$ it follows that
\begin{equation}
\inf_{\ti{z} \in \sT_f(\Sigma) } \|\vx^*-\vx_{0}-\ti{z}\|_{M,C_\Sigma} 
\leq 
 \|z-z'\|_{M,C_\Sigma}
 \leq 
 C_{f,M,\Sigma}\cdot f(z-z') \leq 2C_{f,M,\Sigma}\cdot \frac{f(x_0-x_1)+ f(x_1-x_0)}{2}.
\end{equation}
Minimizing over $x_1 \in \Sigma$ yields the first claim. Using Lemma~\ref{lem:inst_opt} yields the second one. 
\end{proof}

We can derive robustness constants for the following cases: 
\begin{itemize}
\item Consider $f$ a convex gauge induced by any bounded closed convex set containing $0$ in a finite-dimensional vector space $\sH$. 

From Table~\ref{tab:norm_prop}, $f$ is positively homogeneous, non-negative, lower semi-continuous, convex and coercive. 
With the positive homogeneity of $f$, we have $\inf_{u \in \sH \setminus \{0\}} \tfrac{f(u)}{\|\cdot\|_{M,C_\Sigma}} = \inf_{v \in S_{\|\cdot\|_{M,C_\Sigma}}}  f(v)$ where $S_{\|\cdot\|_{M,C_\Sigma}}$ is the unit sphere with respect to $\|\cdot\|_{M,C_\Sigma}$.  The sphere $S_{\|\cdot\|_{M,C_\Sigma}}$ is compact because $\sH$ is finite-dimensional and $\|\cdot\|_{M,C_\Sigma}$ is lower semi-continuous. Thus $f$ admits a minimum value $\beta \geq 0$ on $S_{\|\cdot\|_{M,C_\Sigma}}$ because is bounded and closed (because $\|\cdot\|_{M,C_\Sigma}$ is lower semi-continuous). This minimum value is strictly positive because $f$ cannot take the value $0$ outside of the origin (otherwise $f$ being positively homogeneous would not be coercive), and we conclude that $ C_{f,M,\Sigma} = 1/\beta< \infty$ hence $D = 2C_{f,M,\Sigma}$ is a robustness constant and we can apply Theorem~\ref{th:inst_opt2}. 

 \item Consider $f= \|\cdot\|_1$ and $\Sigma = \Sigma_K$  the set of $K$-sparse vectors in a finite-dimensional space $\sH$. Here, $\|\cdot\|_\sH=\|\cdot\|_2$ the usual $\ell^2$-norm. A proof technique for the general case of block structured sparsity, developped in Theorem~\ref{th:inst_opt_block} from the next section, establishes that $D:=2\tfrac{1 + \sqrt{1+\delta} \cdot C_\Sigma}{\sqrt{K}}$ is a robustness constant for the considered problem.
 Lemma~\ref{lem:inst_opt} thus implies 
 \begin{equation} 
 \begin{split}
  \|\vx^*-\vx_{0}\|_2 & \leq C_{\Sigma} \cdot (\epsilon +\eta) +D \cdot d_{\|\cdot\|_1}(\vx_0,\Sigma).
  \end{split}
 \end{equation}
 In the literature, the distance $d_{\|\cdot\|_1}(\vx_0,\Sigma)$  is often written as $\sigma_{K}(\vx_{0})_{1} = \|\vx_{0,T^c}\|_1$, i.e. the best $K$-term approximation for the $\ell^1$-norm.
 The constant $D$ is of the same order $O(1/\sqrt{K})$  found in classical results \cite{Foucart_2013}.

 \item Consider $f= \|\cdot\|_*$ the nuclear norm and $\Sigma = \Sigma_r$ the set of matrices of rank at most $r$. The norm $ \|\cdot\|_\sH =\|\cdot\|_F $ is the Frobenius norm. As explained in Remark~\ref{rem:robustnucnorm} in the next section, $D =  2\tfrac{1 + \sqrt{1+\delta} \cdot C_\Sigma}{\sqrt{r}}$ is a robustness constant for the considered problem. Lemma~\ref{lem:inst_opt} thus implies 
 \begin{equation}
 \begin{split}
  \|\vx^*-\vx_{0}\|_F & \leq C_{\Sigma} \cdot (\epsilon +\eta) +D \cdot d_{\|\cdot\|_*}(\vx_0,\Sigma_r).
  \end{split}
 \end{equation}
 In this case, $d_{\|\cdot\|_*}(\vx_0,\Sigma_r)$ is reached by zeroing the $r$ largest singular values of $\vx_0$.
 \end{itemize}
 
 In the previous examples $\sH$ is a finite-dimensional space. Theorem~\ref{th:inst_opt2} can be used to derive results in some infinite-dimensional setting as well. Such result will be shown in the case of block structured sparsity, in the next section.

\section{Application to block structured sparsity in infinite dimension}\label{sec:structured_sparsity}

Structured sparse models generalize sparse models by including constraints on the (sparse) support of the signal~\cite{Gribonval_2008,Baraniuk_2010,Eldar_2010}. Adding a notion of block sparsity has applications for simultaneous noise and signal sparse modeling and in imaging~\cite{Adcock_2015,Studer_2013, Traonmilin_2015}.  We begin by applying our general results to these models, then we discuss how the resulting framework improves and extends previous known results. 

\subsection{The finite group-sparse model in infinite dimension}

Here, we suppose that $\sH$ is separable. Thus there exists an orthonormal Hilbert basis $(\ve_i)_{i \in \sN}$. Let $G$ be a finite collection of $|G| < +\infty$ non overlapping finite {\em groups}, i.e. supports subsets $g \subset \sN$ with $|g| < \infty$ and $g \cap g' = \emptyset$, $g \neq g'$. The restriction of the vector $\vx \in \sH$ to the group $g$ is $\vx_{g} \defin \sum_{i \in g} \ls \vx,\ve_{i} \rs \ve_{i}$. A group support $H$ is a subset of $G$ and the restriction of $\vx$ to $H$ is $\vx_H \defin \sum_{g \in H} \vx_{g}$. The group support of $\vx \in \sH$, denoted $\supp(\vx)$, is the smallest $H \subset G$ such that $\vx_H=\vx$. 

Given an integer $K$, the $K$-group-sparse model is defined as
\begin{equation}\label{eq:DefKGroupSparseModel}
\Sigma_K \defin \{\vx \in \sH,\ |\supp(\vx)| \leq K\}.
\end{equation}
Considering the atoms 
\begin{equation}\label{eq:DefKGroupSparseAtoms}
\sA \defin \Sigma_1 \cap \sS(1)
\end{equation}
the corresponding  atomic norm is associated to the finite-dimensional space 
\[
\sE(\sA) = \linspan(\{\ve_{i}\}_{i \in \cup_{g \in G} g})
\]
and simply given by
\begin{equation}\label{eq:DefKGroupSparseAtomicNorm}
\|\vx\|_\sA = 
\begin{cases}
\sum_{g\in G} \|\vx_g\|_\sH,\ &\vx \in \sE(\sA);\\
+\infty,\ & \vx \notin \sE(\sA)
\end{cases}
\end{equation}
and its dual norm is
\begin{equation}
\|\vx\|_\sA^{*} = 
\max_{g\in G} \|\vx_g\|_\sH. 
\end{equation}

Theorem~\ref{th:robust_RIP_eucl} can be leveraged to show that $\|\cdot\|_\sA$ yields stable recovery under a RIP hypothesis.
To establish this, we show in the following section that for any $\vz \in \sT_\sA(\Sigma) \setminus \{0\}$ we can find decompositions $\vz = \vx+\vz-\vx$ such that : a) $\QOC(\vx,\vz)=0$ and b) $\alpha_{\Sigma}(\vx,\vz) = 1$.

Because $\|x\|_\sA = +\infty$ if $x \notin \sE(\sA)$, we can do the following proofs exclusively in the finite-dimensional space $\sE(\sA)$  where $\|\cdot\|_\sA$ matches the classical $\ell^1-\ell^2$ norm \cite{Yuan_2006} : then $\|\vx\|_\sA = \sum_{g\in G} \|\vx_g\|_2$. Yet, we can use Theorem~\ref{th:inst_opt2} to obtain robustness results in the whole space $\sH$, by appropriately defining the behaviour of the regularizer on elements of $\sH \setminus \sE(\sA)$ (Section~\ref{sec:infinite_dim}).

\subsection{Decompositions for a ``sharp'' admissible RIP constant}

We will show that the decompositions defined by the following sets $\sY_{\Sigma}(\vz,\sA)$ lead to a bound on $\delta_\Sigma(f)$ that matches the sharp bound of Cai and Zhang and extends it to the considered general setting of structured sparsity. Let
\begin{equation}\label{eq:optimal_decomposition}
 \sY_{\Sigma}(\vz,\sA)\defin \underset{\vx \in \Sigma}{\tam}  \{ \|\vx+\vz\|_\sA-\|\vx\|_\sA\} \subset \Sigma.
\end{equation}
We first characterize $\sY_{\Sigma}(\vz,\sA)$ and show that $x \in \sY_{\Sigma}(\vz,\sA)$ implies $\rho(x,z)=0$.
\begin{proposition}
\label{prop:opt_NCP_group}
 Consider $\vz \in \sE(\sA)$. The set $\sY_{\Sigma}(\vz,\sA)$ is exactly the collection of all vectors $\vx = -\vz_{H}$ where
 $H$ is some group support made of $K$-groups with largest individual Euclidean norms : $|H|=K$ and 
 \[
 \|\vz_{g}\|_{\sH} \geq \|\vz_{g'}\|_{\sH},\ \forall g \in H,\ \forall g' \notin H.
 \]
As a consequence, $\sY_{\Sigma}(\vz,\sA) \neq \emptyset$ and any $\vx \in \sY_{\Sigma}(\vz,\sA)$ satisfies
 \[
 \min_{g \in H} \|\vx_{g}\|_{\sH} \geq \max_{g' \notin H} \|\vz_{g'}\|_{\sH} = \|\vx+\vz\|_\sA^{*}.
 \]
 and $\QOC(\vx,\vz)=0$.
 \end{proposition}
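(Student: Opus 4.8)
We work entirely in the finite-dimensional space $\sE(\sA)$, where $\|\cdot\|_\sA$ is the group norm $\|\vx\|_\sA = \sum_{g\in G}\|\vx_g\|_\sH$. We have a fixed descent vector $\vz\in\sE(\sA)$, and we want to characterize the maximizers $\vx\in\Sigma_K$ of the gap $\|\vx\|_\sA - \|\vx+\vz\|_\sA$, i.e. the minimizers of $\|\vx+\vz\|_\sA - \|\vx\|_\sA$. The claimed answer is that $\vx = -\vz_H$ for a group support $H$ of size $K$ capturing the $K$ groups of largest Euclidean norm in $\vz$.

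Let me work through how I would prove this.

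**Setting up the optimization.** The natural approach is to directly compute $\|\vx+\vz\|_\sA - \|\vx\|_\sA$ for an arbitrary $\vx\in\Sigma_K$ and show it is minimized precisely by the claimed vectors. Write $H_0 = \supp(\vx)$, a group support with $|H_0|\le K$. Split the group norm into the groups in $H_0$ and those outside. On groups $g\notin H_0$ we have $\vx_g = 0$, so $\|\vx+\vz\|_\sA$ contributes $\sum_{g\notin H_0}\|\vz_g\|_\sH$ there. Using the triangle inequality in the form $\|\vx_g+\vz_g\|_\sH \ge \|\vx_g\|_\sH - \|\vz_g\|_\sH$ on groups $g\in H_0$, one obtains a lower bound

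$$\|\vx+\vz\|_\sA - \|\vx\|_\sA \;\ge\; \sum_{g\notin H_0}\|\vz_g\|_\sH - \sum_{g\in H_0}\|\vz_g\|_\sH.$$

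This lower bound depends only on which groups lie in $H_0$, and is minimized by choosing $H_0$ to be the $K$ groups with the \emph{largest} $\|\vz_g\|_\sH$ (taking $|H_0|=K$ is optimal since adding a group moves a term from the positive sum to the negative sum, always decreasing the bound). That pins down the combinatorial choice of $H$.

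**Achieving equality and identifying the minimizers.** Next I would check that equality in the triangle-inequality step forces $\vx_g$ to be a nonpositive multiple of $\vz_g$ — more precisely $\vx_g = -\vz_g$ makes $\|\vx_g+\vz_g\|_\sH = 0 = \|\vx_g\|_\sH - \|\vz_g\|_\sH$ only when $\|\vx_g\|_\sH = \|\vz_g\|_\sH$, so equality throughout forces $\vx = -\vz_H$. Substituting $\vx=-\vz_H$ indeed gives the gap $\sum_{g\notin H}\|\vz_g\|_\sH - \sum_{g\in H}\|\vz_g\|_\sH$, confirming the bound is attained. This shows $\sY_\Sigma(\vz,\sA)$ is exactly the set of $\vx=-\vz_H$ with $H$ as described, and $\sY_\Sigma(\vz,\sA)\neq\emptyset$ since such an $H$ always exists (break ties arbitrarily). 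The monotonicity property $\min_{g\in H}\|\vx_g\|_\sH \ge \max_{g'\notin H}\|\vz_{g'}\|_\sH$ is immediate from the selection rule, and the right-hand side equals $\|\vx+\vz\|_\sA^* = \max_{g'}\|(\vx+\vz)_{g'}\|_\sH = \max_{g'\notin H}\|\vz_{g'}\|_\sH$ since $\vx+\vz$ is supported off $H$.

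**The orthogonality claim and the main obstacle.** Finally, $\QOC(\vx,\vz) = \re\ls\vx,\vx+\vz\rs/\|\vx\|_\sH^2$. With $\vx=-\vz_H$ supported on $H$ and $\vx+\vz = \vz_{G\setminus H}$ supported off $H$, the two vectors live on disjoint groups, hence are orthogonal and $\re\ls\vx,\vx+\vz\rs = 0$, giving $\QOC(\vx,\vz)=0$. The part requiring the most care is the equality-analysis step: arguing rigorously that equality in $\|\vx+\vz\|_\sA - \|\vx\|_\sA \ge \sum_{g\notin H}\|\vz_g\|_\sH - \sum_{g\in H}\|\vz_g\|_\sH$ forces \emph{both} the optimal combinatorial choice of $H$ \emph{and} $\vx_g=-\vz_g$ on $H$, especially handling ties in the group norms (several valid choices of $H$) cleanly so that the characterization is exact rather than merely sufficient. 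I would treat the combinatorial optimality and the per-group equality conditions as two separate necessary conditions and verify their conjunction is also sufficient.
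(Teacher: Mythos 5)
Your strategy is the same as the paper's: split $\|\vx+\vz\|_\sA-\|\vx\|_\sA$ into per-group terms, lower bound each term by $-\|\vz_g\|_\sH$ via the (reverse) triangle inequality, observe that the resulting bound depends only on $\supp(\vx)$, and optimize over the support, which selects the $K$ groups of largest norm. The sufficiency half of your argument --- that each $\vx=-\vz_H$ attains the bound, hence $\sY_{\Sigma}(\vz,\sA)\neq\emptyset$, and that these particular vectors satisfy $\min_{g\in H}\|\vx_g\|_\sH\geq\max_{g'\notin H}\|\vz_{g'}\|_\sH=\|\vx+\vz\|_\sA^*$ and $\QOC(\vx,\vz)=0$ by disjointness of supports --- is correct and matches the paper.

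The gap is exactly at the step you flag as needing the most care, and it cannot be closed. Equality in $\|\vx_g+\vz_g\|_\sH\geq\|\vx_g\|_\sH-\|\vz_g\|_\sH$ does \emph{not} force $\vx_g=-\vz_g$: writing $\vx_g=(\vx_g+\vz_g)+(-\vz_g)$, equality is the additivity case of the triangle inequality, which in a Hilbert space means $\vx_g+\vz_g$ and $-\vz_g$ are nonnegatively collinear, i.e.\ $\vx_g=-c_g\vz_g$ for some $c_g\geq 1$, not necessarily $c_g=1$. Concretely, for any $c\geq 1$ the vector $\vx=-c\vz_H$ lies in $\Sigma_K$ and gives $\|\vx+\vz\|_\sA-\|\vx\|_\sA=(c-1)\|\vz_H\|_\sA+\|\vz_{H^c}\|_\sA-c\|\vz_H\|_\sA=\sum_{g\notin H}\|\vz_g\|_\sH-\sum_{g\in H}\|\vz_g\|_\sH$, the same minimal value; yet for $c>1$ one has $\re\ls\vx,\vx+\vz\rs=c(c-1)\|\vz_H\|_\sH^2>0$, so $\QOC(\vx,\vz)\neq 0$. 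Hence the ``exactly'' clause of the proposition, and with it the claim that \emph{every} $\vx\in\sY_{\Sigma}(\vz,\sA)$ satisfies $\QOC(\vx,\vz)=0$, is false as stated, and no equality analysis can rescue it. You are in good company: the paper's own proof only verifies that $-\vz_g$ attains each per-group minimum and tacitly treats that minimizer as unique, so it too proves only the sufficiency half. What the rest of the paper actually uses (in Proposition~\ref{prop:geom_structured} and Theorem~\ref{eq:RIPBlockGroupSparse}) is the existence of \emph{some} minimizer with $\QOC=0$ and the stated norm bounds, which both your argument and the paper's deliver; the clean fix is to define $\sY_{\Sigma}(\vz,\sA)$ directly as the set of vectors $-\vz_H$, or to restrict the argmin to its minimal-norm elements.
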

\begin{proof}
 Let $\vz \in \sE(\sA)$.  We minimize the following expression with respect to $\vx \in \Sigma_K$  :
\begin{equation}
\|\vx+\vz\|_\sA-\|\vx\|_\sA= \sum_{g \in G} (\|\vx_g+\vz_g\|_\sH-\|\vx_g\|_\sH)
\end{equation}
For a given group support $H$, we first minimize this expression over the value of $\vx$ under the constraint $\supp(\vx) = H$. Because there is no overlap, we can minimize this sum over each group separately. Denoting $\ti{\vx}_{H}$ a minimizer, each summand is lower bounded (by a triangle inequality) by $-\|\vz_{g}\|_{\sH}$, a value that is reached by setting $\ti{\vx}_{g} \defin -\vz_{g}$. Thus
\begin{equation}
\|\ti{\vx}_{H}+\vz\|_\sA-\|\ti{\vx}_H\|_\sA= \sum_{g \in H^c} \|\vz_g\|_\sH-\sum_{g \in H} \|\vz_g\|_\sH.
\end{equation}
We minimize over $H$ under the constraint $|H| \leq K$ and obtain the result. 
\end{proof}

We now give a bound on $\alpha_{\Sigma}(\vx,\vz)$ for such decompositions.  We need the following lemma which extends the sparse decomposition of polytopes from \cite{Cai_2014} to the general case where $\Sigma_K$ is made of combinations of $K$ pairwise orthogonal elements of $\sA$.

\begin{lemma}
\label{lem:CS_struct_finite}
Let $\sA$ be a set of normalized atoms and $\vu=\sum_{i=1,L} c_i a_i$ with $c_i\in \bR, a_i \in \sA$, such that the $a_i$ are pairwise orthogonal. Let $\Sigma$ be any model set containing all such combinations for $L = K$ (e.g., consider $\Sigma = \Sigma_K(\sA)$). Then 
\[
\|\vu\|_{\Sigma} \leq \tmax ( \sum_{i=1,L} |c_i| / \sqrt{K}, \tmax_{i=1,L} |c_i| \sqrt{K}).
\]
 \end{lemma}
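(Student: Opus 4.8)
The plan is to transfer the statement to the coefficient vector $c = (c_1,\dots,c_L) \in \bR^L$, where it becomes the sparse representation of a polytope, and then to pull the resulting convex decomposition back through the variational upper bound on $\|\cdot\|_\Sigma$. First I would use the orthonormality of the atoms: since the $a_i$ are pairwise orthogonal and normalized ($a_i \in \sA \subset \sS(1)$), the map $c \mapsto \sum_i c_i a_i$ is a linear isometry from $(\bR^L,\|\cdot\|_2)$ onto $\linspan\{a_i\}$. Hence $\|\vu\|_\sH = \|c\|_2$, while $\sum_i|c_i| = \|c\|_1$ and $\max_i|c_i| = \|c\|_\infty$, so the target inequality reads $\|\vu\|_\Sigma \leq \max(\|c\|_1/\sqrt K,\ \sqrt K\,\|c\|_\infty)$.

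Next I would record the convexity estimate that lets me bound $\|\vu\|_\Sigma$ from above by any convex decomposition into elements of $\Sigma$. Recall that $\|\cdot\|_\Sigma^2$ is convex (the square of the nonnegative sublinear gauge $\|\cdot\|_\Sigma$), and that for a $K$-combination $v = \sum_i d_i a_i$ its normalization is a normalized atom of $\Sigma$, so $\|v\|_\Sigma = \|v\|_\sH$. Therefore, for any $\lambda_j \geq 0$ with $\sum_j\lambda_j = 1$ and any $c_j$ that are $K$-sparse with $c = \sum_j \lambda_j c_j$, setting $v_j \defin \sum_i (c_j)_i a_i \in \Sigma$ and applying Jensen's inequality (equivalently Fact~\ref{fact:norm_formulation}) gives
\[
\|\vu\|_\Sigma^2 \leq \sum_j \lambda_j \|v_j\|_\Sigma^2 = \sum_j \lambda_j \|v_j\|_\sH^2 = \sum_j \lambda_j \|c_j\|_2^2.
\]
Thus it suffices to exhibit a convex decomposition of $c$ into $K$-sparse vectors whose average squared $\ell^2$-norm is controlled by $\max(\|c\|_1^2/K,\ K\|c\|_\infty^2)$.

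This is where I would invoke the sparse representation of a polytope of Cai and Zhang \cite{Cai_2014}. Put $\alpha \defin \max(\|c\|_\infty,\ \|c\|_1/K)$, so that $\|c\|_\infty \leq \alpha$ and $\|c\|_1 \leq K\alpha$, i.e. $c$ lies in the polytope $T(\alpha,K) \defin \{v : \|v\|_\infty \leq \alpha,\ \|v\|_1 \leq K\alpha\}$. The polytope lemma states that every point of $T(\alpha,K)$ is a convex combination $c = \sum_j \lambda_j c_j$ of $K$-sparse vectors $c_j$ with $\|c_j\|_\infty \leq \alpha$; I would prove it by induction on the number of nonzero coordinates with $0 < |c_i| < \alpha$, at each step writing $c$ as a convex combination of two points of $T(\alpha,K)$ in which one more coordinate has been driven to $0$ or to $\pm\alpha$, terminating at $K$-sparse vertices. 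Since each $c_j$ is $K$-sparse with entries of modulus at most $\alpha$, we get $\|c_j\|_2^2 \leq K\|c_j\|_\infty^2 \leq K\alpha^2$, hence $\sum_j\lambda_j\|c_j\|_2^2 \leq K\alpha^2$, and combining with the previous display,
\[
\|\vu\|_\Sigma \leq \sqrt K\,\alpha = \max\Big(\tfrac{\|c\|_1}{\sqrt K},\ \sqrt K\,\|c\|_\infty\Big) = \max\Big(\tfrac{\sum_i|c_i|}{\sqrt K},\ \sqrt K\,\max_i|c_i|\Big),
\]
which is exactly the claim.

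The main obstacle is the polytope decomposition itself, namely verifying that an arbitrary point of $T(\alpha,K)$ is a convex combination of $K$-sparse vertices with the correct $\ell^\infty$ bound; this is the genuinely combinatorial ingredient and is the part imported from \cite{Cai_2014}. By contrast, the isometric passage to coefficients and the convexity estimate are routine, and the only role played by the orthonormality hypothesis is precisely to make the coefficient map an isometry (so that $\|v_j\|_\sH = \|c_j\|_2$) and to guarantee that each $K$-sparse reconstruction $v_j$ lands in $\Sigma$; this is what lets the one-dimensional sparsity argument of Cai and Zhang extend verbatim to general pairwise orthogonal normalized atoms.
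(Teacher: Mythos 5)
Your proof is correct, but it is packaged differently from the paper's. The paper never passes to coefficient space: it re-runs Cai--Zhang's induction directly in $\sH$, by induction on $L$, with base case $L \le K$ (where $\vu \in \Sigma$, so $\|\vu\|_\Sigma = \|\vu\|_\sH$ and the bound is immediate), and an inductive step that explicitly constructs weights $\lambda_i$ and vectors $\vv_i$ (each obtained by dropping one atom from the tail of the decomposition and rescaling the remaining small coefficients to a common level $\beta$) so that $\vu = \sum_i \lambda_i \vv_i$, then applies the induction hypothesis to each $\vv_i$ and concludes by convexity of $\|\cdot\|_\Sigma$. Your route --- (i) use orthonormality to identify $\vu$ with its coefficient vector $c \in \bR^L$ isometrically, (ii) bound $\|\vu\|_\Sigma^2 \le \sum_j \lambda_j \|c_j\|_2^2$ for any convex decomposition of $c$ into $K$-sparse coefficient vectors, using convexity of $\|\cdot\|_\Sigma^2$ and the identity $\|v\|_\Sigma = \|v\|_\sH$ on $\Sigma$, (iii) invoke Cai--Zhang's sparse representation of the polytope $T(\alpha,K)$ with $\alpha = \max(\|c\|_\infty, \|c\|_1/K)$ as a black box --- proves the same statement with the same combinatorial core, but factored differently. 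What the paper's inline induction buys is self-containedness: the polytope argument is reproved in the abstract orthogonal-atoms setting rather than cited in its Euclidean form. What your factorization buys is modularity and transparency: it isolates exactly where the hypotheses enter (orthonormality only to make the coefficient map an isometry and to ensure the $K$-sparse reconstructions $v_j$ land in $\Sigma$), and it reuses the published Euclidean lemma verbatim. One caution: your parenthetical sketch of the polytope lemma's proof (driving one coordinate at a time to $0$ or $\pm\alpha$) is not quite complete as stated, because when the constraint $\|c\|_1 = K\alpha$ is active a single-coordinate perturbation exits $T(\alpha,K)$; one must then perturb two fractional coordinates in an $\ell^1$-preserving way. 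Since you explicitly defer that ingredient to \cite{Cai_2014}, this does not create a gap in your overall argument.
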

 \begin{proof}
 This proof is a direct extension of the proof of Cai \cite{Cai_2014}. We proceed by induction on $L$. 
 
 For $L \leq K$,  we have $\vu \in \Sigma$ thus 
 \begin{equation}
 \|\vu\|_{\Sigma}^2 = \|\vu\|_\sH^2 = \sum |c_i|^2  \leq  \left( \sum_{i=1,L} |c_i| \right) \left( \max_{i=1,L} |c_i|\right) \leq  \tmax \left( \sum_{i=1,L} |c_i| / \sqrt{K}, \max_{i=1,L} |c_i| \sqrt{K}\right)^{2}.
\end{equation}

Suppose now the statement is true for a given $L \geq K$. Let $\vu=\sum_{i=1}^{L+1} c_i a_i $ with all $c_i \neq 0$ and the $a_i$ pairwise orthogonal. Note $u_i=c_i a_i$. Denoting $\alpha = \tmax (\sum_{i=1}^{L+1}|c_i| / \sqrt{K}, \max_{i=1,L+1} |c_i| \sqrt{K})$, we have $\sum_{i=1}^{L+1} |c_i|\leq \alpha\sqrt{K}$ and $\tmax_{i=1,L+1} |c_i| \sqrt{K}\leq \alpha/\sqrt{K}$. Without loss of generality we (re)order the decomposition such that $|c_1| \geq |c_2| \geq \ldots \geq |c_{L+1}| > 0$. 

For $1 \leq j \leq L+1$ define $\sigma_{j} \defin \sum_{k=j}^{L+1} |c_k|$. By definition, for $j=1$ we have $\sigma_{j} = \sigma_{1} =  \sum_{i=1}^{L+1} |c_i|\leq \alpha \sqrt{K} = \frac{K}{\sqrt{K}}\alpha $. Since $L \geq K$, we obtain $ \sigma_{1}  \leq (L+1-j)\alpha/\sqrt{K}$.
For $j=L+1$ we have $\sigma_{j} = |c_{L+1}| > 0 = (L+1-j)\alpha/\sqrt{K}$. Considering  the largest $1 \leq j^{*} \leq L$ such that $ \sigma_{j}  \leq( L+1-j) \frac{\alpha}{\sqrt{K}}$ for $1 \leq j \leq j^{*}$: we have  
\begin{equation}\label{dem:geom_struct1}
\begin{split}
  \sum_{k=j^{*}}^{L+1} |c_k| \leq( L+1-j^{*}) \frac{\alpha}{\sqrt{K}} \\
  \sum_{k=j^{*}+1}^{L+1} |c_k| > (L-j^{*}) \frac{\alpha}{\sqrt{K}} .
\end{split} 
\end{equation}
Define $\beta \defin \frac{1}{L+1-j^{*}}\sum_{k=j^{*}}^{L+1} |c_k|$. With inequations~\eqref{dem:geom_struct1}, we have $\beta \leq \alpha/\sqrt{K}$ and
\begin{equation}\label{dem:geom_struct2}
\begin{split}
  \max_{j^{*} \leq i \leq L+1} |c_i| = |c_{j^{*}}|& 
  = \sum_{k=j^{*}}^{L+1} |c_k| -\sum_{k=j^{*}+1}^{L+1} |c_k|\\
  &< (L+1-j^*)\beta -(L-j^{*}) \frac{\alpha}{\sqrt{K}} 
  = \beta +  (L-j^{*})(\beta-\alpha/\sqrt{K}) \\
  & \leq \beta \leq  \frac{\alpha}{\sqrt{K}} 
  \end{split}
\end{equation}
For any index $j^{*} \leq i \leq L+1$, defining $\lambda_i \defin \frac{\beta -  |c_i| }{\beta}$ we thus have $\lambda_{i} > 0$, and
\begin{equation}
 \sum_{i=j^{*}}^{L+1} \lambda_i = L+2-j^* -  \frac{\sum_{i=j^{*}}^{L+1}|c_i|}{\beta} = L+2-j^* -  \frac{(L+1-j^*) \beta}{\beta}=1.
\end{equation}
Defining further
\begin{equation}\label{eq:DefVi}
\vv_i \defin \sum_{k=1}^{j^{*}-1} \vu_k + \beta \sum_{k=j^{*}; k \neq i}^{L+1} \frac{\vu_k}{|c_k|},
\end{equation}
 we have
\begin{equation}\label{dem:geom_struct3}
\begin{split}
\sum_{i=j^{*}}^{L+1} \lambda_i \vv_i &= 
\sum_{k=1}^{j^{*}-1} \vu_k
+  
\sum_{i=j^{*}}^{L+1}\left(\beta -  |c_i| \right) \sum_{k=j^{*}; k \neq i}^{L+1} \frac{\vu_k}{|c_k|}\\
  &
  = 
\sum_{k=1}^{j^{*}-1} \vu_k
  +  
 \sum_{i=j^{*}}^{L+1}\left(\beta -  |c_i| \right)   
 \left(-\frac{\vu_i}{|c_i|}
  + 
  \sum_{k=j^{*}}^{L+1} \frac{\vu_k}{|c_k|} \right)\\
  &= 
\sum_{k=1}^{j^{*}-1} \vu_k
  +  
  \sum_{i=j^{*}}^{L+1}\vu_i 
  +
  \sum_{i=j^{*}}^{L+1} \left(-\beta \frac{\vu_i}{|c_i|}+ \left(\beta -  |c_i| \right) \sum_{k=j^{*}}^{L+1} \frac{\vu_k}{|c_k|} \right).\\
\end{split} 
\end{equation}
We calculate the last term. Denoting $w =\sum_{i=j^{*}}^{L+1} \frac{\vu_i}{|c_i|}$, we have
\begin{equation}
\begin{split}
 \sum_{i=j^{*}}^{L+1} \left(-\beta \frac{\vu_i}{|c_i|}+ \left(\beta -  |c_i| \right) \sum_{k=j^{*}}^{L+1} \frac{\vu_k}{|c_k|} \right) =-\beta w + (L+2 -j^*)\beta w - (L+1-j^*)\beta w = 0.\\
 \end{split}
\end{equation}
Along with equation~\eqref{dem:geom_struct3}, this yields
\begin{equation}\label{eq:uconvvi}
 \sum_{i=j^{*}}^{L+1} \lambda_i \vv_i =  u.
\end{equation}
Moreover, 
\begin{equation}
 \sum_{k=1}^{j^{*}-1} |c_k| +\beta  \sum_{k=j^{*}; k \neq i}^{L+1}\frac{|c_k|}{|c_k|} = \sum_{k=1}^{j^{*}-1} |c_k| + \sum_{k=j^{*}}^{L+1} |c_k| \leq \alpha\sqrt{K}
\end{equation}
and
\begin{equation}
\max(\max_{1 \leq k \leq j^{*}-1} |c_k|,\beta) 
 \leq \max(\alpha/\sqrt{K},\beta)
 \stackrel{\eqref{dem:geom_struct2}}{\leq} \alpha/\sqrt{K}
\end{equation}
Finally, we have shown that $\vv_{i} = \sum_{k=1}^L c'_k a_k'$, with pairwise orthogonal $a_k'$ and $\sum_{k=1}^L |c'_k| \leq  \alpha\sqrt{K}$ and $\tmax_{k=1,L} |c'_k| \leq   \alpha/\sqrt{K}$.  We can use the induction hypothesis to get $\|v_i\|_{\Sigma}\leq \alpha$ and use convexity to obtain from~\eqref{eq:uconvvi} :
\begin{equation}
\|\vu\|_{\Sigma}\leq \sum_{i=j^{*}}^{L+1} \lambda_i \|v_i\|_{\Sigma} \leq \sum_{i=j^{*}}^{L+1} \lambda_i \alpha \leq  \alpha.
\end{equation}
\end{proof}

The above lemma has a simple and practical consequence for group-sparsity. 
\begin{corollary}
\label{lem:CS_struct}
Let $\sA$ be the set of normalized 1-group sparse vectors. Then for all $\vu \in \sE(\sA)$ we have 
\begin{equation}
\|\vu\|_{\Sigma_{K}} \leq  \max\left(\frac{\|\vu\|_{\sA}}{\sqrt{K}},\sqrt{K} \|\vu\|_{\sA}^{*}\right).
\end{equation}
 \end{corollary}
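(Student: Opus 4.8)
The plan is to reduce this corollary to the already-proven Lemma~\ref{lem:CS_struct_finite} by writing the given vector $\vu$ in exactly the form that lemma requires, namely as a linear combination of \emph{pairwise orthogonal normalized atoms}, and then identifying the two quantities appearing in the lemma's bound with $\|\vu\|_\sA$ and $\|\vu\|_\sA^*$.

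First I would decompose $\vu \in \sE(\sA)$ along the groups. Since $\sE(\sA) = \linspan(\{\ve_i\}_{i \in \cup_{g\in G}g})$, we have $\vu = \sum_{g \in G}\vu_g$, where $\vu_g$ is the restriction of $\vu$ to the group $g$. Discarding the groups on which $\vu$ vanishes (which also covers the trivial case $\vu = 0$), and setting $c_g \defin \|\vu_g\|_\sH > 0$ and $a_g \defin \vu_g / c_g$ for the remaining groups, I obtain $\vu = \sum_g c_g a_g$ with each $a_g \in \Sigma_1 \cap \sS(1) = \sA$ a normalized $1$-group-sparse atom.

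The key structural observation, which is what makes the lemma applicable, is that the atoms $a_g$ are \emph{pairwise orthogonal}: distinct groups $g \neq g'$ are disjoint ($g \cap g' = \emptyset$), so $\vu_g$ and $\vu_{g'}$ are supported on orthogonal coordinate blocks of the Hilbert basis, giving $\ls a_g, a_{g'}\rs = 0$. Moreover $\Sigma_K$ contains every combination of $K$ pairwise orthogonal elements of $\sA$ (these are precisely the $K$-group-sparse vectors), so the hypothesis of Lemma~\ref{lem:CS_struct_finite} with $\Sigma = \Sigma_K$ is satisfied, regardless of how many groups $\vu$ occupies.

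Applying Lemma~\ref{lem:CS_struct_finite} then yields $\|\vu\|_{\Sigma_K} \leq \tmax\big(\sum_g |c_g|/\sqrt{K},\ \sqrt{K}\,\tmax_g |c_g|\big)$, and I would conclude by matching the two terms: $\sum_g |c_g| = \sum_{g\in G}\|\vu_g\|_\sH = \|\vu\|_\sA$ and $\tmax_g |c_g| = \tmax_{g\in G}\|\vu_g\|_\sH = \|\vu\|_\sA^*$, as recorded in~\eqref{eq:DefKGroupSparseAtomicNorm} and the dual-norm formula. There is essentially no obstacle beyond bookkeeping here, since the entire combinatorial substance has been absorbed into the preceding lemma; the only points demanding a moment of care are confirming the orthogonality of the $a_g$ (immediate from non-overlap of the groups) and placing the $\sum|c_i|$ and $\tmax|c_i|$ terms against the correct norm.
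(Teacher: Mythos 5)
Your proposal is correct and follows essentially the same route as the paper: the paper's proof simply takes an optimal atomic decomposition $\vu = \sum_i c_i a_i$ with $\sum_i |c_i| = \|\vu\|_\sA$ and pairwise orthogonal atoms (which is exactly the canonical group decomposition you construct) and applies Lemma~\ref{lem:CS_struct_finite}. Your write-up merely makes explicit the details the paper leaves implicit, namely the orthogonality of the group components and the identification of $\sum_g |c_g|$ and $\max_g |c_g|$ with $\|\vu\|_\sA$ and $\|\vu\|_\sA^*$.
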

 \begin{proof}
 Let $u = \sum_i c_i a_i$  such that $\sum |c_i| = \|u\|_\sA$ and $\ls a_i,a_j\rs=0$. Apply Lemma~\ref{lem:CS_struct_finite}
 \end{proof}

With these results we are now equipped to bound $\alpha(\vx,\vz)$ for $\vx \in \sY_{\Sigma}(\vz,\sA)$ when $\vz$ is a descent vector for the group-norm $\|\cdot\|_{\sA}$ with respect to $\Sigma_{K}$. 

\begin{proposition}
\label{prop:geom_structured}
For any $\vz \in \sT_\sA(\Sigma_{K})\setminus \{0\}, x \in \sY_{\Sigma}(\vz,\sA)$, 
\begin{equation}
\alpha(x,z) \leq 1
\end{equation}
\end{proposition}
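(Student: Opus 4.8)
The plan is to reduce the claim $\alpha(\vx,\vz) = \|\vx+\vz\|_\Sigma^2/\|\vx\|_\Sigma^2 \leq 1$ to the single inequality $\|\vx+\vz\|_\Sigma \leq \|\vx\|_\sH$, and then to derive this from Corollary~\ref{lem:CS_struct} applied to the vector $\vx+\vz$. First I would invoke Proposition~\ref{prop:opt_NCP_group}: since $\vx \in \sY_\Sigma(\vz,\sA)$, we may write $\vx = -\vz_H$, where $H$ is a group support with $|H|=K$ collecting the $K$ groups of largest Euclidean norm. Consequently $\vx+\vz = \vz_{H^c}$, and because $\vx \in \Sigma$ we have $\|\vx\|_\Sigma = \|\vx\|_\sH = \|\vz_H\|_\sH$. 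Thus the proposition amounts to showing $\|\vz_{H^c}\|_\Sigma \leq \|\vz_H\|_\sH$.

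Next I apply Corollary~\ref{lem:CS_struct} to $\vu \defin \vx+\vz = \vz_{H^c} \in \sE(\sA)$, obtaining
\[
\|\vz_{H^c}\|_\Sigma \leq \max\left(\frac{\|\vz_{H^c}\|_\sA}{\sqrt{K}},\ \sqrt{K}\,\|\vz_{H^c}\|_\sA^*\right),
\]
after which it suffices to bound each of the two arguments of the maximum by $\|\vz_H\|_\sH$. For the first argument I use that $\vz$ is a descent vector: since $\vx$ minimizes $\|\vx+\vz\|_\sA-\|\vx\|_\sA$ over $\Sigma_K$ and $\vz \in \sT_\sA(\Sigma_K)$, this minimal value is nonpositive, so the explicit expression from Proposition~\ref{prop:opt_NCP_group} gives $\|\vz_{H^c}\|_\sA = \sum_{g\in H^c}\|\vz_g\|_\sH \leq \sum_{g\in H}\|\vz_g\|_\sH$. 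A Cauchy--Schwarz inequality over the $K$ pairwise orthogonal groups of $H$ then yields $\sum_{g\in H}\|\vz_g\|_\sH \leq \sqrt{K}\,\|\vz_H\|_\sH$, whence $\|\vz_{H^c}\|_\sA/\sqrt{K} \leq \|\vz_H\|_\sH$.

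For the second argument I use the \emph{largest groups} structure of $H$. Every group outside $H$ has norm at most the smallest group norm inside $H$, and orthogonality of the $K$ groups of $H$ gives $\|\vz_H\|_\sH^2 = \sum_{g\in H}\|\vz_g\|_\sH^2 \geq K\,(\min_{g\in H}\|\vz_g\|_\sH)^2$. Hence
\[
\|\vz_{H^c}\|_\sA^* = \max_{g\notin H}\|\vz_g\|_\sH \leq \min_{g\in H}\|\vz_g\|_\sH \leq \frac{\|\vz_H\|_\sH}{\sqrt{K}},
\]
so that $\sqrt{K}\,\|\vz_{H^c}\|_\sA^* \leq \|\vz_H\|_\sH$. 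Combining the two bounds with the displayed application of Corollary~\ref{lem:CS_struct} gives $\|\vx+\vz\|_\Sigma = \|\vz_{H^c}\|_\Sigma \leq \|\vz_H\|_\sH = \|\vx\|_\Sigma$, i.e. $\alpha(\vx,\vz)\leq 1$.

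The only delicate points are bookkeeping: correctly identifying $\vx+\vz$ with $\vz_{H^c}$ and $\|\vx\|_\sH$ with $\|\vz_H\|_\sH$, and recognizing that the two terms of the maximum in Corollary~\ref{lem:CS_struct} are controlled \emph{respectively} by the descent inequality (first term) and by the maximality property of $H$ (second term). No genuine obstacle remains, since the substantive content has been front-loaded into Lemma~\ref{lem:CS_struct_finite}/Corollary~\ref{lem:CS_struct} and Proposition~\ref{prop:opt_NCP_group}.
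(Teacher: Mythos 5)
Your proof is correct and follows essentially the same route as the paper's: both use Proposition~\ref{prop:opt_NCP_group} to identify $\vx=-\vz_H$, the descent (nonpositive minimum) property to control $\|\vx+\vz\|_{\sA}$, the max-group property of $H$ to control $\|\vx+\vz\|_{\sA}^{*}$, and Corollary~\ref{lem:CS_struct} plus Cauchy--Schwarz to conclude $\|\vx+\vz\|_{\Sigma}\leq\|\vx\|_{\sH}$. The only difference is bookkeeping (you bound each term of the maximum by $\|\vz_H\|_{\sH}$ directly, while the paper bounds both by $\|\vx\|_{\sA}/\sqrt{K}$ first), which is immaterial.
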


\begin{proof}
Let $\vz \in \sT_\sA(\Sigma_{K})\setminus \{0\}$ and $\vx \in \sY_{\Sigma}(\vz,\sA)$. By definition of $\sT_\sA(\Sigma_{K})$ there is $\ti{\vx} \in \Sigma_{K}$ such that $\|\ti{\vx}+\vz\|_{\sA} \leq \|\ti{\vx}\|_{\sA}$, hence by definition of $\sY_{\Sigma}(\vz,\sA)$ we have $\|\vx+\vz\|_{\sA}-\|\vx\|_{\sA} = \min_{\ti{\vx} \in \Sigma_{K}} (\|\ti{\vx}+\vz\|_{\sA} - \|\ti{\vx}\|_{\sA}) \leq 0$.
Let $H$ be the group support of $\vx$. By Proposition~\ref{prop:opt_NCP_group}, 
\[
K\|\vx+\vz\|_\sA^* \leq  K \min_{g \in H} \|\vx_g\|_\sH \leq \sum_{g \in H} \|\vx_{g}\|_{\sH} = \|\vx\|_{\sA}.
\]
Combining the above estimate with $\|\vx+\vz\|_\sA \leq \|\vx\|_\sA$ 
and Lemma~\ref{lem:CS_struct} yields
\[
 \|x+z\|_\Sigma^2  \leq \max(\|\vx+\vz\|_{\sA}/\sqrt{K}, \|\vx+\vz\|_{\sA}^{*} \sqrt{K})^{2} \leq \|x\|_\sA^{2}/K.
\]
Since $\vx$ is $K$-group-sparse, we have
\[
\|\vx\|_{\sA} = \sum_{i=1}^{K} \|\vx_{g_{i}}\|_{\sH} \leq \sqrt{K} \sqrt{\textstyle \sum_{i=1}^{K}\|\vx_{g_{i}}\|_{\sH}^{2}} = \sqrt{K} \|\vx\|_{\sH}
\]
hence $\|x+z\|_\Sigma^2 \leq \|x\|_\sH^2$.
\end{proof}

To summarize, when $\vz \in  \sT_\sA(\Sigma_{K})\setminus \{0\}$ we can exhibit $\vx \in \Sigma_{K}$ such that $\rho(\vx,\vz) = 0$ and $\alpha_{\Sigma_{K}}(\vx,\vz) \leq 1$. It follows that
\begin{equation}
\alpha_{\Sigma_{K}}(f) \defin \sup_{\vz \in  \sT_\sA(\Sigma_{K})\setminus \{0\}} \inf_{\vx \in \Sigma_{K}:\rho(\vx,\vz) = 0} \alpha_{\Sigma_K}(\vx,\vz) \leq 1.
\end{equation}
and we obtain our main theorem for the $K$-group-sparse model.
\begin{theorem}
\label{eq:RIPKGroupSparse}
For the $K$-group-sparse model $\Sigma_{K} \subset \sH$ defined in~\eqref{eq:DefKGroupSparseModel}, and the atomic norm $f(\cdot) = \|\cdot\|_{\sA}$ defined in~\eqref{eq:DefKGroupSparseAtomicNorm}, we have
\[
\delta_{\Sigma_K}(\|\cdot\|_{\sA}) \geq \frac{1}{\sqrt{2}}.
\]
\end{theorem}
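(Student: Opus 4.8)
The plan is to invoke the general recipe stated at the end of Section~\ref{sec:definitions}: if for every descent vector $\vz \in \sT_f(\Sigma_K)\setminus\{0\}$ one can exhibit an $\vx \in \Sigma_K$ with $\QOC(\vx,\vz)=0$ and $\alpha_{\Sigma_K}(\vx,\vz) \leq \alpha_{\Sigma_K}(f)$, then $\delta_{\Sigma_K}(f) \geq 1/\sqrt{1+\alpha_{\Sigma_K}(f)}$. The substantive work for this is already done: Proposition~\ref{prop:opt_NCP_group} shows that the set $\sY_{\Sigma}(\vz,\sA)$ of best decompositions is nonempty and that each of its elements $\vx$ satisfies $\QOC(\vx,\vz)=0$, while Proposition~\ref{prop:geom_structured} shows that such an $\vx$ also satisfies $\alpha_{\Sigma_K}(\vx,\vz) \leq 1$. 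Hence the constant to plug in is $\alpha_{\Sigma_K}(f) \leq 1$, and the target $1/\sqrt{1+1} = 1/\sqrt{2}$ is exactly what should come out.

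First I would note that $\Sigma_K$ is homogeneous, since multiplying any $K$-group-sparse vector by any real scalar preserves its group support; thus $\Sigma_K = -\Sigma_K$ is a UoS, and the relevant quantity is $\delta^{\mathrm{UoS}}_\Sigma(\vx,\vz)$ from~\eqref{eq:DefDeltaUoSMain} rather than the cone branch. Next I would specialize that expression at $\QOC(\vx,\vz)=0$, i.e. at $\re \ls \vx,\vz \rs = -\|\vx\|_\sH^2$: the numerator becomes $\|\vx\|_\sH^2$ and the radicand in the denominator becomes $\|\vx+\vz\|_\Sigma^2 + \|\vx\|_\sH^2$. Writing $\|\vx+\vz\|_\Sigma^2 = \alpha\,\|\vx\|_\sH^2$ (recall $\|\vx\|_\Sigma = \|\vx\|_\sH$ for $\vx \in \Sigma$), this collapses to
\[
\delta^{\mathrm{UoS}}_\Sigma(\vx,\vz) = \frac{1}{\sqrt{1+\alpha}}.
\]
Since $\alpha \mapsto 1/\sqrt{1+\alpha}$ is decreasing and $\alpha = \alpha_{\Sigma_K}(\vx,\vz) \leq 1$, this gives $\delta^{\mathrm{UoS}}_\Sigma(\vx,\vz) \geq 1/\sqrt{2}$ for the chosen $\vx \in \sY_{\Sigma}(\vz,\sA)$.

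To finish, I would assemble the two extremizations. Because $\delta_\Sigma(\vz) = \sup_{\vx \in \Sigma} \delta_\Sigma(\vx,\vz)$ dominates the value at this one particular $\vx$, we obtain $\delta_\Sigma(\vz) \geq 1/\sqrt{2}$ for every $\vz \in \sT_\sA(\Sigma_K)\setminus\{0\}$; taking the infimum over $\vz$ then yields $\delta_{\Sigma_K}(\|\cdot\|_\sA) = \inf_{\vz} \delta_\Sigma(\vz) \geq 1/\sqrt{2}$. The only point demanding a little care is this asymmetry: the existence of a \emph{single} good decomposition $\vx$ for each $\vz$ suffices, since $\delta_\Sigma(\vz)$ is a supremum over $\vx$ while $\delta_{\Sigma_K}(f)$ is an infimum over $\vz$.

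The genuinely hard step — bounding $\alpha$ by $1$ — is not part of this argument; it was carried out by the polytope-decomposition Lemma~\ref{lem:CS_struct_finite} (the extension of Cai--Zhang) and its Corollary~\ref{lem:CS_struct}. What remains here is essentially bookkeeping: correctly identifying $\Sigma_K$ as a UoS so the simpler $\delta^{\mathrm{UoS}}$ branch applies, and verifying the $\QOC=0$ simplification together with the monotonicity in $\alpha$.
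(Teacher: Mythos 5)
Your proof is correct and takes essentially the same route as the paper: the paper's own argument for this theorem is exactly the combination of Proposition~\ref{prop:opt_NCP_group} (giving $\QOC(\vx,\vz)=0$ for some $\vx\in\sY_{\Sigma}(\vz,\sA)\neq\emptyset$) and Proposition~\ref{prop:geom_structured} (giving $\alpha_{\Sigma_K}(\vx,\vz)\leq 1$) with the recipe $\delta_\Sigma(f)\geq 1/\sqrt{1+\alpha_\Sigma(f)}$ from Section~\ref{sec:definitions}. Your extra bookkeeping --- checking that $\Sigma_K$ is homogeneous so the $\delta^{\mathrm{UoS}}$ branch of~\eqref{eq:DefDeltaUoSMain} applies (where the cone branch at $\QOC=0$ would only give $2/(2+\alpha)$), and collapsing that formula to $1/\sqrt{1+\alpha}$ --- merely makes explicit what the paper delegates to that recipe, and is accurate.
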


\subsection{Extension to block structured sparsity}\label{sec:block_struct_sparse}

We now show that we can extend the above results to \emph{block structured sparsity}. Consider a collection of $J$ finite-dimensional orthogonal spaces $E_j \subset \sH$ each equipped with a $K_j$-group-sparse model $\Sigma_j$ as defined in~\eqref{eq:DefKGroupSparseModel} (each with its set of groups $G_j$). For each $j$, we have the associated structured norm $\|\cdot\|_{\sA_j}$ as defined in~\eqref{eq:DefKGroupSparseAtomicNorm}. Since the subspaces are orthogonal, there is a natural isomorphism between their direct sum and their Cartesian product. It is simpler to work with the latter, and the model 
\begin{equation}\label{eq:DefBlockStructuredSparseModel}
\Sigma \defin \left\{\vx \in \sH, \vx = \sum_{j=1}^{J} \vx_{j}, \vx_{j} \in \Sigma_{j}\right\}
\end{equation}
is identified to the Cartesian product of the models $\Sigma_1 \times \Sigma_2 \times \ldots \times \Sigma_J$.  A natural regularizer for this \emph{block structured sparsity} model is defined as follows in $E_{1} \times \ldots E_{J}$ : 
\begin{equation}\label{eq:DefWeightedBlockStructuredNorm}
f_w: (\vx_1, \ldots \vx_J) \mapsto w_1\|\vx_1\|_{\sA_1}+\ldots+w_J\|\vx_J\|_{\sA_J}
\end{equation}
with weights $w_j>0$. We show that for $w_{opt,i}= 1/\sqrt{K_i}$, the admissible RIP constant $\delta_\Sigma(f_{w_{opt}})$ does not depend on the ratios $K_i/K_j$ contrary to the result for $w_i=1$ and simple sparsity from \cite{Bastounis_2015}. 

First we need to characterize the atomic norm $\|\cdot\|_\Sigma$ for this model.

\begin{lemma}[The atomic norm $\|\cdot\|_\Sigma$ for Cartesian products of models]
Consider $J$ spaces $\sH_j$ and cone models $\Sigma_j \subset \sH_{j}$, and their Cartesian product $\Sigma =\Sigma_1 \times \Sigma_2 \ldots \times \Sigma_{J} \subset \sH_1 \times \sH_2 \ldots \sH_{J} = \sH$. The space $\sH$ is naturally equipped with the inner product: 
\begin{equation}
\ls (\vx_1,\ldots,\vx_{J}), (\vy_1,\ldots,\vy_{J}) \rs  =\sum_{j} \ls \vx_j, \vy_j \rs_{\sH_j}.
\end{equation}

We have the characterization
\begin{equation}
\|(\vx_1,\ldots,\vx_J)\|_\Sigma^2 = \sum_j \|\vx_j\|_{\Sigma_j}^2.
 \end{equation}
\end{lemma}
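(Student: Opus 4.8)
The plan is to express $\|\cdot\|_\Sigma$ and each $\|\cdot\|_{\Sigma_j}$ through the un-normalized decomposition characterization of Fact~\ref{fact:norm_formulation}. Since each $\Sigma_j$ is a cone, $\bRp\sA_j=\Sigma_j$ with $\sA_j\defin\Sigma_j\cap\sS(1)$, so the auxiliary quantity of that fact reads
\begin{equation}
\big(\|\vy\|_{\Sigma_j}'\big)^2=\inf\Big\{\textstyle\sum_i\lambda_i\|\vu_i\|_{\sH_j}^2:\ \lambda_i\in\bRp,\ \sum_i\lambda_i=1,\ \vu_i\in\Sigma_j,\ \vy=\sum_i\lambda_i\vu_i\Big\},
\end{equation}
and likewise for $\Sigma$ with atoms $\sA\defin\Sigma\cap\sS(1)$, where $\|(\vu_1,\ldots,\vu_J)\|_\sH^2=\sum_j\|\vu_j\|_{\sH_j}^2$. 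I will establish the two inequalities between $\|\vx\|_\Sigma^2$ and $\sum_j\|\vx_j\|_{\Sigma_j}^2$ at the level of these infima, then transfer to $\|\cdot\|_\Sigma$ via the limiting statement of Fact~\ref{fact:norm_formulation}.

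For the lower bound, I would start from any decomposition $\ti\vx=\sum_i\lambda_i\vu_i$ entering the infimum that defines $\|\ti\vx\|_\Sigma'$, writing $\vu_i=(\vu_{i,1},\ldots,\vu_{i,J})$ with $\vu_{i,j}\in\Sigma_j$. Projecting onto the $j$-th factor exhibits $\ti\vx_j=\sum_i\lambda_i\vu_{i,j}$ as an admissible decomposition for $\|\ti\vx_j\|_{\Sigma_j}$, so $\|\ti\vx_j\|_{\Sigma_j}^2\leq\sum_i\lambda_i\|\vu_{i,j}\|_{\sH_j}^2$; summing over $j$ and using $\|\vu_i\|_\sH^2=\sum_j\|\vu_{i,j}\|_{\sH_j}^2$ gives $\sum_j\|\ti\vx_j\|_{\Sigma_j}^2\leq\sum_i\lambda_i\|\vu_i\|_\sH^2$. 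Taking the infimum over the decompositions of $\ti\vx$ yields $\sum_j\|\ti\vx_j\|_{\Sigma_j}^2\leq\big(\|\ti\vx\|_\Sigma'\big)^2$. Letting $\ti\vx\to\vx$, the $\geq$ part of Fact~\ref{fact:norm_formulation} together with lower semicontinuity of each $\|\cdot\|_{\Sigma_j}$ (Table~\ref{tab:norm_prop}) then gives $\sum_j\|\vx_j\|_{\Sigma_j}^2\leq\|\vx\|_\Sigma^2$.

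For the upper bound, the key device is a \emph{product coupling} of near-optimal decompositions of the factors. Fix $\epsilon>0$; for each $j$ choose $\vx_j=\sum_i\lambda_{i,j}\vu_{i,j}$ with $\vu_{i,j}\in\Sigma_j$, $\sum_i\lambda_{i,j}=1$, and cost $\sum_i\lambda_{i,j}\|\vu_{i,j}\|_{\sH_j}^2$ within $\epsilon$ of $\|\vx_j\|_{\Sigma_j}^2$. Over the product index set $I=(i_1,\ldots,i_J)$ I form the weights $\mu_I=\prod_j\lambda_{i_j,j}$ and the atoms $\vu_I=(\vu_{i_1,1},\ldots,\vu_{i_J,J})\in\Sigma$. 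One checks $\sum_I\mu_I=1$, that the $j$-th component of $\sum_I\mu_I\vu_I$ equals $\big(\sum_{i_j}\lambda_{i_j,j}\vu_{i_j,j}\big)\prod_{k\neq j}\big(\sum_{i_k}\lambda_{i_k,k}\big)=\vx_j$ so that the coupling reconstructs $\vx$, and that the cost separates as $\sum_I\mu_I\|\vu_I\|_\sH^2=\sum_j\sum_i\lambda_{i,j}\|\vu_{i,j}\|_{\sH_j}^2\leq\sum_j\|\vx_j\|_{\Sigma_j}^2+J\epsilon$. This witnesses $\big(\|\vx\|_\Sigma'\big)^2\leq\sum_j\|\vx_j\|_{\Sigma_j}^2+J\epsilon$; combined with $\|\vx\|_\Sigma\leq\|\vx\|_\Sigma'$ and $\epsilon\to0$ it gives $\|\vx\|_\Sigma^2\leq\sum_j\|\vx_j\|_{\Sigma_j}^2$.

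The only real obstacle is closure: by Fact~\ref{fact:norm_formulation}, $\|\cdot\|_\Sigma$ coincides with the infimum form $\|\cdot\|_\Sigma'$ only when $\tconv(\sA)$ is closed, so in general $\vx$ and the $\vx_j$ may admit no exact finite decomposition and the product coupling must instead be run on approximants $\ti\vx_j\in t_j\tconv(\sA_j)$ supplied by the limiting statement of that fact. Reconciling the infimum forms with the actual gauges then rests on lower semicontinuity---of each $\|\cdot\|_{\Sigma_j}$ for the lower bound, and of $\|\cdot\|_\Sigma$ for the upper bound (Table~\ref{tab:norm_prop}). I expect the bookkeeping of these limits to be the most delicate point, but it is harmless in the block-sparsity application of Section~\ref{sec:block_struct_sparse}: there each $E_j$ is finite dimensional, so $\sA_j$ is compact and $\tconv(\sA_j)$, $\tconv(\sA)$ are closed, whence $\|\cdot\|_{\Sigma_j}=\|\cdot\|_{\Sigma_j}'$ exactly and the two clean one-line inequalities above suffice.
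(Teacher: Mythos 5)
Your proof is correct. The lower-bound half is exactly the paper's argument: both of you project an admissible joint decomposition onto each factor, sum the resulting factor costs, and pass to the infimum. Where you genuinely depart from the paper is the upper bound. The paper proves it for $J=2$ (then induction): after renumbering so that the two index sets are nested, it keeps the weights $\gamma_k=\lambda_k$ of the first decomposition and rescales the second-factor atoms to $v_k'=\sqrt{\mu_k/\lambda_k}\,v_k$, which makes the cost $\sum_k\gamma_k\left(\|u_k'\|_\sH^2+\|v_k'\|_\sH^2\right)$ equal to $\sum_i\lambda_i\|u_i\|_\sH^2+\sum_j\mu_j\|v_j\|_\sH^2$. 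With that choice, however, $\gamma_k v_k'=\sqrt{\lambda_k\mu_k}\,v_k$, so the reconstruction constraint $\sum_k\gamma_k(u_k',v_k')=(x_1,x_2)$ demanded of admissible decompositions in Fact~\ref{fact:norm_formulation} is never checked in the paper and actually fails unless $\lambda_k=\mu_k$; one needs either a common refinement of the two weight sequences or precisely your product coupling to repair this step. Your tensor-product weights $\mu_I=\prod_j\lambda_{i_j,j}$ with atoms $u_I=(u_{i_1,1},\ldots,u_{i_J,J})$ satisfy all three requirements simultaneously --- normalization, exact reconstruction of every component, and additive cost --- and treat arbitrary $J$ in one stroke with no induction; the only price is the larger product index set and the $\epsilon$-bookkeeping. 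On the closure issue the two treatments agree in substance: the paper assumes all convex hulls closed ``to keep the proof simple,'' whereas you spell out the approximation and lower-semicontinuity argument (via Table~\ref{tab:norm_prop}) needed in general and observe that it is vacuous in the finite-dimensional block-sparsity application of Section~\ref{sec:block_struct_sparse}. In short, your route proves the same identity but is the more complete one: the coupling you use is the verification the paper's merging step implicitly relies on.
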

\begin{proof}
Proving the result for $J=2$ is enough to obtain it by induction for any $J > 2$. We use  Fact~\ref{fact:norm_formulation}, assuming that all convex hulls are closed to keep the proof simple. First we get a lower bound
\begin{eqnarray*}
% \begin{split}
  \|(x_1,x_2)\|_\Sigma^2  &=& \inf \{ \sum \lambda_i \|(\vu_i,\vv_i)\|_\sH^2:  \lambda_i \in \bRp, \sum \lambda_i =1, 
 (\vu_i,\vv_i) \in \Sigma , (x_1,x_2)=\sum  \lambda_i (\vu_i,\vv_i)\}\\
 &=& \inf \{ \sum \lambda_i \|\vu_i \|_\sH^2 + \sum \lambda_i\|\vv_i\|_\sH^2:  \lambda_i \in \bRp, \sum \lambda_i =1, 
 (\vu_i,\vv_i) \in \Sigma , (x_1,x_2)=\sum  \lambda_i (\vu_i,\vv_i) \}\\
 &\geq&\inf \{ \sum \lambda_i \|\vu_i \|_\sH^2 :  \lambda_i \in \bRp, \sum \lambda_i =1, 
 \vu_i \in \Sigma_1 , x_1=\sum  \lambda_i \vu_i\}\\
 &&+ \inf \{ \sum \lambda_i \|\vv_i \|_\sH^2 :  \lambda_i \in \bRp, \sum \lambda_i =1, 
 \vv_i \in \Sigma_2 , x_2=\sum  \lambda_i \vv_i\}\\
 &=& \|x_1\|_{\Sigma_1}^2 +\|x_2\|_{\Sigma_2}^2.
% \end{split}
\end{eqnarray*}
To prove the converse bound, consider $(u_i)_{i \in I} \in \Sigma_1, v_{j \in J} \in \Sigma_2$, $\lambda_i>0, \mu_j>0$ such that $\sum \lambda_i=1$,  $\sum \mu_j=1$. We show that there exist $\gamma_k,u_k'\in\Sigma_1,v_k'\in \Sigma_2$ such that $\sum_{k} \gamma_{k} = 1$ and 
 \begin{equation*}
  \sum \gamma_k \left(\|u_k'\|_\sH^2+ \|v_k'\|_\sH^2 \right) = \sum \lambda_i \|u_i\|_\sH^2 + \sum \mu_j \|v_j\|_\sH^2 .
 \end{equation*}
Without loss of generality, up to renumbering, we can assume that $J \subset I \subset \bN$ (the case $I \subset J$ uses the same proof). Then, with $\gamma_k=\lambda_k,u_k'=u_k$ and $v_k'= \sqrt{\frac{\mu_k}{\lambda_k}} v_k$ if $k \in J$, 0 otherwise. We have 
\begin{equation*}
  \sum \gamma_k \left(\|u_k'\|_\sH^2+ \|v_k'\|_\sH^2 \right) = \sum \lambda_k \|u_k\|_\sH^2+ \sum \lambda_k\frac{\mu_k}{\lambda_k} \|v_k\|_\sH^2 = \sum \lambda_i \|u_i\|_\sH^2 + \sum \mu_j \|v_j\|_\sH^2 .
\end{equation*}
Thus $\|(x_1,x_2)\|_\Sigma^2 \leq  \|x_1\|_{\Sigma_1}^2 +\|x_2\|_{\Sigma_2}^2$.
\end{proof}

For block structured sparsity, we are now equipped to bound $\delta_{\Sigma}(f_w)$.
\begin{theorem}\label{eq:RIPBlockGroupSparse}
Consider the block-group-sparse model $\Sigma =\Sigma_1 \times \ldots \times \Sigma_J$, $J \geq 2$, and the regularizer 
\[
f_w(x_1,..,x_J)=\sum_{j=1}^J w_j \|x_j\|_{\sA_j}.
\]
Denoting $\kappa_w = \left( \frac{\max(w_j\sqrt{K_j})}{\tmin(w_j\sqrt{K_j}) }\right)$, we have
\[
\delta_{\Sigma}(f_w) \geq \frac{1}{\sqrt{2+J \kappa_w^2}}.
\]
 In particular when $w_j = \frac{1}{\sqrt{K_j}}$, we have $\delta_{\Sigma}(f_{w}) \geq \frac{1}{\sqrt{2+J}}$.
\end{theorem}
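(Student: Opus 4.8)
The plan is to reduce the block problem to the single-block analysis already carried out and then aggregate across blocks. First observe that $\Sigma=\Sigma_1\times\cdots\times\Sigma_J$ is homogeneous (each $\Sigma_j$ is a group-sparse model, hence a UoS), so $\Sigma$ is itself a UoS and I may use the UoS tools developed earlier: it suffices to exhibit, for every descent vector $\vz\in\sT_{f_w}(\Sigma)\setminus\{0\}$, a model element $\vx\in\Sigma$ with $\QOC(\vx,\vz)=0$ and $\alpha_\Sigma(\vx,\vz)\leq 1+J\kappa_w^2$. Indeed this gives $\alpha_\Sigma(f_w)\leq 1+J\kappa_w^2$ and therefore $\delta_\Sigma(f_w)\geq 1/\sqrt{1+(1+J\kappa_w^2)}=1/\sqrt{2+J\kappa_w^2}$, and the special case $w_j=1/\sqrt{K_j}$ makes every $w_j\sqrt{K_j}=1$, so $\kappa_w=1$.

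To construct $\vx$, I work in block form $\vz=(\vz_1,\ldots,\vz_J)$ and choose in each block a minimal decomposition $\vx_j\in\sY_{\Sigma_j}(\vz_j,\sA_j)$, i.e. $\vx_j=-(\vz_j)_{H_j}$ with $H_j$ the $K_j$ groups of largest Euclidean norm, as in Proposition~\ref{prop:opt_NCP_group}; then $\vx=(\vx_1,\ldots,\vx_J)\in\Sigma$. Because the blocks are orthogonal and each block satisfies $\QOC(\vx_j,\vz_j)=0$ by Proposition~\ref{prop:opt_NCP_group}, we get $\re\ls\vx,\vx+\vz\rs=\sum_j\re\ls\vx_j,\vx_j+\vz_j\rs=0$, hence $\QOC(\vx,\vz)=0$. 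If $\vz\neq0$ then some $\vz_j\neq0$, forcing $\vx\neq0$, so $\alpha_\Sigma(\vx,\vz)$ is well defined. The crucial point is the descent hypothesis: $f_w(\vx_0+\vz)\leq f_w(\vx_0)$ for some $\vx_0\in\Sigma$ reads $\sum_j w_j\|\vx_{0,j}+\vz_j\|_{\sA_j}\leq\sum_j w_j\|\vx_{0,j}\|_{\sA_j}$. Since each $\vx_j$ minimizes $\|\cdot+\vz_j\|_{\sA_j}-\|\cdot\|_{\sA_j}$ over $\Sigma_j$, writing $a_j\defin\|\vx_j\|_{\sA_j}$ and $b_j\defin\|\vx_j+\vz_j\|_{\sA_j}$ the per-block gap $b_j-a_j$ is no larger than the one attained at $\vx_{0,j}$; multiplying by $w_j>0$ and summing yields the \emph{coupled} inequality $\sum_j w_j b_j\leq\sum_j w_j a_j$.

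It remains to bound $\alpha_\Sigma(\vx,\vz)$. Using the Cartesian-product characterization $\|\cdot\|_\Sigma^2=\sum_j\|\cdot\|_{\Sigma_j}^2$ and writing $s_j\defin\|\vx_j\|_\sH=\|\vx_j\|_{\Sigma_j}$, I have $\alpha_\Sigma(\vx,\vz)=(\sum_j\|\vx_j+\vz_j\|_{\Sigma_j}^2)/(\sum_j s_j^2)$. Per block, Corollary~\ref{lem:CS_struct} gives $\|\vx_j+\vz_j\|_{\Sigma_j}^2\leq\tmax(b_j^2/K_j,\ K_j(\|\vx_j+\vz_j\|_{\sA_j}^*)^2)$, and the estimate $K_j\|\vx_j+\vz_j\|_{\sA_j}^*\leq\|\vx_j\|_{\sA_j}=a_j\leq\sqrt{K_j}\,s_j$ from Proposition~\ref{prop:opt_NCP_group} (together with Cauchy–Schwarz over the $K_j$ groups of $\vx_j$, as in Proposition~\ref{prop:geom_structured}) bounds the second argument of the maximum by $s_j^2$, so $\|\vx_j+\vz_j\|_{\Sigma_j}^2\leq\tmax(b_j^2/K_j,s_j^2)\leq b_j^2/K_j+s_j^2$. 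Thus I only need $\sum_j b_j^2/K_j\leq J\kappa_w^2\sum_j s_j^2$. Setting $t_j\defin b_j/\sqrt{K_j}\geq0$ and $c_j\defin w_j\sqrt{K_j}$, the coupled inequality becomes $\sum_j c_j t_j\leq\sum_j c_j s_j$; bounding $c_j\geq\tmin_k c_k$ on the left, $c_j\leq\tmax_k c_k$ on the right, and applying Cauchy–Schwarz on the right gives $\sum_j t_j\leq\kappa_w\sqrt{J}\,(\sum_j s_j^2)^{1/2}$, whence $\sum_j t_j^2\leq(\sum_j t_j)^2\leq J\kappa_w^2\sum_j s_j^2$. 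Combining everything yields $\alpha_\Sigma(\vx,\vz)\leq 1+J\kappa_w^2$, finishing the argument.

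I expect the main obstacle to be the per-block/aggregate mismatch: the descent condition does not decouple across blocks, so the single-block bound $\alpha\leq1$ cannot simply be applied blockwise. The key idea overcoming this is to retain the coupled constraint $\sum_j w_j b_j\leq\sum_j w_j a_j$ and convert the resulting $\ell^1$-type control on the $t_j$ into an $\ell^2$ bound through the elementary inequality $\sum_j t_j^2\leq(\sum_j t_j)^2$ valid for nonnegative $t_j$; this is precisely the step where the dimension factor $J$ and the weight-spread factor $\kappa_w^2$ (measuring the ratio $\tmax_j(w_j\sqrt{K_j})/\tmin_j(w_j\sqrt{K_j})$) enter, and where choosing $w_j=1/\sqrt{K_j}$ eliminates the dependence on the ratios $K_i/K_j$.
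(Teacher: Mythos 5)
Your proof is correct and takes essentially the same route as the paper's: the same per-block choice $\vx_j\in\sY_{\Sigma_j}(\vz_j,\sA_j)$, the same appeal to Proposition~\ref{prop:opt_NCP_group}, Corollary~\ref{lem:CS_struct} and the Cartesian-product characterization of $\|\cdot\|_\Sigma$, and the same conversion of the coupled weighted descent inequality $\sum_j w_j b_j\leq\sum_j w_j a_j$ into an $\ell^2$ bound via $\sum_j t_j^2\leq(\sum_j t_j)^2$. Your write-up merely makes explicit two points the paper leaves implicit (the derivation of the coupled inequality from per-block minimality, and the fact that $\Sigma$ is a UoS so that $\QOC=0$ and $\alpha\leq 1+J\kappa_w^2$ yield $\delta_\Sigma(f_w)\geq 1/\sqrt{2+J\kappa_w^2}$), which is a welcome clarification rather than a different argument.
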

\begin{proof}
Let $\vz= (\vz_j)_{j=1,J} \in \sT_{f_{w}}(\Sigma)\setminus \{0\}$. For all $j$, Let $\vx_j \in \sY_{\Sigma_j}(z_j,\sA_j)$. Considering $\vx= (\vx_j)_j$, we have $\vz \in \sT_{f_w}(\vx)$ and 
\begin{equation}
\sum_{j=1,J} w_j\|\vx_j+\vz_j\|_{\sA_j} \leq \sum_{j=1,J} w_j\|\vx_j\|_{\sA_j}\leq \sum_{j=1,J} w_j \sqrt{K_j}\|\vx_j\|_{\sH} \leq \tmax(w_j \sqrt{K_j})\sqrt{J} \|\vx\|_\sH.
\end{equation}
Moreover, by Proposition~\ref{prop:opt_NCP_group}, $K_j \|\vx_j+\vz_j\|_{\sA_j}^* \leq \|\vx_{j}\|_{\sA_{j}} \leq \sqrt{K}_{j} \|\vx_{j}\|_{\sH_{j}}$, and by
Corollary~\ref{lem:CS_struct} we have :
\begin{equation}
\begin{split}
\|\vx+\vz\|_\Sigma^2 &=  \sum_{j=1,J}\|\vx_j+\vz_j\|_{\Sigma_j}^2 
   \leq  \sum_{j=1,J}\tmax\left( (\|\vx_j+\vz_j\|_{\sA_j}^2/K_j, K_j (\|\vx_j+\vz_j\|_{\sA_j}^*)^2\right) \\
  & \leq \sum_{j=1,J}  \left(w_j^{2}\|\vx_j+\vz_j\|_{\sA_j}^2 /(w_j^{2}K_j) + K_j (\|\vx_j+\vz_j\|_{\sA_j}^*)^2\right) \\
    & \leq   \left( \sum_{j=1,J} w_j\|\vx_j+\vz_j\|_{\sA_j}/(w_j\sqrt{K_j}) \right)^2  +  \sum_{j=1,J}\|\vx_j\|_{\sH_{j}}^2\\    
  & \leq  \left(J\left( \frac{\max(w_j\sqrt{K_j})}{\tmin(w_j\sqrt{K_j}) }\right)^2+1\right)\|x\|_{\sH}^2.\\
\end{split}
\end{equation}
Thus $\QOC(x,z)=\ls x+z,x \rs = 0$, $\alpha_{\Sigma}(f_{w}) \leq \alpha_\Sigma(\vx,\vz) \leq 1+J\kappa_w^2$,  and $\delta_{\Sigma}(f_{w}) \geq 1/\sqrt{1+\alpha_{\Sigma}(f_{w})} = 1/\sqrt{2+J\kappa_w^{2}}$. 
\end{proof}

Even with adjusted weights, our lower bound $\delta_\Sigma(f_{w_{opt}})$ depends on $J$. In light of Bastounis et al.~\cite{Bastounis_2015}, this dependency could be necessary even for $w=w_{opt}$.

\subsection{Consequences and discussion}

We just calculated an admissible RIP constant for the block structured sparse model~\eqref{eq:DefBlockStructuredSparseModel} and regularizer $f_w$~\eqref{eq:DefWeightedBlockStructuredNorm}. We now specify recovery theorems and compare them to previous state of the art results.

\subsubsection{Dimension reduction and the RIP}

Ayaz et al. \cite{Ayaz_2014} gave a uniform recovery result with the mixed $\ell^1-\ell^2$-norm for structured compressed sensing under a RIP hypothesis. They show that a RIP constant  $\delta < \sqrt{2}-1$ for vectors in $\Sigma_{2K}=\Sigma_K - \Sigma_K$ guarantees the recovery of vectors from $\Sigma_K$. We just showed that the RIP constant of Ayaz et al. can be improved to the sharp $\frac{1}{\sqrt{2}}$. In~\cite{Adcock_2013b}, a model of sparsity in levels is proposed. This is in fact a block sparsity model in $\sE(\sA)$ with classical sparsity in each block, which is covered by the model of Section~\ref{sec:block_struct_sparse}.
 In \cite{Bastounis_2015}, Bastounis et al. show in the case of block sparsity that  $f(\cdot) = \sum \|\cdot\|_{\sA_j} = \|\cdot\|_1$ (i.e., with weights $w_{j}=1$, in this case we write $\kappa_w=\kappa_1$ and $\kappa$ represent the ratio of sparsity between blocks) and RIP (called there RIP in levels) $\delta =1/\sqrt{J(\kappa_1+0.25)^2 +1)}$ on $\Sigma-\Sigma$ guarantees recovery. This constant is improved by our constant $\delta_\Sigma(f_{w})\geq 1/\sqrt{2+J}$ when appropriately weighting the norm of each block. Our result further extends the work of Bastounis et al. to general structured sparsity. The following theorem summarizes our result:

\begin{theorem}\label{th:robust_RIP_block_structured}
 Let $f(\vx_1,...,\vx_J)=\sum_{j=1}^{J}\|\vx_j\|_{\sA_j}/\sqrt{K_j}$. For any $\mM$ that satisfies the RIP on $\Sigma-\Sigma = \Sigma_{2K_1} \times \ldots \times \Sigma_{2K_J}$ with constant  $\delta<\delta_0$ we have: for all $\vx_{0} \in \Sigma$,  $\|\ve\|_{\sF} \leq \eta \leq \epsilon$, with $\vx^*$ the result of minimization~\eqref{eq:robust_minimization},
\begin{equation}
\| \vx^*-\vx_{0}\|_\sH \leq \| \vx^*-\vx_{0}\|_\Sigma \leq   C_{\Sigma}(f,\delta) (\eta +\epsilon )
\end{equation}
where : 
\begin{itemize}
 \item For $J=1$, $\delta_0 = \frac{1}{\sqrt{2}}$ and $C_{\Sigma}(f,\delta) \leq  \frac{2\sqrt{1+\delta}}{1-\delta\sqrt{2}}$.
 \item For $J\geq 2$, $\delta_0 =\sqrt{\frac{1}{2+J}}$ and $C_{\Sigma}(f,\delta) \leq  \frac{(1+\sqrt{1+J})\sqrt{1+\delta}}{1-\delta\sqrt{2+J}}$.
\end{itemize}
\end{theorem}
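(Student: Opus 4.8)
The plan is to recognize that the block‑group‑sparse model $\Sigma = \Sigma_1 \times \ldots \times \Sigma_J$ is homogeneous — each $\Sigma_j$ is a union of coordinate subspaces, and a Cartesian product of homogeneous sets is homogeneous — hence a UoS. This means I never have to touch the cone branch of Theorem~\ref{th:robust_RIP_eucl_precise}: the whole statement follows by feeding the decompositions already built in the previous subsections into the UoS form of that master estimate, and then turning the generic constant $C_{\Sigma}(f,\delta) = 2\sqrt{1+\delta}/D_{\Sigma}(f,\delta)$ into the two explicit closed forms.

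First I would pin down the controlling value of $\alpha_{\Sigma}(f)$ for the weights $w_j = 1/\sqrt{K_j}$. For $J=1$ the regularizer is a positive multiple of $\|\cdot\|_{\sA_1}$, so $\sT_f(\Sigma)=\sT_{\sA_1}(\Sigma)$ and Proposition~\ref{prop:geom_structured} supplies, for every $z \in \sT_f(\Sigma)\setminus\{0\}$, a decomposition $z=(x+z)-x$ with $\QOC(x,z)=0$ and $\alpha_{\Sigma}(x,z)\leq 1$. For $J>1$, the construction inside the proof of Theorem~\ref{eq:RIPBlockGroupSparse} with $w_j=1/\sqrt{K_j}$ (so $\kappa_w=1$) gives a product decomposition $x=(x_j)_j$, again with $\QOC(x,z)=0$ and now $\alpha_{\Sigma}(x,z)\leq 1+J$. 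In both cases write $\alpha_\star \defin \alpha_{\Sigma}(f)$, equal to $1$ when $J=1$ and to $1+J$ when $J>1$.

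Next I would specialize the UoS branch of Theorem~\ref{th:robust_RIP_eucl_precise} at $\QOC=0$. The admissible‑constant function collapses to $\delta_{\textrm{UoS}}(0,\alpha)=1/\sqrt{1+\alpha}$, so $\delta_{\Sigma}(f)\geq 1/\sqrt{1+\alpha_\star}$, i.e. $1/\sqrt{2}$ for $J=1$ and $1/\sqrt{2+J}$ for $J>1$ — exactly the stated RIP thresholds. For the constant, the same branch gives $G_{\textrm{UoS}}(0,\alpha,\delta)=2-2\delta\sqrt{1+\alpha}$ and $D_{\Sigma}(x,z,\delta)=G_{\textrm{UoS}}(0,\alpha,\delta)/(1+\sqrt{\alpha})$. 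Since for each $z$ we exhibit one concrete admissible $x$, it suffices to lower bound $\sup_x D_{\Sigma}(x,z,\delta)$ by the value at that $x$; the only analytic fact needed is that $h(\alpha)\defin (2-2\delta\sqrt{1+\alpha})/(1+\sqrt{\alpha})$ is nonincreasing wherever its numerator is positive (which holds precisely because $\delta<1/\sqrt{1+\alpha_\star}$, a decreasing numerator over an increasing positive denominator). Monotonicity lets me replace the actual $\alpha(x,z)\leq\alpha_\star$ by the worst case $\alpha_\star$, yielding $D_{\Sigma}(f,\delta)\geq h(\alpha_\star)$ and hence $C_{\Sigma}(f,\delta)\leq 2\sqrt{1+\delta}(1+\sqrt{\alpha_\star})/(2-2\delta\sqrt{1+\alpha_\star})$. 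Substituting $\alpha_\star=1$ gives $\frac{2\sqrt{1+\delta}}{1-\delta\sqrt{2}}$ and $\alpha_\star=1+J$ gives $\frac{(1+\sqrt{1+J})\sqrt{1+\delta}}{1-\delta\sqrt{2+J}}$, matching the claim.

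The only genuinely delicate point is the bookkeeping that forces the split between $J=1$ and $J>1$: the generic estimate of Theorem~\ref{eq:RIPBlockGroupSparse} only gives $\alpha_\star\leq 1+J=2$ at $J=1$, which is strictly worse than the sharp $\alpha_\star\leq 1$ of Proposition~\ref{prop:geom_structured}; invoking the sharper single‑block bound is exactly what recovers the optimal threshold $1/\sqrt{2}$ and the constant $2\sqrt{1+\delta}/(1-\delta\sqrt{2})$. Everything else is routine: the noise stability in $\|\cdot\|_{\Sigma}$ (hence in $\|\cdot\|_{\sH}$, as $\|\cdot\|_{\sH}\leq\|\cdot\|_{\Sigma}$) is inherited verbatim from Theorem~\ref{th:robust_RIP_eucl_precise}, and the passage from the finite‑dimensional space $\sE(\sA)$ to all of $\sH$, together with robustness to model error, is handled by Corollary~\ref{cor:inst_opt}.
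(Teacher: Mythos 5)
Your proposal is correct and follows essentially the same route the paper intends: the theorem is stated there as a summary whose proof is exactly the combination of Proposition~\ref{prop:opt_NCP_group}, Proposition~\ref{prop:geom_structured} (for $J=1$) and the decomposition in the proof of Theorem~\ref{eq:RIPBlockGroupSparse} with $w_j=1/\sqrt{K_j}$ (for $J>1$), fed into the UoS branch of Theorem~\ref{th:robust_RIP_eucl_precise} at $\QOC=0$. Your monotonicity argument for $h(\alpha)=(2-2\delta\sqrt{1+\alpha})/(1+\sqrt{\alpha})$ correctly supplies the bookkeeping the paper leaves implicit in turning $\alpha_\star\in\{1,\,1+J\}$ into the two stated closed-form constants, and your observation that the $J=1$ case needs the sharper single-block bound $\alpha_\star\leq 1$ (rather than the generic $1+J\kappa_w^2=2$) is precisely why the statement is split into two cases.
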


The comparison of RIP constants is summarized in Figure~\ref{fig:comp_RIP_const}.

\begin{figure}[h]
\centering
\subfloat[$f_w$]{\includegraphics[width=0.45\linewidth]{./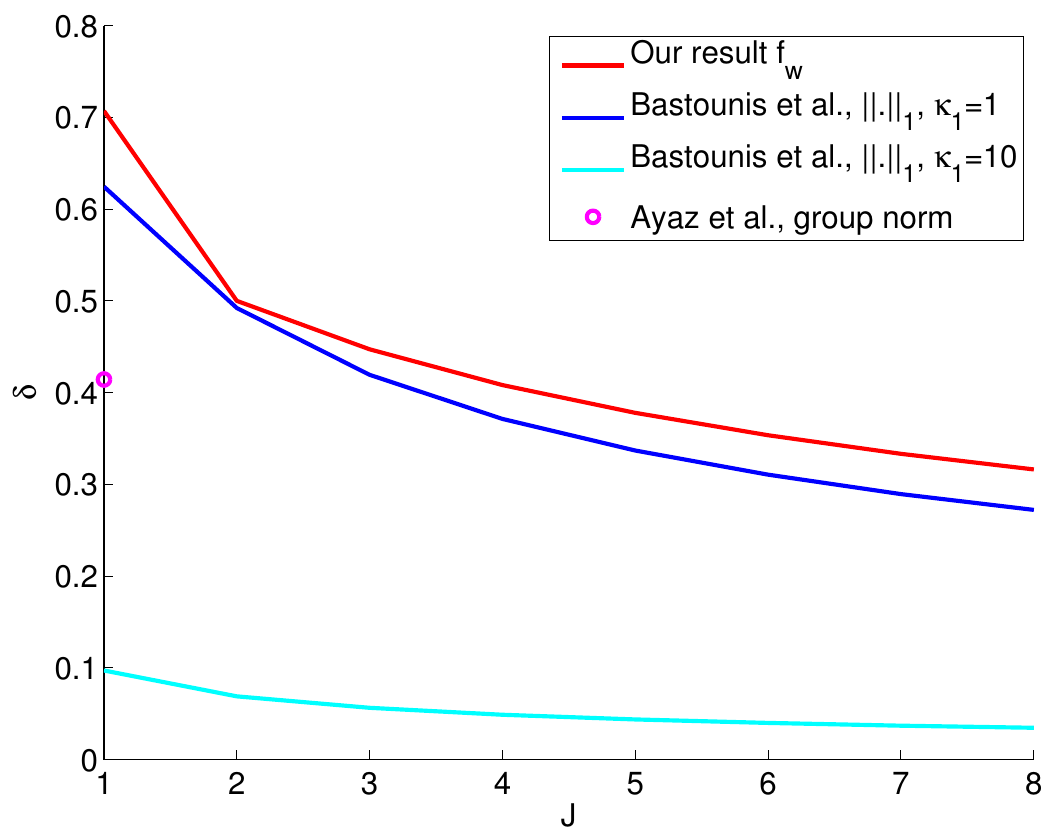}}
\subfloat[$\|\cdot\|_1$]{\includegraphics[width=0.45\linewidth]{./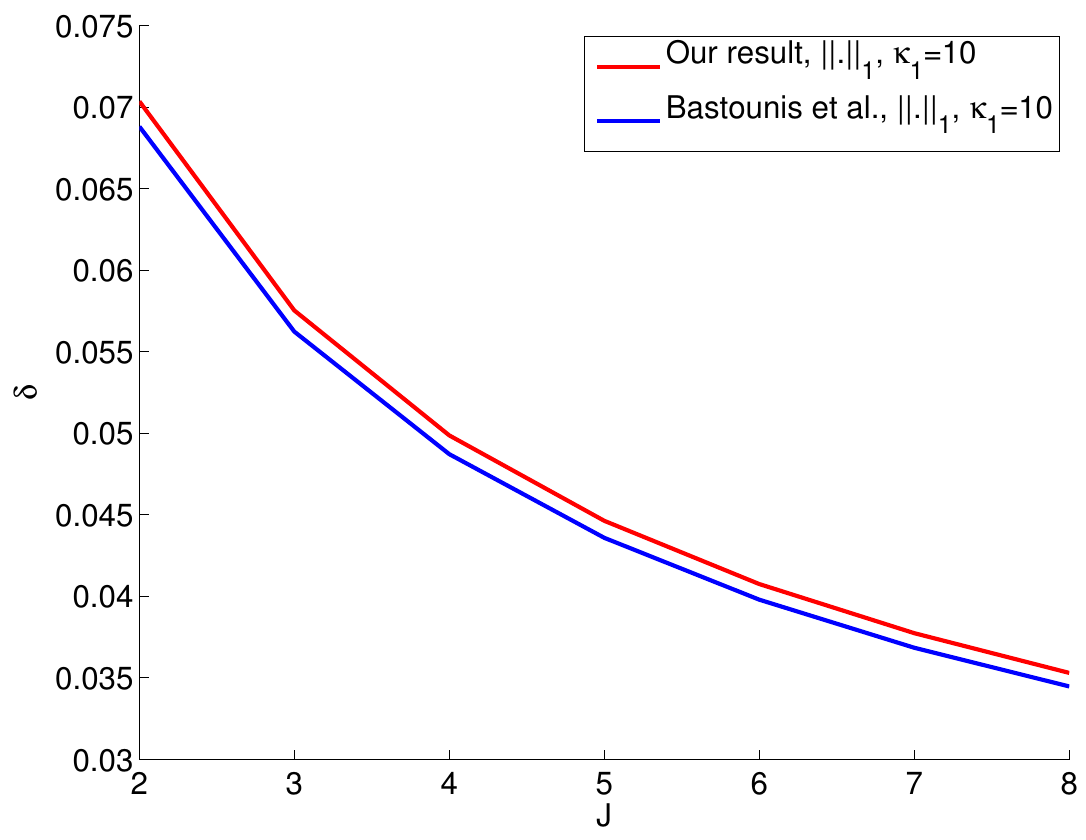}}
\caption{Comparison of our admissible RIP constant and the state of the art (Higher is better). (a) Our result is for weighted group norm with our proposed weights and block group sparsity. Bastounis et al. is for $\ell^1$ norm and simple sparsity. Ayaz et al. is for the group norm. (b) $\kappa_1= 10$. Our result is for the $\ell^1$ norm and block sparsity, Bastounis et al. is for the $\ell^1$ norm and block sparsity.}\label{fig:comp_RIP_const}
\end{figure}

Ayaz et al. \cite{Ayaz_2014} then proceed to show that subgaussian matrices of appropriate dimensions satisfy the RIP on vectors from $\Sigma_K - \Sigma_K$ with high probability, thus providing a sufficient number of observations for guaranteed uniform recovery of structured sparse signals.

Even in an infinite-dimensional Hilbert space, it has recently been established \cite{Puy_2015}  that one can construct random linear measurement operators $\mM: \sH \to \mathbb{R}^{m}$ satisfying the RIP on the secant set $\Sigma-\Sigma$ with high probability, when $m$ is large enough compared to the covering dimension of the normalized secant set $(\Sigma-\Sigma) \cap S(1)$.
% ($=(\Sigma + \Sigma)\cap S(1)$ in this case). 
The covering number $N(\epsilon)$ of a set is the minimum number of ball of radius $\epsilon$ sufficient to cover the set. If $N(\epsilon) \leq \epsilon^{-s}$ for all $0<\epsilon \leq \epsilon_{0}$, where $\epsilon_{0} \leq 1/2$ and $s>0$, then subgaussian matrices of size $m \times n$ satisfy the RIP with constant $\delta$ on $\Sigma-\Sigma$ with high probability, provided that $m \geq \delta^{-2}O(s \log(1/\epsilon_{0}))$.

For the set $\Sigma_{K}$ of $K$-group sparse vectors, the covering number of $(\Sigma_{K}-\Sigma_{K}) \cap S(1) = \Sigma_{2K} \cap S(1)$ is calculated in \cite{Ayaz_2014} when the groups have fixed size $|g| = r$, for all $g \in G$. 
In the Hilbert space setting considered here, for groups of possibly different sizes, the covering number of $(\Sigma_K-\Sigma_K)\cap S(1)$ can be bounded using the maximum group size  $r \defin \tmax \{ |g| :g \in G \}$. For $\epsilon \leq 1/2$, we obtain $N(\epsilon) \leq \left(\tfrac{3e^{r}(e|G|)}{ K \epsilon} \right)^{K}$. 

Observing that $N(\epsilon) \leq (C/\epsilon)^{s_{0}} = \epsilon^{-2s_{0}} (C \epsilon)^{s_{0}}$ for $\epsilon \leq 1/2$ implies $N(\epsilon) \leq \epsilon^{-2s_{0}}$ for $\epsilon \leq \epsilon_{0} = \min (1/2,1/C)$ gives us immediately the following theorem for structured sparsity with only one block ($J=1$).

\begin{theorem}
For $J=1$, consider $\Sigma = \Sigma_{K}$ the set of $K$-group sparse vectors with groups $G$. Denote $r = \tmax \{ |g| :g \in G \}$.  One can construct a random (subgaussian) linear measurement operator $\mM: \sH \to \mathbb{R}^{m}$ that satisfies the RIP with constant $\delta$ on vectors from $\Sigma_{2K}$ with high probability if :
 \begin{equation}
   m \geq \delta^{-2} O\left(Kr+ K \log(\tfrac{3 e|G|}{K})   \right).
 \end{equation}
\end{theorem}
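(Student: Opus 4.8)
The plan is to chain together the covering-number estimate stated just above with the subgaussian RIP result of Puy et al.\ \cite{Puy_2015}, using the elementary conversion recalled immediately before the statement. Concretely, I would set $C \defin 3e^{r+1}|G|/K$ so that the covering number of the normalized secant set $(\Sigma_{K}-\Sigma_{K})\cap \sS(1)=\Sigma_{2K}\cap \sS(1)$ obeys $N(\epsilon)\leq (C/\epsilon)^{K}$ for all $\epsilon\leq 1/2$. This is exactly the shape $(C/\epsilon)^{s_{0}}$ with $s_{0}=K$, which is the form the conversion is designed to handle.

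Next I would invoke the conversion $N(\epsilon)\leq (C/\epsilon)^{s_{0}}=\epsilon^{-2s_{0}}(C\epsilon)^{s_{0}}$: for $\epsilon\leq \epsilon_{0}\defin\min(1/2,1/C)$ the factor $(C\epsilon)^{s_{0}}\leq 1$, so $N(\epsilon)\leq \epsilon^{-2K}$ on the whole range $0<\epsilon\leq \epsilon_{0}$. This places us precisely in the hypothesis of the Puy et al.\ theorem with exponent $s=2K$ and threshold $\epsilon_{0}$, which then guarantees that a random subgaussian $\mM:\sH\to\bR^{m}$ satisfies the RIP with constant $\delta$ on $\Sigma_{2K}$ with high probability as soon as $m\geq \delta^{-2}O(s\log(1/\epsilon_{0}))=\delta^{-2}O(K\log(1/\epsilon_{0}))$.

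The only remaining work is the asymptotic bookkeeping of $\log(1/\epsilon_{0})$. Since the model is nondegenerate we have $|G|\geq K$, whence $C=3e^{r+1}|G|/K\geq 3e>2$; thus $\epsilon_{0}=1/C$ and $\log(1/\epsilon_{0})=\log C$ exactly. The key observation is that $\log C=\log(3e^{r+1}|G|/K)=r+\log(3e|G|/K)$, so the exponential-in-$r$ factor $e^{r}$ contributes only linearly after the logarithm is taken. Multiplying by $K$ gives $K\log(1/\epsilon_{0})=O\!\big(Kr+K\log(3e|G|/K)\big)$, and substituting into the sample-complexity bound $m\geq\delta^{-2}O(K\log(1/\epsilon_{0}))$ yields the claimed $m\geq \delta^{-2}O(Kr+K\log(3e|G|/K))$.

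Since the argument is a direct substitution into two results already in hand (the covering bound and the Puy et al.\ guarantee), there is no deep obstacle. The single point to handle with care is the accounting above: making sure the $e^{r}$ buried in the covering constant $C$ collapses to the additive $Kr$ term, rather than surviving exponentially, precisely because it enters the sample complexity only through $\log C$.
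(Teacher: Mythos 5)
Your proposal is correct and follows exactly the paper's own route: the covering bound $N(\epsilon)\leq\left(\tfrac{3e^{r}(e|G|)}{K\epsilon}\right)^{K}$, the conversion $N(\epsilon)\leq(C/\epsilon)^{s_{0}}\Rightarrow N(\epsilon)\leq\epsilon^{-2s_{0}}$ for $\epsilon\leq\min(1/2,1/C)$, and the Puy et al.\ subgaussian RIP guarantee with $s=2K$, followed by the observation that $\log C = r+\log(3e|G|/K)$ turns the exponential-in-$r$ constant into the additive $Kr$ term. Your extra check that $|G|\geq K$ forces $\epsilon_{0}=1/C$ is a harmless refinement of bookkeeping the paper leaves implicit; nothing else differs.
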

In the case of block structured sparsity, we extend this to any number of blocks $J$ to determine a sufficient number of (subgaussian) measurements to ensure the RIP on $\Sigma-\Sigma$ holds with high probability. 
 
\begin{theorem}\label{th:subgaussianblocksparse}
For $J\geq 1$, consider $\Sigma = \Sigma_{1} \times \ldots \times \Sigma_{J}$ with $\Sigma_{j} = \Sigma_{K_{j}}$ the set of $K_{j}$ group-sparse vectors with group $G_{j}$. Denote $r_j = \tmax \{ |g| :g \in G_j \}$.  
One can construct a random (subgaussian) linear measurement operator $\mM: \sH \to \mathbb{R}^{m}$ that satisfies the RIP with constant $\delta$ on vectors from 
%$\Sigma_{2K}$ 
%Let $\mM$ be $m$ subgaussian random measurements. $\mM$ has RIP $\delta_\Sigma(f_w)$ (with $w_i = 1/\sqrt{K_i}$) on vectors from 
$\Sigma - \Sigma$ with high probability if :
 \begin{equation}
   m \geq \delta^{-2} 
   O\left(
   \sum_{j=1}^{J} \left(K_{j}r_{j} + K_j \log\left(\tfrac{3e |G_j|}{K_j }\right) \right)\right).
 \end{equation}
\end{theorem}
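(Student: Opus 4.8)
The plan is to reduce the statement to the covering-number based RIP guarantee for subgaussian matrices recalled just above, exactly as in the single-block case $J=1$; the only new ingredient is a covering number estimate for the secant set $\Sigma-\Sigma$ of the block-group-sparse model. By Theorem~\ref{th:robust_RIP_block_structured} we have $\Sigma-\Sigma=\Sigma_{2K_1}\times\cdots\times\Sigma_{2K_J}$, so it suffices to bound the covering number $N(\epsilon)$ of $(\Sigma-\Sigma)\cap \sS(1)$, put it in the form $(C/\epsilon)^{s_0}$, and invoke the observation that $N(\epsilon)\leq(C/\epsilon)^{s_0}$ forces $N(\epsilon)\leq\epsilon^{-2s_0}$ for $\epsilon\leq\epsilon_0\defin\min(1/2,1/C)$, together with the subgaussian RIP theorem, which asks for $m\geq\delta^{-2}O(s\log(1/\epsilon_0))$ with $s=2s_0$.

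First I would exhibit the union-of-subspaces structure of the secant set while keeping the block decomposition explicit, since this is what yields the sharp per-block sum rather than a coarser global bound. Each $\Sigma_{2K_j}$ is the union, over group supports $\Omega_j\subset G_j$ with $|\Omega_j|=2K_j$, of the subspaces $\{x_j:\supp(x_j)\subset\Omega_j\}$, whose dimension is at most $2K_j r_j$. Hence $\Sigma-\Sigma=\bigcup_{\Omega}V_\Omega$ is a union of the subspaces $V_\Omega\defin\{(x_1,\ldots,x_J):\supp(x_j)\subset\Omega_j,\ \forall j\}$ indexed by tuples $\Omega=(\Omega_1,\ldots,\Omega_J)$; there are $A\defin\prod_{j=1}^J\binom{|G_j|}{2K_j}$ of them, each of dimension at most $D\defin\sum_{j=1}^J 2K_j r_j$.

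Next I would cover each sphere $V_\Omega\cap\sS(1)$ of a subspace of dimension $d_\Omega\leq D$ by a standard Euclidean $\epsilon$-net of cardinality at most $(1+2/\epsilon)^{d_\Omega}\leq(3/\epsilon)^{D}$ (valid for $\epsilon\leq 1$), and take the union of these nets over the $A$ tuples $\Omega$. This gives $N(\epsilon)\leq A\,(3/\epsilon)^{D}=(C/\epsilon)^{D}$ with $C\defin 3A^{1/D}$, i.e. $s_0=D$. Using $\binom{|G_j|}{2K_j}\leq(e|G_j|/2K_j)^{2K_j}$ yields $\log A\leq\sum_j 2K_j\log(e|G_j|/2K_j)$ and therefore $D\log C=D\log 3+\log A=O\big(\sum_j(K_j r_j+K_j\log(e|G_j|/2K_j))\big)$.

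Finally, plugging $s=2s_0=2D$ and $\epsilon_0=\min(1/2,1/C)$ into the subgaussian RIP theorem gives $m\geq\delta^{-2}O(s\log(1/\epsilon_0))$, and since $\log(1/\epsilon_0)=\max(\log 2,\log C)$ one gets $s\log(1/\epsilon_0)=O(D+\log A)$; substituting the estimates for $D$ and $\log A$ and absorbing the additive constant of each summand into $K_j r_j\geq K_j$ produces the announced bound $m\geq\delta^{-2}O\big(\sum_{j=1}^J(K_j r_j+K_j\log(3e|G_j|/K_j))\big)$. The step I expect to require the most care, rather than any deep difficulty, is the bookkeeping in this reduction: one must count dimensions and the number of supports block-by-block, so that $D=\sum_j 2K_j r_j$ and $\log A=\sum_j\log\binom{|G_j|}{2K_j}$ appear as sums over $j$, instead of embedding $\Sigma-\Sigma$ into a single $(\sum_j 2K_j)$-group-sparse model, which would only give the weaker $(\sum_j K_j)\max_j r_j$ in place of the sharp $\sum_j K_j r_j$; and one must check that the passage $\log(1/\epsilon_0)=O(\log C)$ introduces no spurious $\log J$ factor, which is precisely why the union-of-subspaces count is preferable here to a product-of-covering-numbers argument.
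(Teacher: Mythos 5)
Your proof is correct, and it reaches the stated bound by a genuinely different covering argument than the paper's. The paper also reduces to the subgaussian RIP criterion of Puy et al.\ together with the same observation that $N(\epsilon)\leq (C/\epsilon)^{s_0}$ forces $N(\epsilon)\leq \epsilon^{-2s_0}$ for $\epsilon\leq\min(1/2,1/C)$, but it bounds the covering number of $(\Sigma-\Sigma)\cap\sS(1)$ by embedding this set into the Cartesian product $\left((\Sigma_1-\Sigma_1)\cap\sB(1)\right)\times\cdots\times\left((\Sigma_J-\Sigma_J)\cap\sB(1)\right)$ and multiplying the per-block covering numbers, each block being handled by the single-block estimate $N_j(\epsilon)\leq\left(3e^{r_j}(e|G_j|)/(K_j\epsilon)\right)^{K_j}$ recalled for $J=1$. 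Thus in the paper the $\epsilon$-exponent is $s_0=\sum_j K_j$ and the dimensions $r_j$ sit inside the constant $C$, whereas in your argument the exponent is the actual subspace dimension $D=\sum_j 2K_jr_j$ and the support count $A=\prod_j\binom{|G_j|}{2K_j}$ sits in the constant; both routes yield the same final $m$ because the criterion only sees $s_0(1+\log C)$, which in both cases is $O\bigl(\sum_j (K_jr_j+K_j\log(3e|G_j|/K_j))\bigr)$. Your version buys two things: it is self-contained (a standard $(1+2/\epsilon)^{d}$ net on each subspace sphere plus a union bound over supports, with no appeal to the per-block covering estimate), and it works directly with Euclidean balls in the product space, so it sidesteps the metric subtlety in ``the covering number of a Cartesian product is the product of the covering numbers'': with the Euclidean norm a product of $\epsilon$-balls is only contained in a $\sqrt{J}\epsilon$-ball, so the paper's composition, taken literally, requires rescaling $\epsilon\to\epsilon/\sqrt{J}$ (costing an additive $O\bigl(\sum_j K_j\log J\bigr)$ in $m$) unless one works in the max product metric; the concern you flag about a spurious $\log J$ is exactly this point, and your union-of-subspaces count indeed avoids it. What the paper's version buys in exchange is economy: it recycles the $J=1$ covering bound verbatim, making the extension to $J>1$ a two-line computation.
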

 \begin{proof}
 We  can bound the covering number of the set $\left( \Sigma - \Sigma \right) \cap S(1)$ by that of $ \left((\Sigma_{1}- \Sigma_{1} ) \cap B(1)\right) \times \ldots \times  \left((\Sigma_J- \Sigma_J ) \cap B(1)\right)$: take $(x_1,..,x_J) \in (\Sigma-\Sigma) \cap S(1)$. Then for all $j$, $\|x_j\|_\sH \leq 1$. Thus $x_j \in (\Sigma_j-\Sigma_j) \cap B(1)$. 
 Note that we use the covering number of $ (\Sigma_j-\Sigma_j) \cap B(1)$ to bound the covering number of $ (\Sigma_j-\Sigma_j) \cap S(1)$. As the covering number of a Cartesian product of sets is the product of their covering numbers \cite{Robinson_2010}, we have  when $\epsilon \leq 1/2$: 
\begin{equation}
\begin{split}
N(\epsilon) &\leq \left(\tfrac{3e^{r_1}(e|G_1|)}{ K_1 \epsilon} \right)^{K_1} \times \ldots \times \left(\tfrac{3e^{r_J}(e|G_J|)}{K_J\epsilon} \right)^{K_J}  \\   
  &=  \left(\frac{1}{\epsilon}\right)^{\sum  K_j} \left( \left( \left(\tfrac{3e^{r_1}(e|G_1|)}{ K_1 }  \right)^{K_1} \times \ldots \times \left(\tfrac{3e^{r_J}(e|G_J|)}{ K_J } \right)^{K_J} \right)^{1/(\sum  K_j)}\right)^{\sum K_j}.  \\ 
\end{split}
\end{equation}
Hence $N(\epsilon) \leq \epsilon^{-s}$ for $0< \epsilon \leq \epsilon_{0}$ with $s = 2\sum K_{j}$ and 
\[
\log 1/\epsilon_{0} = O\left(\frac{\sum_{j} \left(K_{j}r_{j} + \log \left(\tfrac{3|G_j|}{K_j }\right)\right)}{\sum K_{j}}\right).
\]
yields the desired result.
% as soon as
%\begin{equation}
%  \begin{split}
%  m &\geq \delta^{-2} O\left( \left(\sum  K_j\right) log\left( \left( \left(\frac{3e^{r_1}(e|G_1|)}{ K_1}  \right)^{ K_1} \times \ldots \times \left(\frac{3e^{r_J}(e|G_J|)}{ K_J} \right)^{K_J} \right)^{1/(\sum  K_j)}\right)\right) \\
%  & = O\left(J \left(\sum K_j log\left(\frac{|G_J|}{K_J }\right) +K_j r_j  \right)\right).
%  \end{split}
%  \end{equation}
 \end{proof}

Exploiting Theorem~\ref{th:subgaussianblocksparse} in the context of Theorem~\ref{th:robust_RIP_block_structured} considering $\delta < 1/\sqrt{2+J}$ yields a  sufficient number of (subgaussian) measurements to ensure the appropriate RIP for recovery with $f_{w_{opt}}$ and the following number of measurements 
\[
  m \geq 
   O\left(J\sum_{j=1}^{J} \left(K_{j}r_{j} + K_j \log\left(\tfrac{3e |G_j|}{K_j }\right) \right)\right).
\]

The factor $J$ might seem pessimistic, and we attribute its presence to the generality of the result. Should the structure of the observation matrix $M$ be taken into account, better results shall be achieved. In fact, if $M$ is a block diagonal matrix where each block $M_j$ has size  $m_j \times n_j$, uniform recovery guarantees with the $\ell^1$-norm
hold if and only if uniform recovery holds on each block: this is possible as soon as each block $M_j$ of $M$ satisfies the RIP with some constant $\delta_j < \frac{1}{\sqrt{2}}$ on $\Sigma_j-\Sigma_j$, which is in turn exactly equivalent to the RIP with constant $\delta < \frac{1}{\sqrt{2}}$ on $\Sigma-\Sigma$.
The above observation suggests that sharper results for block structured sparsity could be achievable under some structure assumption on $M$. A possible structure assumption could correspond to assuming the
existence of a constant $c$ such that for all$ z = (z_1, .., z_J ) \in \Sigma_1 \times...\times \Sigma_j $:
\begin{equation}
\frac{\|M\vz\|_\sH^2}{\sum \|Mz_j\|_\sH^2} \leq c
\end{equation}

With such a constraint, the dependence of $\delta$ on the number of blocks $J$ may be substantially decreased or
even disappear in Theorem 4.3.

\subsubsection{Robustness to model error and infinite-dimensional context}\label{sec:infinite_dim}

In a series of papers~\cite{Adcock_2012,Adcock_2013,Adcock_2013b,Adcock_2013c,Adcock_2015}, Adcock and Hansen propose a sampling strategy (called generalized sampling) to perform compressed sensing in an infinite-dimensional setting. With our new general framework, we naturally extend the notion of structured sparsity to this infinite-dimensional setting. In fact, we observe that, whereas the ambient space has infinite dimension, the low complexity recovery happens in the finite-dimensional space $\sE(\sA)$. This space $\sE(\sA)$ characterizes the arbitrarily high resolution at which the problem is considered.

The main advantage of this infinite-dimensional setting is the instance optimality result where modeling error is measured in the infinite-dimensional space. 

Note that generally, the $M$-norm needs to be bounded by $f$ to obtain (robust) instance optimality with Lemma~\ref{lem:inst_opt} from Section~\ref{sec:instance_optimality}. %Since In our block structured sparsity result $f$ is finite on $\sE(\Sigma)$ and we have the following result.

\begin{theorem}[Instance optimality for block structured sparsity]\label{th:inst_opt_block}
Let us consider the regularizer $f=f_w$ with the adapted weights $w_j=1/\sqrt{K_j}$. Suppose $\mM$ has the RIP with constant $\delta < \delta_\Sigma(f)$ on $\Sigma$. Then for all $\vx_{0}\in \sH$,  $\|\ve\|_\sF \leq \eta \leq \epsilon$ and $\vx^*$ the result of minimization~\eqref{eq:robust_minimization}, we have
\begin{equation}
\| \vx^*-\vx_{0}\|_\sH \leq   C_{\Sigma}(f,\delta)(\epsilon +\eta) + D \cdot d_{f}(\vx_0,\Sigma).
\end{equation}
 %For $q=1$ (equivalent to the classical structured sparsity with group norm when $J=1$), w
 We have
 \begin{itemize}
 \item for $J=1$
 \begin{equation}\label{eq:RobConstL1}
D \cdot d_{f}(\vx_{0},\Sigma_{K}) := 2(1 + \sqrt{1+\delta}C_\Sigma) \cdot \frac{d_{\|\cdot\|_{\sA}}(\vx_{0},\Sigma_{K})}{\sqrt{K}}
 \end{equation}
 \item for $J \geq 2$
\begin{equation}
D  := 2\sqrt{2}(1 + \sqrt{1+\delta}C_\Sigma).
 \end{equation}
 \end{itemize}
\end{theorem}
\begin{proof}
 We use a similar argument as in Theorem~\ref{th:inst_opt2}. Let $z=x^*-x_0$. Take $x_1=\vx_{0,T}$ where $T$ is the support of the $K$ greatest groups of $x_0$ in each block, and let $x_2 = -z_{T'}$ where $T'$ is the support of the $K$ greatest groups of $z$ in each block. 
The vector $-\vx_{1}$ is made of the best $K_{j}$-group approximation to $\vx_{0}$ in each block, while the vector $-\vx_{2}$ is similarly the best approximation to $\vz$. As a result $f(x_2+z)-f(x_2)\leq f(x_1+z)-f(x_1)$.  As in the proof of Theorem~\ref{th:inst_opt2} it follows that 
\begin{equation}
\label{eq:Ineq1}
f(x_2+z) \leq f(x_2)+ f(x_0-x_1)+f(x_1-x_0) = f(x_2)+ 2f(x_0-x_1).
\end{equation}
This is classically known as the cone constraint in the case of classical sparsity, \cite{Candes_2008}. 

Given the above properties, we use the definition~\eqref{eq:Robust3} of $z'$ and obtain as in the proof of Theorem~\ref{th:inst_opt2} that $z' \in \sT_f(\Sigma)$ and $z-z'= \frac{  2 f(x_0-x_1)}{f(x_2) +  2 f(x_0-x_1)}  \cdot (x_2+z)$. 
 % and $f(z-z')\leq 2f(x_0-x_1)$. 
 
 The rest of the proof will now deviate from that of Theorem~\ref{th:inst_opt2}.  With the fact that for any $u \in \sH$,  $\|\mM u\|_2 \leq \sqrt{1+\delta} \|u\|_{\Sigma}$ (using the RIP of $M$, see e.g. \cite[Section IV, eq. (50)]{Bourrier_2014}), we have   
\begin{equation}
\|z-z'\|_{M,C_\Sigma} = \|z-z'\|_{\sH}+C_\Sigma\|M(z-z')\|_{\sF}\leq (1+ \sqrt{1+\delta}C_\Sigma)\|z-z'\|_{\Sigma}.
\end{equation}
We will conclude by establishing the bound $\|z-z'\|_{\Sigma} \leq 2\sqrt{2} f(x_0-x_1)$.
 
 With the definition of $x_2$, by Proposition~\ref{prop:opt_NCP_group} we have  $K_{j} \|(x_2+z)_j\|_{\sA_j}^* \leq \|(x_2)_{j}\|_{\sA_j}$. 
Using the result from Section~\ref{sec:structured_sparsity} that $\|\cdot\|_\Sigma^2\leq \sum_{j=1,J} \max(\|\cdot\|_{\sA_j}/\sqrt{K_j}, \sqrt{K_j} \|\cdot\|_{\sA_j}^*)^2$ (Corrolary~\ref{lem:CS_struct}) and the expression of $z-z'$:
 \begin{equation}
 \begin{split}
 \|z-z'\|_\Sigma^2 &\leq \frac{4 f(x_0-x_1)^2 }{\left(f(x_2) +  2 f(x_0-x_1)\right)^2}\sum_{j=1,J}  \max(\|(x_2+z)_{j}\|_{\sA_j}/\sqrt{K_j}, \sqrt{K_j}  \|(x_2+z)_{j}\|_{\sA_j}^*)^2\\
 &\leq \frac{4 f(x_0-x_1)^2 }{\left(f(x_2) +  2 f(x_0-x_1)\right)^2} \left(\sum_{j=1,J} \|(x_2+z)_{j}\|_{\sA_j}^2/K_j+ \|(x_2)_{j}\|_{\sA_j}^2/K_j\right)\\
 &\leq \frac{4 f(x_0-x_1)^2 }{\left(f(x_2) +  2 f(x_0-x_1)\right)^2} \left(\left(\sum_{j=1,J} \|x_2+z\|_{\sA_j}/\sqrt{K_j}\right)^2+\left(\sum_{j=1,J} \|x_2\|_{\sA_j}/\sqrt{K_J}\right)^2\right)\\
 &=\frac{4 f(x_0-x_1)^2 }{\left(f(x_2) +  2 f(x_0-x_1)\right)^2} \cdot \left(f(x_2+z)^2+f(x_2)^2\right)\\
 &\stackrel{\eqref{eq:Ineq1}}{\leq} 8 f(x_0-x_1)^2 \\
 \end{split}
 \end{equation}
 When $J = 1$ we let the reader check that the constant $8$ above can be replaced with $4$.
\end{proof}

\begin{remark}\label{rem:robustnucnorm}
The above proof extends to rank-$r$ matrices by considering proper adaptations of the choice of $\vx_{1}$ and $\vx_{2}$ and Proposition~\ref{prop:nuc_decomp} in the next section. With $f$ the nuclear norm, this yields the robustness constant 
 \begin{equation}\label{eq:RobConstNuc}
D := 2(1 + \sqrt{1+\delta}C_\Sigma)/\sqrt{r}.
 \end{equation}
\end{remark}

Beyond these examples, to make sense of Theorem~\ref{th:inst_opt2} in the infinite-dimensional setting, the domain where $f$ is finite must be extended outside of $\sE(\Sigma)$ while keeping a finite robustness constant $D$. This can be done on a case-by-case basis when properties of $M$ and $f$ allow to conclude. 
 
For example, as Adcock and Hansen in \cite{Adcock_2013c}, consider the following setting: $\sH=\ell^2(\sN)$ with Hilbert basis $(e_i)_{i=1,+\infty}$. Consider $\Sigma$ a block sparsity model in $(e_1,..,e_N)$. Let $f=\|\cdot\|_1$. Then $f$ is an extension of the definition of $f_{w}$ in $\sE(\Sigma)$ to the whole space $\sH$ (with $w_j=1$ for all $j$). In \cite{Adcock_2013c}, $M$ is a collection of (Fourier) measurements that have \emph{a strong balancing property}. The important fact here is that this property requires $\|M^HM\|_{\infty} \leq C'$ where $\|\cdot\|_{\infty}$ is the maximum of the $\ell^\infty$-norms of the coefficients of $M^HM$ (where $M^H$ is the Hermitian conjugate of $M$). With such an hypothesis, for any $u \in \sH$, we have:  $\|Mu\|_2^2 = | \ls u, M^HMu \rs| \leq \|M^H M u\|_{\infty} \|u\|_1\leq \|M^HM\|_{\infty} \|u\|_1 \|u\|_1 \leq C' \|u\|_1^2 $. Thus in this case the $M$-norm is bounded by the $\ell^1$-norm : $\|\cdot\|_{M,C} \leq (1+C \sqrt{C'})\|u\|_1$.

\section{Worked examples }  \label{sec:other_ex}

As pointed out in Section~\ref{sec:structured_sparsity}, the sharp RIP conditions $\delta \leq 1/\sqrt{2}$ for classical sparse recovery with the $\ell^{1}$ norm are covered by Theorem~\ref{th:robust_RIP_eucl}. We now demonstrate the flexibility of the general framework of this paper by applying it to other examples. % and more.  

\subsection{Stable recovery from linear subspaces}

Let $\sH$ be a Hilbert space and $\Sigma \subset \sH$ be a linear subspace of $\sH$.  With such a model, it is easy to characterize the atomic norm $\|\cdot\|_{\Sigma}$: we have $\|\vu\|_\Sigma= \|\vu\|_\sH$ if  $\vu \in \sE(\Sigma)=\Sigma$, $\|\vu\|_\Sigma= +\infty$ otherwise. As a result we can characterize $\delta_{\Sigma}(f)$ for two natural regularizers.

\paragraph{Regularizing with the indicator function} Let $f$ be the indicator function of $\Sigma$: $f(\vu)=0$ if $\vu \in \Sigma$, $f(\vu)= +\infty$ otherwise. One can easily check that the set of descent vectors is $\sT_f(\Sigma) = \Sigma-\Sigma = \Sigma + \Sigma = \Sigma$, hence for any $\vz \in \sT_f(\Sigma)\setminus \{0\}$, we can take $\vx = -\vz \in \Sigma$ which yields $\alpha(\vx,\vz)=0$ and $\rho(\vx,\vz)=0$. We conclude that $\delta_\Sigma(f)=1$.

\paragraph{Regularizing with the atomic norm $\|\cdot\|_\Sigma$} Consider $\vz \in \sT_{\Sigma}(\Sigma)$: by definition, there is $\vx \in \Sigma$ such that $\|\vx+\vz\|_\Sigma \leq \|\vx\|_\Sigma = \|\vx\|_{\sH}$. Since $\|\vu\|_{\Sigma} = +\infty$ when $\vu \notin \Sigma$, this implies $x+z \in \Sigma$ and $z \in \Sigma- \Sigma$. Thus, $ \sT_{\Sigma}(\Sigma) \subset \Sigma - \Sigma = \Sigma$ and, as above, we conclude that $\delta_\Sigma(\|\cdot\|_\Sigma)=1$.

\subsection{Low rank matrix recovery} Let $\sH$ be the set of $m \times n$ (real or complex) matrices equipped with the Frobenius norm and the associated inner product $\langle \vx,\vy\rangle = \text{trace}(\vx \vy^{T})$ (respectively $\re(\text{trace}(\vx\vy^{H})$ for the complex case), $\Sigma =\Sigma_r$ be the model set of $m \times n$ matrices of rank at most $r$, and $\sA= \Sigma_1\cap S(1)$ the set of normalized rank one-matrices. We consider the nuclear norm $f(\cdot) =\|\cdot\|_\sA =\|\cdot\|_*$ as the regularizer and verify that the low-rank matrix recovery result of \cite{Cai_2014} is covered by the general framework of this paper. 

We decompose any matrix $\vz$ as $\vz = (\vz-\vz_r)+\vz_r$ where $\vz_r$ is obtained by truncating the singular value decomposition (SVD) of $\vz$ to its $r$ largest singular values ($\vz_{r}$ is not unique if some singular values are equal). We first show that we have $\|\vz-\vz_r\|_* -\|\vz_r\|_* \leq 0$ as soon as $\vz$ is a descent matrix, $\vz \in \sT_f(\Sigma)$. We use this property to prove that $\delta_\Sigma(f)\geq 1/\sqrt{2}$.
\begin{proposition}
\label{prop:nuc_decomp}
 Consider a matrix $\vz \in \sT_f(\Sigma)$ and its SVD $\vz = USV$. Let  $z_{r} = US_{r}V$ where $S_{r}$ is a restriction of the matrix $S$ to its $r$ largest diagonal entries (usually the $r$ first if there is no tie). Then  $\|\vz-\vz_r\|_* -\|\vz_r\|_* \leq 0$.
 \end{proposition}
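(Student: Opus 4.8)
The plan is to show that $x=-\vz_r$ is the \emph{optimal} rank-$\le r$ decomposition point for $\vz$, exactly as $x=-\vz_H$ was optimal in the group-sparse case (Proposition~\ref{prop:opt_NCP_group}). Concretely, I would establish the uniform lower bound
\begin{equation}\label{eq:plan_star}
\|x+\vz\|_* - \|x\|_* \;\ge\; \|\vz-\vz_r\|_* - \|\vz_r\|_*\qquad\text{for every }x\text{ of rank }\le r,
\end{equation}
and then invoke the descent hypothesis. Indeed, since $\vz\in\sT_f(\Sigma)$ with $\Sigma=\Sigma_r$, there is some $\vx_0\in\Sigma_r$ with $\|\vx_0+\vz\|_*\le\|\vx_0\|_*$; applying \eqref{eq:plan_star} to $\vx_0$ gives immediately $\|\vz-\vz_r\|_*-\|\vz_r\|_*\le\|\vx_0+\vz\|_*-\|\vx_0\|_*\le 0$, which is the claim. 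The quantity on the right of \eqref{eq:plan_star} is precisely the value attained by $x=-\vz_r$, so \eqref{eq:plan_star} says that $-\vz_r$ minimizes $x\mapsto\|x+\vz\|_*-\|x\|_*$ over rank-$\le r$ matrices, the matrix analogue of Proposition~\ref{prop:opt_NCP_group}.

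The heart of the argument is \eqref{eq:plan_star}, which I would deduce from Mirsky's theorem (the unitarily invariant version of the Lidskii--Wielandt inequality): for any two matrices $A,B$, ordering singular values decreasingly, $\sum_i |\sigma_i(A)-\sigma_i(B)|\le\|A-B\|_*$. Taking $A=x$ and $B=-\vz$, so that $\sigma_i(B)=\sigma_i(\vz)$ and $\|A-B\|_*=\|x+\vz\|_*$, yields $\|x+\vz\|_*\ge\sum_i|\sigma_i(x)-\sigma_i(\vz)|$. Writing $\xi_i=\sigma_i(x)$ and $s_i=\sigma_i(\vz)$, and using that $x$ has rank at most $r$ (so $\xi_i=0$ for $i>r$ and $\|x\|_*=\sum_{i\le r}\xi_i$), I obtain
\begin{equation}\label{eq:plan_scalar}
\|x+\vz\|_* - \|x\|_* \;\ge\; \sum_i\big(|\xi_i-s_i|-\xi_i\big)
\;=\; \sum_{i\le r}\big(|\xi_i-s_i|-\xi_i\big) + \sum_{i> r} s_i .
\end{equation}
Since $|\xi_i-s_i|-\xi_i\ge -s_i$ for every $i$ (with equality as soon as $\xi_i\ge s_i$), the right-hand side of \eqref{eq:plan_scalar} is bounded below by $\sum_{i>r}s_i-\sum_{i\le r}s_i=\|\vz-\vz_r\|_*-\|\vz_r\|_*$, which is \eqref{eq:plan_star}; the bound is tight precisely when $\xi_i=s_i$ for $i\le r$, i.e. when $x=-\vz_r$.

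The main obstacle is not the reduction above but identifying the \emph{correct} matrix inequality. The naive route would be to use the descent hypothesis through the tangent space of the nuclear norm at $\vx_0$, i.e. the decomposition $\vz=\mathcal P_{T}(\vz)+\mathcal P_{T^\perp}(\vz)$ with $\mathcal P_{T^\perp}(\vz)=(I-P_U)\vz(I-P_V)$; because $\vx_0$ and $\mathcal P_{T^\perp}(\vz)$ have orthogonal row and column spaces one gets $\|\mathcal P_{T^\perp}(\vz)\|_*\le\|\mathcal P_{T}(\vz)\|_*$, but $\mathcal P_{T}(\vz)$ has rank up to $2r$, so this only yields the weaker estimate $\|\vz-\vz_r\|_*\le 2\|\vz_r\|_*$. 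It is the \emph{sorted} singular-value inequality of Mirsky, which aligns the spectra of $x$ and $\vz$ and thereby reduces the problem to the scalar estimate in \eqref{eq:plan_scalar}, that delivers the sharp factor-one bound needed to match the Cai--Zhang constant $\delta_\Sigma(f)\ge 1/\sqrt2$. I would therefore make sure to quote Mirsky's theorem precisely and verify the orthogonality $\re\ls\vz_r,\vz-\vz_r\rs=0$ separately, since the latter (giving $\QOC(-\vz_r,\vz)=0$) is what the subsequent RIP computation relies on.
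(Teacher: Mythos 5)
Your proof is correct and follows essentially the same route as the paper's: both apply Mirsky's sorted-singular-value inequality $\sum_i|\sigma_i(A)-\sigma_i(B)|\le\|A-B\|_*$ to the pair $(\vz,-\vx)$, exploit the rank-$r$ constraint to split the sum, and finish with the scalar bound $|\xi_i - s_i| - \xi_i \ge -s_i$. The only difference is presentational: you first isolate the uniform inequality (showing $-\vz_r$ minimizes $x\mapsto\|x+\vz\|_*-\|x\|_*$ over rank-$\le r$ matrices) and then invoke the descent hypothesis, whereas the paper applies the same chain of inequalities directly to the $\vx$ furnished by that hypothesis.
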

 \begin{proof}
%Let $\vz \in \sT_f(\Sigma)$. 
By definition, since $\vz \in \sT_{f}(\Sigma)\setminus \{0\}$ there exists $\vx \in \Sigma$ such that $\| \vx+\vz\|_* -\|\vx\|_*\leq0$. 
%
%Let $L$ be the rank of $\vz$. Assume first that $L \leq r$, in which case $\vz_{r} = \vz$. By the triangle inequality, we also have $\| \vx+\vz\|_* -\|\vx\|_* \geq -\|\vz\|_* =\|\vz-\vz_r\|_* -\|\vz_r\|_* $ hence the result. 
%
%Now we focus on the case where $L>r$.  
For any $m \times n$ matrix $A$, denote $\sigma_i(A)$ the $i$-th singular value ($\sigma_1(A)\geq \ldots \geq \sigma_n(A) \geq 0$). From \cite[Chapter 7]{Horn_1990}, for any matrices $A,B$, the singular values have the following property: $\|A-B\|_{*} = \sum_{i=1}^n \sigma_i(A-B) \geq\sum_{i=1}^n |\sigma_i(A)-\sigma_i(B)|$. Taking $A= \vz$  and $B=-\vx$ and observing that $\sigma_{i}(\vx) = \sigma_{i}(-\vx) = 0$ for $i>r$ yields
\[% \begin{equation}
 \begin{split}
  \|\vx+\vz\|_* -\|\vx\|_*  &= \sum_{i=1,n} \sigma_i(\vx+\vz) - \sum_{i=1,n} \sigma_i(\vx)\\
&   \geq \sum_{i=1,n} |\sigma_i(\vz) - \sigma_i(-\vx)| - \sum_{i=1,n} \sigma_i(\vx)\\
&   = \sum_{i=1,r} |\sigma_i(\vz) - \sigma_i(-\vx)| - \sum_{i=1,r} \sigma_i(\vx) + \sum_{i=r+1,n} \sigma_i(\vz)\\
&   \geq -\sum_{i=1,r} \sigma_i(\vz) + \sum_{i=r+1,n} \sigma_i(\vz) = \|\vz-\vz_r\|_* -\|\vz_r\|_*.
 \end{split}
\]
% \end{equation}
%We fall on a minimization similar to the one in Proposition~\ref{prop:opt_NCP_group}. The minimum of the right side over all possible rank-$r$ matrices $\vx$ is achieved for $\sigma_i(-\vx)=\sigma_i(\vz)$ for $i=1,r$ and $\sigma_i(\vx)=0$ otherwise (i.e. for $\vx = -\vz_{r}$) yielding 
%\begin{equation} 
%  \|\vx+\vz\|_* -\|\vx\|_*  \geq \|\vz-\vz_r\|_* -\|\vz_r\|_* 
% \end{equation}
%and this bound is reached for $x = -z_r$.
\end{proof}

 %The decomposition $z= z-x+x$ also yields $\alpha_\Sigma(f)\leq 1$.
\begin{proposition}\label{prop:nuc_geom}
With $\Sigma = \Sigma_{r}$ the set of rank $r$ matrices and $\|\cdot\|_{*}$ the nuclear norm, we have  $\delta_\Sigma(\|\cdot\|_*) \geq 1/\sqrt{2}$. 
\end{proposition}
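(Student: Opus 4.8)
The plan is to follow the general recipe for bounding $\delta_\Sigma(f)$ developed just above Section~\ref{sec:theorems}: for every descent matrix $\vz \in \sT_f(\Sigma_r)\setminus\{0\}$ I will exhibit an element $\vx \in \Sigma_r$ with $\QOC(\vx,\vz)=0$ and $\alpha_{\Sigma}(\vx,\vz) \leq 1$, and then invoke the bound $\delta_\Sigma(f) \geq 1/\sqrt{1+\alpha_\Sigma(f)}$. The crucial point is that the set of rank-$r$ matrices is homogeneous (stable under multiplication by any real scalar), hence a UoS, so the relevant quantity is $\delta^{\textrm{UoS}}_\Sigma$; at $\QOC=0$ this expression reduces exactly to $1/\sqrt{1+\alpha}$, and $\alpha\leq 1$ then yields $1/\sqrt2$. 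The candidate decomposition is $\vx \defin -\vz_r$, where $\vz_r$ is the truncation of the SVD of $\vz$ to its $r$ largest singular values as in Proposition~\ref{prop:nuc_decomp}, so that $\vx \in \Sigma_r$ and $\vx+\vz = \vz-\vz_r$.

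The quasi-orthogonality $\QOC(\vx,\vz)=0$ will be immediate. Writing the SVD $\vz = \sum_i \sigma_i(\vz)\, u_i v_i^{*}$ with $(u_i)$ and $(v_i)$ orthonormal families, the rank-one matrices $u_i v_i^{*}$ are pairwise orthogonal and normalized for the Frobenius inner product. Consequently $\vz_r = \sum_{i \leq r}\sigma_i(\vz)\, u_i v_i^{*}$ and $\vz-\vz_r = \sum_{i > r}\sigma_i(\vz)\, u_i v_i^{*}$ are Frobenius-orthogonal, so that $\re \ls \vx,\vx+\vz \rs = -\re \ls \vz_r, \vz-\vz_r \rs = 0$.

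The heart of the argument, and where I expect the real work, is the bound $\alpha_\Sigma(\vx,\vz) = \|\vx+\vz\|_\Sigma^2/\|\vx\|_\sH^2 \leq 1$, i.e. $\|\vz-\vz_r\|_\Sigma \leq \|\vz_r\|_\sH$. The same SVD exhibits $\vz-\vz_r$ as a combination of pairwise orthogonal normalized rank-one atoms whose coefficients are the tail singular values, so Lemma~\ref{lem:CS_struct_finite} (applied with $\sA$ the normalized rank-one matrices and $K=r$) gives $\|\vz-\vz_r\|_\Sigma \leq \max\big(\|\vz-\vz_r\|_*/\sqrt{r},\ \sqrt{r}\,\sigma_{r+1}(\vz)\big)$. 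Two elementary estimates then control both terms by $\|\vz_r\|_*/\sqrt{r}$: first, $\|\vz-\vz_r\|_* \leq \|\vz_r\|_*$ by Proposition~\ref{prop:nuc_decomp}; second, $r\,\sigma_{r+1}(\vz) \leq r\,\sigma_r(\vz) \leq \sum_{i=1}^r \sigma_i(\vz) = \|\vz_r\|_*$. Hence $\|\vz-\vz_r\|_\Sigma \leq \|\vz_r\|_*/\sqrt{r}$, and the Cauchy--Schwarz estimate $\|\vz_r\|_* = \sum_{i=1}^r \sigma_i(\vz) \leq \sqrt{r}\,\|\vz_r\|_\sH$ closes the bound.

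Combining $\QOC(\vx,\vz)=0$ with $\alpha_\Sigma(\vx,\vz)\leq 1$, uniformly over descent matrices, gives $\alpha_\Sigma(\|\cdot\|_*)\leq 1$ and therefore $\delta_\Sigma(\|\cdot\|_*) \geq 1/\sqrt{1+1} = 1/\sqrt2$. The main obstacle is the faithful transfer of the Cai--Zhang polytope decomposition (Lemma~\ref{lem:CS_struct_finite}) from the sparse/group-sparse setting to the spectral one; this rests entirely on recognizing the SVD as a pairwise-orthogonal normalized atomic decomposition, after which the computation is the exact matrix analog of Proposition~\ref{prop:geom_structured}, with the nuclear norm playing the role of $\|\cdot\|_\sA$ and the spectral norm $\sigma_{r+1}(\vz)$ playing the role of $\|\vx+\vz\|_\sA^{*}$.
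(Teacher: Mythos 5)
Your proof is correct and follows essentially the same route as the paper: the same decomposition $\vx=-\vz_r$ via SVD truncation, the same appeal to Proposition~\ref{prop:nuc_decomp} and Lemma~\ref{lem:CS_struct_finite} to get $\|\vz-\vz_r\|_\Sigma \leq \|\vz_r\|_*/\sqrt{r} \leq \|\vz_r\|_\sH$, and the same conclusion via the UoS formula at $\QOC=0$. Your explicit chain $r\,\sigma_{r+1}(\vz)\leq r\,\sigma_r(\vz)\leq \|\vz_r\|_*$ is just a spelled-out version of the paper's bound $r\|\vz-\vz_r\|_{op}\leq\|\vz_r\|_*$.
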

\begin{proof} 
Let $\vz \in \sT_{\|\cdot\|_*}(\Sigma_{r})$ a descent matrix for the nuclear norm, and $\vx = -\vz_r$. With Proposition~\ref{prop:nuc_decomp}, we have $\|\vz-\vz_r\|_* -\|\vz_r\|_* \leq 0$. Since $\Sigma$ is a UoS we use~\eqref{eq:DefDeltaUoSMain} to compute
\begin{equation}
 \delta_{\Sigma}(\vx,\vz) = \frac{\|\vx\|_2^2}{\|\vx\|_2 \sqrt{ \|\vx+\vz\|_\Sigma^2 -\|\vx\|_2^2 +2\|\vx\|_2^2  }}= \frac{1}{ \sqrt{ \|\vz-\vz_r\|_\Sigma^2/\|\vz_r\|_2^2 +1 }}
\end{equation}
Recalling that the operator norm of a matrix $A$ is $\|A\|_{op}= \tmax_i \sigma_i(A) $ and that the SVD gives an optimal atomic decomposition for the nuclear norm with pairwise orthogonal atoms, Lemma~\ref{lem:CS_struct_finite} yields
\begin{equation}
\|\vz-\vz_{r}\|_{\Sigma}
 \leq \max(\|\vz-\vz_r\|_*/\sqrt{r}, \|\vz-\vz_r\|_{op}\sqrt{r} ).
\end{equation}
We have $r\|\vz-\vz_r\|_{op} \leq  \|\vz_r\|_*$ (similarly to Proposition~\ref{prop:opt_NCP_group}) and $\|\vz-\vz_r\|_* \leq \|\vz_r\|_*$, hence $\|\vz-\vz_{r}\|_{\Sigma} \leq  \|\vz_r\|_* /\sqrt{r} \leq \|\vz_r\|_2$ and  we conclude that $\delta_{\Sigma}(\vx,\vz) \geq 1/\sqrt{2}$. Consequently, $\delta_\Sigma(\|\cdot\|_*)\geq 1/\sqrt{2}$
\end{proof}

With Proposition~\ref{prop:nuc_geom}, we have verified: when $\mM$ satisfies the RIP with constant $\delta < \frac{1}{\sqrt{2}}$ on $\Sigma_r-\Sigma_r=\Sigma_{2r}$  (the set of matrices with rank at most $2r$), nuclear norm regularization is guaranteed to yield stable uniform recovery of matrices in $\Sigma_{r}$.

\section{Regularizing with the model's atomic norm $\|\cdot\|_{\Sigma}$ ?}\label{sec:sigma_norm}

The atomic norm $\|\cdot\|_\Sigma$ intrinsically associated to the considered model set $\Sigma$ appears in the definition of the RIP constant $\delta_{\Sigma}(f)$. Can this norm be directly used as a regularizer $f$, with recovery guarantees ? 

In simple cases such as $k$-sparse vector recovery with $k=1$, we have seen that regularizing with $f(\cdot)  = \|\cdot\|_{\Sigma}$ indeed yields uniform recovery. We show below that this ``natural'' regularizer can still used for arbitrary cone models somehow close to the set of one-sparse vectors: unions of one-dimensional half-lines. We then show examples where it is impossible to achieve uniform recovery with $\|\cdot\|_\Sigma$ .

\subsection{Recovery of a finite union of one-dimensional half-spaces} \label{sec:one_dim_union}

Consider $\Sigma$ a cone such that $\Sigma \cap S(1)$ finite: $\Sigma$ is a finite union of one-dimensional half-spaces (half-{\em lines}). A particular example is the set of $k$-sparse vectors with $k=1$. In such a simple setting, for any $\vz \in \sE(\Sigma) $ there is an optimum atomic decomposition $c_i \in \bRp, \va_i \in \Sigma\cap S(1)$ such that $\sum c_i \va_i = \vz$ and $\|\vz\|_\Sigma = \sum c_i$. Because $\Sigma \cap S(1)$ is finite, the coherence 
\begin{equation}
\label{eq:DefCoherence}
 \mu(\Sigma) \defin \max_{\vx, \vy \in \Sigma \cap S(1), \vx \neq \pm \vy} |\ls \vx,\vy \rs|
 \end{equation}
 satisfies $\mu(\Sigma)<1$. We use it to show that uniform recovery is possible by regularization with $\|\cdot\|_\Sigma$.

\begin{proposition}\label{prop:union_1d_halfspace}
If $\Sigma$ is a cone such that $\Sigma \cap S(1)$ is a finite subset of $S(1)$, then
\begin{equation}
\label{eq:RIPCoher}
\delta_\Sigma(\|\cdot\|_\Sigma)\geq \frac{2(1-\mu(\Sigma)) }{ 3+2\mu(\Sigma) }>0.
\end{equation}
If additionally $\Sigma$ is a UoS we have 
\begin{equation}
 \delta_\Sigma(\|\cdot\|_\Sigma)\geq \frac{1-\mu(\Sigma) }{ \sqrt{2(1+\mu(\Sigma)) }}>0.
\end{equation}
\end{proposition}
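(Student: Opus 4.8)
The plan is to bound $\delta_\Sigma(\|\cdot\|_\Sigma)$ from below by producing, for each nonzero descent vector $\vz \in \sT_\Sigma(\Sigma)$, a model element $\vx \in \Sigma$ for which $\delta_\Sigma(\vx,\vz)$ is at least the claimed value. I will work through the quasi-orthogonality constant $\QOC(\vx,\vz)$ and the quasi-descent constant $\alpha_\Sigma(\vx,\vz)$, using the identities established in the proof of Theorem~\ref{th:robust_RIP_eucl_precise}, namely $\delta^{\mathrm{UoS}}_\Sigma(\vx,\vz) = (1-\QOC)/\sqrt{1+\alpha-2\QOC}$ and $\delta^{\mathrm{cone}}_\Sigma(\vx,\vz) = 2(1-\QOC)/(\alpha + 2(1-\QOC))$. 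Thus it suffices to exhibit $\vx$ with $\QOC(\vx,\vz)$ small and $\alpha_\Sigma(\vx,\vz)$ controlled by $\mu(\Sigma)$.

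First I would extract a witness of the descent property: by definition there is $\vx_0 = t\va \in \Sigma$ with $\va \in \Sigma\cap\sS(1)$ and $t>0$ (the case $t=0$ forces $\vz=0$) such that $\|\vx_0+\vz\|_\Sigma \le \|\vx_0\|_\Sigma = t$. Since $\|\cdot\|_\Sigma \ge \|\cdot\|_\sH$ this gives $\|t\va + \vz\|_\sH \le t$, which expands to $\re\ls\va,\vz\rs \le -\|\vz\|_\sH^2/(2t) < 0$. Writing $\beta \defin \re\ls \va,\vz\rs < 0$ and decomposing $\vz = \beta\va + \vw$ with $\vw \perp \va$, I would choose $\vx \defin -\beta\va = |\beta|\va \in \Sigma$. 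This makes $\QOC(\vx,\vz)=0$ and $\vx+\vz = \vw$, so that $\alpha_\Sigma(\vx,\vz) = \|\vw\|_\Sigma^2/\beta^2$, reducing everything to bounding $\|\vw\|_\Sigma$ from above by a multiple of $|\beta|$ depending only on $\mu(\Sigma)$.

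The core estimate is this bound on $\|\vw\|_\Sigma$. I would take an optimal nonnegative atomic decomposition $\vx_0+\vz = \sum_i c_i\va_i$ with $\va_i \in \Sigma\cap\sS(1)$, $c_i\ge0$ and $\sum_i c_i = \|\vx_0+\vz\|_\Sigma \le t$, isolating the coefficient $c_1$ of $\va$ itself. Projecting onto $\va$ and using $|\ls\va,\va_i\rs|\le\mu$ for every atom $\va_i\ne\pm\va$ bounds the off-$\va$ mass $P \defin \sum_{i\ge2}c_i$ by $|\beta|/(1-\mu)$; since $\vw = \sum_{i\ge2}c_i(\va_i - \ls\va,\va_i\rs\va)$, the triangle inequality then yields $\|\vw\|_\Sigma \le (1+\mu)P \le (1+\mu)|\beta|/(1-\mu)$, hence $\alpha_\Sigma(\vx,\vz) \le ((1+\mu)/(1-\mu))^2$. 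In the UoS case $\pm\va\in\Sigma$, so this computation is unobstructed, and plugging the $\alpha$-bound into the UoS formula gives the claimed $\delta_\Sigma(\|\cdot\|_\Sigma) \ge (1-\mu)/\sqrt{2(1+\mu)}$ (in fact a slightly sharper constant).

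The main obstacle is the cone case, where $-\va$ need not lie in $\Sigma$ and may have infinite $\Sigma$-norm. The triangle-inequality step above then breaks down precisely when the $\va$-component $c_1 - s$ (with $s \defin t+\beta$) is negative, i.e. when the decomposition of $\vx_0+\vz$ puts little mass on $\va$ itself; in that regime one cannot absorb the residual $-\va$ direction. I would resolve this by a case analysis: when $c_1 \ge s$ the nonnegative combination $(c_1-s)\va + \sum_{i\ge2}c_i\va_i$ is admissible and the above bound applies verbatim, whereas when $c_1 < s$ I would instead keep $\vx = \lambda\va$ with $\lambda$ chosen just large enough that $\vx+\vz$ remains a nonnegative atomic combination (so that $\|\vx+\vz\|_\Sigma$ is finite and controlled by $\mu$), trading $\QOC=0$ for a controlled positive $\QOC$ and optimizing the resulting expression of $\delta^{\mathrm{cone}}_\Sigma$. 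Carrying the coherence bounds through this optimization is expected to produce the stated constant $2(1-\mu)/(3+2\mu)$; positivity of both bounds is then immediate from $\mu(\Sigma)<1$, which holds because $\Sigma\cap\sS(1)$ is finite.
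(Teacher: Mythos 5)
Your UoS argument is correct, and it takes a genuinely different route from the paper's: you subtract the \emph{orthogonal projection}, choosing $\vx=-\beta\va$ with $\beta=\re\ls\va,\vz\rs$, which buys $\QOC(\vx,\vz)=0$ at the price $\alpha_\Sigma(\vx,\vz)\leq(1+\mu)^2/(1-\mu)^2$; the paper instead subtracts the \emph{atomic coefficient}, choosing $\vx'=(t-c_1)\va$, which buys $\alpha_\Sigma(\vx',\vz)\leq 1$ at the price $|\QOC(\vx',\vz)|\leq\mu$. In the UoS formula $\delta^{\mathrm{UoS}}_\Sigma=(1-\QOC)/\sqrt{1+\alpha-2\QOC}$ your trade-off is at least as good, and indeed yields $(1-\mu)/\sqrt{2+2\mu^2}\geq(1-\mu)/\sqrt{2(1+\mu)}$, slightly sharper than the paper. (One point both you and the paper gloss over: if $-\va\in\Sigma$, an optimal decomposition may put mass on $-\va$, for which $|\ls\va,-\va\rs|=1>\mu$; in your setup this is harmless since that component drops out of $\vw=\sum_{i}c_i(\va_i-\ls\va,\va_i\rs\va)$, but it deserves a sentence.)

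The cone case, however, has a genuine gap, and it sits precisely in the sub-case you declare settled. In your case 1 ($c_1\geq s$), your bounds $\QOC=0$, $\alpha\leq(1+\mu)^2/(1-\mu)^2$ plugged into $\delta^{\mathrm{cone}}_\Sigma=2(1-\QOC)/(\alpha+2(1-\QOC))$ give only
\begin{equation*}
\delta^{\mathrm{cone}}_\Sigma \;\geq\; \frac{2(1-\mu)^2}{(1+\mu)^2+2(1-\mu)^2}\;=\;\frac{2(1-\mu)^2}{3-2\mu+3\mu^2},
\end{equation*}
which is \emph{strictly smaller} than the claimed $\tfrac{2(1-\mu)}{3+2\mu}$ whenever $\mu>1/5$ (at $\mu=1/2$: $2/11$ versus $1/4$). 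The reason is structural: in the cone formula $\alpha$ enters linearly, not under a square root, so paying $\alpha\approx((1+\mu)/(1-\mu))^2$ to obtain $\QOC=0$ is a bad trade; one must keep $\alpha\leq 1$ and tolerate $|\QOC|\leq\mu$. Your case-2 choice does exactly this: with the minimal admissible $\lambda=t-c_1$ one gets $\vx+\vz=\sum_{i\geq2}c_i\va_i$, hence $\|\vx+\vz\|_\Sigma\leq\sum_{i\geq2}c_i\leq t-c_1=\|\vx\|_\sH$, i.e.\ $\alpha\leq1$ and $|\QOC|\leq\mu$, so $\delta^{\mathrm{cone}}_\Sigma\geq\tfrac{2(1-\mu)}{3-2\mu}\geq\tfrac{2(1-\mu)}{3+2\mu}$. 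Crucially, this choice is admissible in case 1 as well, since $t-c_1\geq0$ always (because $c_1\leq\sum_i c_i\leq t$, with equality forcing $\vz=0$). So the correct proof drops the case split and uses this single choice of $\vx$ throughout --- which is exactly the paper's proof. As written, your argument establishes the cone bound of the proposition only for $\mu\leq1/5$, and the case-2 optimization that would rescue it is asserted (``expected to produce the stated constant'') rather than carried out.
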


\begin{proof}
Consider a descent vector $\vz \in \sT_\Sigma(\Sigma)\setminus \{0\}$. By definition there is $\vx \in \Sigma$ such that $\|\vx+\vz\|_\Sigma \leq \|\vx\|_\Sigma = \|\vx\|_{\sH}$, and we can decompose $\vx+\vz$ as a positive linear combination of $\va_i$ (otherwise we would have $\|\vx+\vz\|_\Sigma = +\infty$). Let $\vx+\vz= \sum c_i \va_i$ be an optimal atomic decomposition of $\vx+\vz$ such that $\|\vx+\vz\|_\Sigma= \sum_i c_i$, $c_i \geq 0$ and $ a_{i_1} \neq \pm a_{i_2}$ for any $i_1, i_2$. 
Let $i_{0}$ such that $\vx= \|\vx\|_{\sH} \va_{i_0}$ and $\vx' \defin \vx- c_{i_{0}} \va_{i_0} =(\|\vx\|_{\sH}-c_{i_{0}})\va_{i_0}$. Since $\vx'+\vz =  \sum_{i\neq {i_0}} c_i \va_{i}$ we have
\begin{equation}
\|\vx'+\vz\|_\Sigma \leq \sum_{i\neq {i_0}} c_i  \leq \sum_{i} c_i -c_{i_0} = \|\vx+\vz\|_{\Sigma}-c_{i_{0}} \leq \|\vx\|_{\sH} -c_{i_0}. 
\end{equation}
Thus $\|\vx\|_{\sH} \geq c_{i_0}$ and $\vx' \in \Sigma$ with $\|\vx\|_{\sH}-c_{i_0} = \|x'\|_\sH \geq \|\vx'+\vz\|_{\Sigma}$.
Moreover
\begin{equation}
\begin{split}
|\ls \vx' , \vx'+\vz \rs |&=| \ls \vx', \sum_{i\neq {i_0}}  c_i a_i\rs|
	        \leq   \sum_{i \neq {i_0}} c_i |\ls x',a_i\rs |
	        \leq (\sum_{i \neq i_{0}} c_{i}) \mu \|\vx'\|_{\sH}
	        \leq \mu \|\vx'\|_\sH^2.
\end{split}
\end{equation}
We have thus found $\vx' \in \Sigma$ such that $ (1+\mu)\|\vx'\|_\sH^2 \geq -\ls \vx',\vz\rs \geq (1-\mu) \|\vx'\|_\sH^2$  and, with the expression of $\delta_\Sigma(\|\cdot\|_\Sigma)$ for cones (Equation~\eqref{eq:DefDeltaCone}):
\begin{equation}
\delta_\Sigma(\|\cdot\|_\Sigma) \geq  \frac{2(1-\mu) \|\vx'\|_\sH^2}{ \|\vx'+z\|_\Sigma^2 +2(1+\mu)\|\vx'\|_\sH^2 } \geq  \frac{2(1-\mu)}{ 3 +2\mu }.
\end{equation}

If additionally, $\Sigma$ is a UoS, we have (Equation~\eqref{eq:DefDeltaUoS})
\begin{equation}
\delta_\Sigma(\|\cdot\|_\Sigma) \geq  \frac{(1-\mu) \|\vx'\|_\sH^2}{ \|\vx'\|_\sH \sqrt{\|\vx'+z\|_\Sigma^2 -\|\vx'\|_\sH^2  +2(1+\mu)\|\vx'\|_\sH^2} }   \geq  \frac{1-\mu}{ \sqrt{2(1+\mu)} }.
\end{equation}
\end{proof}

When $\mu =0$, $\delta_\Sigma(\|\cdot\|_\Sigma) \geq 2/3$ for cones and $\delta_\Sigma(\|\cdot\|_\Sigma) \geq 1/\sqrt{2}$ for UoS. This last case is exactly the result for 1-sparse vectors recovery and the $\ell^1$-norm ($\|\cdot\|_\Sigma$ is the $\ell^1$-norm).

\subsection{Recovery of a finite point cloud} Let $\Sigma$ be a finite point cloud containing $r$ points in a Hilbert space $\sH$. Consider $\sA =(\bRp\Sigma) \cap S(1)$, the set of normalized elements of $\Sigma$. Then $\|\cdot\|_\sA =\|\cdot\|_{\Sigma'}$ where $\Sigma'=\bRp\Sigma$. We just saw that elements of $\Sigma' \supset \Sigma$ can be stably recovered by regularization with $\|\cdot\|_{\Sigma'}$ provided that the measurement operator $\mM$ satisfies the RIP on the secant set $\Sigma'-\Sigma' = \bRp\Sigma-\bRp\Sigma$ with an appropriate constant $\delta$. From \cite{Puy_2015}, we just need to calculate the covering number of the normalized secant set $(\bRp\Sigma-\bRp\Sigma) \cap S(1)$ to construct random subgaussian measurements satisfying the desired RIP on this set with high probability. 

This normalized secant 
set is the intersection of the unit sphere with a finite union of $\binom{r}{2}$ two-dimensional subspaces. Thus for $0<\epsilon \leq 1/2$,  $(\bRp\Sigma-\bRp\Sigma) \cap S(1)$ is covered by $N(\epsilon)$ balls with :
\[%\begin{equation}
\begin{split}
N(\epsilon) &\leq  \binom{r}{2}\left(\frac{C}{\epsilon} \right)^2 
            \leq  \left( \frac{r e}{2} \right)^2 \left(\frac{C}{\epsilon} \right)^2.
\end{split}
\]%\end{equation}
Consequently, from \cite{Puy_2015}, it is possible to construct a random linear operator $\mM:\sH \to \bR^{m}$ with 
\(
m \geq O(\log(r)/\delta^2)
\) 
satisfying with high probability the RIP with constant $\delta$ on the secant set. This number of measurements is of the same order as in the non-uniform guarantees from \cite{Chandrasekaran_2012} provided that the required RIP constant $\delta$ does not depend on the dimensions of the problem (see next section for an example), which is not obvious given the possible dependency on $\mu(\Sigma')$. In the Johnson-Lindestrauss Lemma, the existence of measurement operators with RIP on $\Sigma-\Sigma$ (as opposed to $\Sigma'-\Sigma'$) is guaranteed provided $m \geq O(\log(r)/\delta^2)$ \cite{Johnson_1984}. In our case, the same dimension relations are required. We however give a {\em convex} decoder $\|\cdot\|_{\Sigma'}$ that can recover elements of the point cloud. This decoder my not be easily computable if no other hypothesis is available on the model set.

\subsection{Recovery of permutation matrices} Based on the above generic analysis for finite point clouds, we can deduce a number of measurements sufficient for uniform recovery of permutation matrices in $\bR^{n\times n}$ (thus $\sH$ is a real Hilbert space in this case). In this setting, $\Sigma'_{n}=\bRp\cdot \Sigma_{n}$ where the set $\Sigma_{n}$ is made of $r = n!$ points, and we deduce that $m \geq O(n \log(n)/\delta^2)$ measurements are sufficient for uniform recovery. It leaves us however to determine $\delta_{\Sigma'_{n}}(\|\cdot\|_{\Sigma'_{n}})$. 

%\paragraph{Naive approach} 
If we use the generic correlation bound of $\delta_{\Sigma'_{n}}(\|\cdot\|_{\Sigma'_{n}})$ from Proposition~\ref{prop:union_1d_halfspace}, the coherence depends on the dimension of the problem: we let the reader check that $\mu(\Sigma'_{n})= 1- 2/n$ and that $\delta_{\Sigma'_{n}}(\|\cdot\|_{\Sigma'_{n}}) = O(1/n)$, suggesting $m \geq O(n^{3} \log(n))$ which is even larger than the ambient dimension $n^{2}$.  

In fact, we can bound $\delta_{n} = \delta_{\Sigma'_{n}}(\|\cdot\|_{\Sigma'_{n}})$ more precisely by a constant independent of the ambient dimension, showing that stable uniform recovery of permutation matrices with the regularizer $\|\cdot\|_{\Sigma'_{n}}$ is possible with $O(n \log(n))$ measurements. This qualitatively extends to {\em uniform stable recovery} the non-uniform recovery results of \cite{Chandrasekaran_2012}, and also shows that the dimension argument holds beyond Gaussian measurements.

\begin{proposition}
Let $\Sigma$ be the set of $n\times n$ permutation matrices and $\Sigma' = \bR^{+} \Sigma$. We have  
\begin{equation}\label{eq:RIPPermutations}
\delta_{\Sigma'}(\|\cdot\|_{\Sigma'}) \geq \frac{2}{3}.
\end{equation}
\end{proposition}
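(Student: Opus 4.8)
The plan is to follow the template of Proposition~\ref{prop:union_1d_halfspace} together with the quasi-orthogonality/quasi-descent decomposition of~\eqref{eq:DefQOC}--\eqref{eq:DefQDC}: for each descent vector $\vz\in\sT_{\Sigma'}(\Sigma')\setminus\{0\}$ I want to exhibit a single model point $\vx\in\Sigma'$ with $\QOC(\vx,\vz)=0$ and $\alpha_{\Sigma'}(\vx,\vz)\le 1$. The cone formula~\eqref{eq:DefDeltaCone} then gives $\delta^{cone}_{\Sigma'}(\vx,\vz)=\dfrac{2}{2+\alpha_{\Sigma'}(\vx,\vz)}\ge \tfrac23$, and taking the supremum over $\vx$ and the infimum over $\vz$ yields $\delta_{\Sigma'}(\|\cdot\|_{\Sigma'})\ge\tfrac23$. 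The entire difficulty is in producing such an $\vx$ from the combinatorial geometry of permutation matrices (Birkhoff--von Neumann), rather than from the crude coherence bound $\mu(\Sigma'_n)=1-2/n$, which only gives an $O(1/n)$ constant.

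First I would use symmetry: left/right multiplication by a fixed permutation matrix is a Hilbert-space isometry mapping $\Sigma'$ onto itself and preserving both $\ls\cdot,\cdot\rs$ and $\|\cdot\|_{\Sigma'}$, so $\delta_{\Sigma'}(\vx,\vz)$ is invariant under $(\vx,\vz)\mapsto(P\vx Q,P\vz Q)$. Given $\vz$, pick a descent witness $\vx_1=s_1P_0\in\Sigma'$ with $\|\vx_1+\vz\|_{\Sigma'}\le\|\vx_1\|_{\Sigma'}=s_1\sqrt n$; after an isometry I may assume $P_0=I$. Finiteness of the norm forces $W\defin\vx_1+\vz\in\sE(\Sigma')=\bRp\cdot\overline{\tconv}(\Sigma\cap\sS(1))$, so $\sqrt n\,W$ is a nonnegative matrix with all row and column sums equal; by Birkhoff's theorem it is a nonnegative combination of permutation matrices and $\|W\|_{\Sigma'}=\sqrt n\cdot\omega$ where $\omega$ is its common row sum, with $\omega\le s_1$. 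I would then try $\vx=cI$ and fix $c>0$ by the single scalar equation $\QOC(cI,\vz)=0$, i.e.\ $\ls I,\vx+\vz\rs=0$; using $\ls I,I\rs=n$ and $\ls I,W\rs=\mathrm{tr}\,W$ this gives $c=s_1-\mathrm{tr}(W)/n$, which is positive unless $\vz=0$.

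With this choice $\vx+\vz=W-\tfrac{\mathrm{tr}(W)}{n}\,I$ still has equal row and column sums, namely $\omega-\mathrm{tr}(W)/n\ge0$, so that $\|\vx+\vz\|_{\Sigma'}=\sqrt n\,(\omega-\mathrm{tr}(W)/n)$ while $\|\vx\|_\sH=c\sqrt n=(s_1-\mathrm{tr}(W)/n)\sqrt n$; since $\omega\le s_1$ and $\mathrm{tr}(W)/n\le\omega$, this yields $\alpha_{\Sigma'}(\vx,\vz)\le1$ and hence $\delta_{\Sigma'}(\vz)\ge\delta^{cone}_{\Sigma'}(\vx,\vz)\ge\tfrac23$. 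The main obstacle is precisely the non-negativity of $\vx+\vz$: forcing $\QOC=0$ subtracts the \emph{average} diagonal $\mathrm{tr}(W)/n$ from $W$, which can drive entries $W_{ii}<\mathrm{tr}(W)/n$ below zero and push $\vx+\vz$ out of $\sE(\Sigma')$, where the atomic norm is $+\infty$. Overcoming this is where the specific structure of the Birkhoff polytope must enter: the model point $\vx$ should be chosen adaptively to the heavy diagonal mass of $W$ (a scaled permutation other than $I$ that matches the positions where $W$ concentrates, or a convex average of such points) so that non-negativity is preserved, $\QOC$ stays at $0$, and the common row sum of $\vx+\vz$ does not exceed $c$. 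I expect this structural selection step, not the final substitution into~\eqref{eq:DefDeltaCone}, to be the delicate part of the argument.
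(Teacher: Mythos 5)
You have the right reduction (the symmetry argument, the Birkhoff structure $\vz=\alpha\vu-\vv$ with $\vu$ bi-stochastic, $\vv$ the identity, $0<\alpha\le 1$) and the right target (if $\QOC(\vx,\vz)=0$ and $\alpha_{\Sigma'}(\vx,\vz)\le 1$ then the cone formula gives $2/(2+\alpha_{\Sigma'})\ge 2/3$), but the obstacle you flag at the end is not a deferrable ``delicate step'': it is fatal to the program of forcing exact quasi-orthogonality, and neither of your proposed escape routes exists. The competitor $\vx$ must lie in $\Sigma'=\bRp\Sigma$, i.e.\ it must be a nonnegative multiple of a \emph{single} permutation matrix; a ``convex average of such points'' is a scaled bi-stochastic matrix, which is not in the model, so that option is unavailable. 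And if $\vx=\beta_0 P$ with $P\neq I$, some diagonal position $i$ has $P(i,i)=0$, whence $(\vx+\vz)(i,i)=\alpha u(i,i)-1<0$ except in the degenerate case $\alpha=u(i,i)=1$; so finiteness of $\|\vx+\vz\|_{\Sigma'}$ generically forces $P=I$. Once $P=I$ is forced, finiteness requires $\beta_0\ge 1-\alpha u_{\min}$ (with $u_{\min}\defin\min_i u(i,i)$), whereas $\QOC(\beta_0 I,\vz)=0$ requires $\beta_0=1-\alpha\,\mathrm{tr}(\vu)/n\le 1-\alpha u_{\min}$, with equality only when the diagonal of $\vu$ is constant. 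Hence exact quasi-orthogonality with finite atomic norm is impossible whenever $\vu$ has a non-constant diagonal, which is the generic case.

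The paper closes exactly this gap by giving up on $\QOC=0$. It takes $\vx_0=\beta_0\vv$ subject only to the feasibility constraint $\beta_0\ge 1-\alpha u_{\min}$, substitutes into the exact cone expression $\delta^{cone}_{\Sigma'}(\vx_0,\vz)=\dfrac{-2\ls\vx_0,\vz\rs}{\|\vx_0+\vz\|_{\Sigma'}^2-2\ls\vx_0,\vz\rs}$, observes that over the feasible range this is best at the boundary $\beta_0=1-\alpha u_{\min}$, and then lower-bounds the resulting rational function of $(\alpha,u_{\min},\mathrm{tr}(\vu))$ by $2/3$; the whole burden falls on the elementary fact $\mathrm{tr}(\vu)-u_{\min}\le n-1$ (each diagonal entry of a bi-stochastic matrix is at most $1$), which shows the correlation penalty $\frac{(\alpha(1-u_{\min}))^2}{(1-\alpha u_{\min})(n-\alpha\,\mathrm{tr}(\vu))}$ is at most $1$, so $\delta\ge 1/(\tfrac12+1)=2/3$. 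In other words, the constant $2/3$ does not come from a $(\QOC,\alpha_{\Sigma'})=(0,1)$ decomposition at all; it comes from optimizing $\beta_0$ against the non-negativity constraint and absorbing a nonzero correlation defect, and this is the step missing from your proposal. (A side remark: the paper's computation uses the gauge of the Birkhoff polytope itself, $\|\alpha\vu\|_{\Sigma'}=\alpha$, whereas your $\sqrt n$ factors follow the normalized-atom convention of the preliminaries; the two differ by $\sqrt n$ and the algebra changes accordingly, so fix one convention before attempting the optimization.)
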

\begin{proof}
Here $\Sigma'$ is a cone, and $\|.\|_{\Sigma'}$ is the gauge generated by the Birkhoff polytope of bi-stochastic matrices (non-negative matrices which sums of values over each row and column equal 1).  
The norm $\|\vx\|_{\Sigma'}$ is thus infinite unless $\vx = \alpha \vu$ with $\alpha\geq 0$ and $\vu$ is bi-stochastic, in which case $\|\vx\|_{\Sigma'} = \alpha$ is the sum of the entries of $\vx$ over the first row (or any other row or column since all such sums are equal).

Let $\vz \in \sT_{\Sigma'}(\Sigma')\setminus \{0\}$. There is $\vx \in \Sigma'$ such that $\|\vx+\vz\|_{\Sigma'} \leq \|\vx\|_{\Sigma'} = \|\vx\|_{\sH}< \infty$ hence $\vx+\vz = \alpha \vu$ for $0 \leq \alpha \leq \|\vx\|_{\sH}$ and some bi-stochastic matrix $\vu$.
  In particular we must have $\vz(i,j) \geq -\vx(i,j)$ for all row and column indices $i,j$, as well as $\sum_j (\vx(i,j) +\vz(i,j)) = \|\vx+\vz\|_{\Sigma'} \leq \|\vx\|_{\Sigma'} = \sum_j \vx(i,j)$ for all $i$. This implies $\sum_j \vz(i,j) \leq 0$ for all $i$. Thus each line (and similarly each column) of $\vz$ has exactly one negative coordinate and $\vz = \alpha \vu - \beta \vv$  where $\vu$ is bi-stochastic, $\vv$ is a permutation matrix and $\alpha \leq \beta = \|\vx\|_{\Sigma'}$. Reciprocally, all such $\vz$ can be checked to be elements of $\sT_{\Sigma'}(\Sigma')$.  Without loss of generality we can consider the case $\beta =1$, $0 < \alpha \leq 1$ and (up to permuting the lines and columns of all considered matrices) we can assume that $\vv$ is the identity matrix. 

We are now going to choose $\vx_0 = \beta_0 \vv$ to maximize $\delta_{\Sigma'}(\vx_{0},\vz)$. First, we wish to ensure that  $ \|\vx_0+\vz\|_{\Sigma'} < + \infty$. This is equivalent to $\beta_{0}\vv + \alpha \vu-\vv\geq 0$ entry-wise. The only constraint is on the diagonal and reads $\beta_0-1 +\alpha u(i,i) \geq 0 $ for all $i$, which we rewrite $\beta_0\geq 1-\alpha u_{min}$ with $u_{min} \defin \min_{i} u(i,i)$. We then compute using~\eqref{eq:DefDeltaCone}
%\begin{equation}
\[
\begin{split}
\delta_{\Sigma'}(\vx_0,\vz) &= \frac{-2\re \ls \vx_0,\vz \rs}{ \|\vx_0+\vz\|_{\Sigma'}^2- 2\re \ls \vx_0,\vz \rs}
= \frac{2(\beta_0 \|\vv\|_2^2-\beta_0\alpha \ls \vv,\vu \rs)}{ (\beta_0 -1+\alpha)^2+2(\beta_0 \|\vv\|_2^2-\beta_0\alpha \ls \vv,\vu \rs)}\\
&= \frac{2 \beta_0 (n-\alpha \text{tr}(\vu))}{ (\beta_0 -1+\alpha)^2 + 2 \beta_0 (n-\alpha \text{tr}(\vu))}
=  \frac{n-\alpha \text{tr}(\vu)}{ \tfrac{(\beta_0 -1+\alpha)^2}{2\beta_0} + (n-\alpha \text{tr}(\vu))}.
\end{split}
\]
%\end{equation}
Since $\beta_{0} \mapsto (\beta_0 -1+\alpha)^2/\beta_0 $ is minimized for $\beta_0= 1-\alpha$ and increasing for $\beta_0> 1-\alpha$, and since $1-\alpha u_{min} \geq 1-\alpha$, the above expression is minimized at $\beta_0 =1-\alpha u_{min}$ yielding 
%\begin{equation}
\[
\delta_\Sigma(\vz) \geq \frac{n-\alpha \text{tr}(\vu)}{ \frac{(\alpha(1-u_{min}))^2}{2(1-\alpha u_{min})} + (n-\alpha \text{tr}(\vu))}= \frac{1}{\frac{1}{2} \cdot \frac{(\alpha(1-u_{min}))^2}{(1-\alpha u_{min})(n-\alpha \text{tr}(\vu))}  + 1} 
\]
%\end{equation}
We look for the supremum of $\frac{(\alpha(1-u_{min}))^2}{(1-\alpha u_{min})(n-\alpha \text{tr}(\vu))} $ over bi-stochastic $\vu$ and $0 \leq \alpha \leq 1$, which is obviously the supremum for $0<\alpha \leq 1$.  Considering $\gamma = 1/\alpha \in [1, +\infty)$ we observe that  
%\begin{equation}
\[
%\inf_{\gamma \geq 1}
\left(\frac{(\alpha(1-u_{min}))^2}{(1-\alpha u_{min})(n-\alpha \text{tr}(\vu))}\right)^{-1} = 
\frac{n(\gamma- u_{min})(\gamma- \text{tr}(\vu)/n)}{(1-u_{min})^2}
\]
%\end{equation}
is a second degree polynomial with roots $0 \leq u_{min}\leq  \text{tr}(\vu)/n \leq 1$. Thus this expression reaches its minimum at $\gamma =1$, where its value is 
\begin{equation}
\frac{n- \text{tr}(\vu)}{1-u_{min}} =  \frac{1 -u_{min} + n-1 - (\text{tr}(\vu)-u_{min})}{1-u_{min}} = 1 +\frac{ n-1 - (\text{tr}(\vu)-u_{min})}{1-u_{min}} \geq 1
\end{equation}
where we used the fact  that for all $i$, $u(i,i)\leq 1$ which implies $\text{tr}(\vu)-u_{min} \leq n-1$.  
Finally,
\begin{equation}
\delta_{\Sigma'}(\vz) \geq  \frac{1}{\frac{1}{2} \cdot \frac{1}{ \inf_{\gamma \geq 1}\frac{n(\gamma- u_{min})(\gamma- \text{tr}(\vu)/n)}{(1-u_{min})^2}}+1} \geq  \frac{1 }{ \tfrac{1}{2}+ 1} = \frac{2}{3}.
\end{equation}
Since this holds for any $\vz \in \sT_{\Sigma'}(\Sigma')$ we conclude that $\delta_{\Sigma'}(\|\cdot\|_{\Sigma'}) \geq 2/3$.
\end{proof}

\subsection{Limits of $\|\cdot\|_\Sigma$ : structured sparsity and low rank matrix recovery}\label{sec:sigma_norm_candidate}

In two examples --structured sparsity and low rank matrix recovery-- we now show that we cannot use $\|\cdot\|_\Sigma$ as a regularizer to perform uniform recovery. With the  set  $\sT_\Sigma(\Sigma)$, we can characterize when recovery with $f(\cdot) = \|\cdot\|_\Sigma$ in minimization~\eqref{eq:minimization} and at least one dimension-reducing linear operator $\mM$ is {\em impossible}: 
%\begin{itemize}
%\item 
if $\sT_\Sigma(\Sigma) = \sE(\Sigma)$ then for any $\mM$ with a non-trivial null space we have $(\tker M \setminus {0}) \cap \sT_f(\Sigma) \neq \emptyset$ and recovery with $M$ is not possible. 

As we will see, this is the case for $k$-sparse vectors with $k \geq 2$, establishing that the norm $\|.\|_\Sigma$  considered in \cite{Argyriou_2012} (named $k$-support norm there)  does {\em not} permit uniform recovery of all such sparse vectors. We show the unit ball of this norm in Figure~\ref{fig:norm_sigma}. We observe that the span of the set of descent vectors at 1-sparse vectors are half-spaces, preventing any hope of recovery of these vectors. The norm $\|\cdot\|_\Sigma$ has also been considered in \cite{Richard_2014} for simultaneously sparse and low rank matrix recovery. The limit case of the low rank matrices shows that uniform recovery with $\|\cdot\|_\Sigma$ is not generally possible in this case either.
%\end{itemize}
\begin{figure}[h]
\centering
\begin{tabular}{c}
\includegraphics[width=0.45\linewidth]{./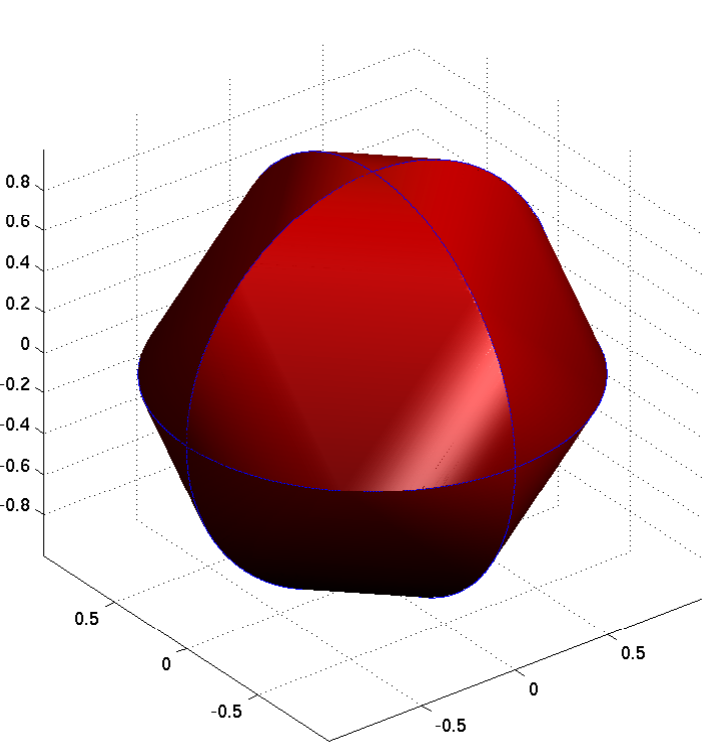}
\end{tabular}
\caption{The unit ball of the norm $\|\cdot\|_\Sigma$ for 2-sparse vectors in 3D (K-support norm).}\label{fig:norm_sigma}
\end{figure}

\begin{proposition}[Structured sparsity and the norm $\|\cdot\|_\Sigma$] \label{prop:sigm_norm_structured} 
 For $K\geq 2$,  uniform recovery of $K$-group sparse vectors with the norm $\|\cdot\|_\Sigma$ as a regularizer is impossible. 
\end{proposition}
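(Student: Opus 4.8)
The plan is to invoke the impossibility criterion recalled just above the proposition: since here $f=\|\cdot\|_\Sigma$, it suffices to prove that the descent set fills the whole ambient finite-dimensional space, i.e. $\sT_\Sigma(\Sigma_K)=\sE(\sA)$, where $\sE(\sA)=\linspan(\ve_i : i\in\bigcup_{g\in G} g)$. One inclusion is free: whenever $\vz\notin\sE(\sA)$ and $\vx\in\Sigma_K\subset\sE(\sA)$ we have $\vx+\vz\notin\sE(\sA)$, so $\|\vx+\vz\|_\Sigma=+\infty$ and $\vz$ is not a descent vector; hence $\sT_\Sigma(\Sigma_K)\subseteq\sE(\sA)$. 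It remains to show that every $\vz\in\sE(\sA)$ is a descent vector of $\|\cdot\|_\Sigma$ at some point of $\Sigma_K$.

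The key step, and the place where $K\geq 2$ is used, is to show that $\|\cdot\|_\Sigma$ is differentiable at every single normalized atom $\va\in\Sigma_1\cap\sS(1)$ (a unit vector supported on exactly one group), with gradient $\va$. I would compute the subdifferential through the dual norm, $\partial\|\va\|_\Sigma=\{\vw : \|\vw\|_\Sigma^*\leq 1,\ \re\ls\vw,\va\rs=\|\va\|_\Sigma=1\}$. Let $g_0=\supp(\va)$. For any other group $g_1\neq g_0$ and any unit vector $\vv$ supported on $g_1$, the vector $\cos\theta\,\va+\sin\theta\,\vv$ is supported on two groups, hence is a legitimate atom of $\Sigma_K\cap\sS(1)$ precisely because $K\geq 2$; testing the dual-norm inequality $\|\vw\|_\Sigma^*\geq\re\ls\vw,\cos\theta\,\va+\sin\theta\,\vv\rs$ for small $\theta$ of both signs forces $\re\ls\vw,\vv\rs=0$, so $\vw$ vanishes off $g_0$. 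Cauchy--Schwarz together with $\re\ls\vw,\va\rs=1=\|\va\|_\sH$ then forces $\vw=\va$, so $\partial\|\va\|_\Sigma=\{\va\}$. For $K=1$ this fails --- the subdifferential at $\va$ is a whole face, exactly as for the $\ell^1$ norm --- which is why recovery is then possible.

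With differentiability in hand, the covering argument is short. Given an arbitrary $\vz\in\sE(\sA)\setminus\{0\}$, I choose a group $g$ with $\vz_g\neq 0$ and set $\va\defin-\vz_g/\|\vz_g\|_\sH\in\Sigma_1\cap\sS(1)$, so that $\re\ls\va,\vz\rs=-\|\vz_g\|_\sH<0$. Since the directional derivative of $\|\cdot\|_\Sigma$ at $\va$ along $\vz$ equals $\re\ls\va,\vz\rs<0$, convexity gives $\|\va+s\vz\|_\Sigma<\|\va\|_\Sigma=1$ for small $s>0$; taking $s=1/t$ and using positive homogeneity yields $\|t\va+\vz\|_\Sigma<t=\|t\va\|_\Sigma$ for $t$ large. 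Thus $\vz\in\sT_\Sigma(t\va)\subseteq\sT_\Sigma(\Sigma_K)$. Together with $0\in\sT_\Sigma(\Sigma_K)$ this gives $\sT_\Sigma(\Sigma_K)=\sE(\sA)$, and the criterion then rules out uniform recovery for any $\mM$ whose null space meets $\sE(\sA)$ nontrivially, in particular any dimension-reducing operator.

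The main obstacle is the differentiability claim at the atoms: establishing that $\partial\|\va\|_\Sigma$ is the singleton $\{\va\}$ for $K\geq 2$. This is where the structure of the $K$-support norm and the availability of a second group enter decisively, and it presupposes the nontrivial regime $|G|>K$ (otherwise $\Sigma_K=\sE(\sA)$ and the recovery problem is degenerate). Everything else is soft convex analysis.
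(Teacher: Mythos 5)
Your proof is correct but follows a genuinely different route from the paper's. The paper argues by contradiction: assuming some $\vz$ is nowhere a descent vector, it upper-bounds $\|\vx_h+\lambda\vz\|_\Sigma$ by an explicit atomic decomposition into $2$-group-sparse pieces $\vu_i=\tfrac{1}{|H^c|}(\vx_h+\lambda\vz_h)+\lambda\vz_{g_i}$ (the only place $K\geq 2$ is used), obtaining a smooth majorant $F(\lambda)$ with $F(\lambda)>1=F(0)$ for all $\lambda\neq 0$; first-order optimality $F'(0)=0$ then forces $\ls \vx_h,\vz_h\rs=0$ for every normalized $1$-group-sparse $\vx_h$, hence $\vz=0$, a contradiction. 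You instead prove directly that $\|\cdot\|_\Sigma$ is differentiable at each atom $\va\in\Sigma_1\cap\sS(1)$ with gradient $\va$, via the dual-norm description of the subdifferential tested against the $2$-group-sparse unit vectors $\cos\theta\,\va+\sin\theta\,\vv$ (your use of $K\geq 2$), and then you cover $\sE(\sA)$ by pairing each $\vz\neq 0$ with the anti-aligned atom $\va=-\vz_g/\|\vz_g\|_\sH$. Both proofs establish $\sT_\Sigma(\Sigma_K)=\sE(\sA)$ and finish with the same impossibility criterion, and both ultimately exploit first-order smoothness of the $K$-support norm at $1$-group-sparse points; but the paper does this through an ad hoc smooth upper bound with no convex duality, whereas you isolate the underlying geometric mechanism (the unit ball is smooth at the atoms, so each descent cone there is a half-space, and these half-spaces over all atoms cover everything). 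Your version is more conceptual and transfers almost verbatim to the low-rank proposition that follows in the paper; the paper's is more elementary and self-contained. Two minor points to tighten: in the complex case your test only yields $\re\ls\vw,\vv\rs=0$, so you should also test $e^{i\phi}\vv$ to conclude $\vw_{g_1}=0$; and the degenerate regime $|G|\leq K$ that you set aside is actually harmless for the proposition --- there $\|\cdot\|_\Sigma$ coincides with $\|\cdot\|_\sH$ on $\sE(\sA)$, differentiability off the origin is immediate, and your covering argument still applies (the paper's proof makes the same implicit assumption $H^c\neq\emptyset$).
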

\begin{proof}
If $K\geq 2$, we prove that $\sT_\Sigma(\Sigma) = \sE(\Sigma)$. By contradiction, suppose there is $\vz \in \sT_\Sigma(\Sigma)^c$. Consider a group $H = \{h\}$ and $\vx_h$ a normalized $1$-group sparse vector. We have $\|\vx_h\|_\Sigma = \|\vx_h\|_\sH=1$, and $\vx_h \in \Sigma_1 \subset \Sigma_K \subset \Sigma$. 
%We have $z \in \sT_{\Sigma}(x)^c$. Thus 
Since $\vz \in \sT_\Sigma(\Sigma)^c$ we have for all $\lambda \in \bR \setminus \{0\}$ :
%\begin{equation}
\[
%\begin{split}
1 = \|\vx_h\|_\Sigma < \|\vx_h +\lambda \vz \|_\Sigma.
\]
%\end{split}
%\end{equation}
We now upper bound  $\|\vx_h +\lambda \vz\|_\Sigma$ by exhibiting an atomic decomposition of $\vx_h + \lambda \vz$. Let $\vx_h +\lambda \vz= \vv_h +\sum_{g \in |H^c|} \vv_g$ be the canonical decomposition of $\vx_{h}+ \lambda \vz$ into $1$-group sparse vectors : $\vv_h,\vv_g \in \Sigma_1$.  Remark that $\vv_h = \vx_h +\lambda \vz_h$ and $\vv_g = \vz_g$. For each $g_i \in H^c$ (with $1\leq i\leq |H^c|$), let $\vu_i \in \Sigma_2 \subset \Sigma_K = \Sigma$ such that $\vu_i= \frac{1}{|H^c|} \vv_h + \vv_{g_i} $. Then $\vx_h +\lambda \vz = \sum_{i=1}^{|H^c|}\vu_i $  hence $ \|\vx_h +\lambda \vz \|_\Sigma \leq   \sum_{i=1}^{|H^c|} \|\vu_i\|_\sH$ and
\begin{equation}
\begin{split}
1 &<  \sum_{i=1}^{|H^c|} \|\vu_i\|_\sH 
=  \frac{1}{|H^c|} \sum_{i=1}^{|H^c|} \|\vx_h +\lambda \vz_h + \lambda |H^{c}| \vz_{g_i}) \|_\sH\\
&=   \frac{1}{|H^c|} \sum_{i=1}^{|H^c|} \sqrt{ \| \vx_h +\lambda \vz_h\|_\sH^2 + |\lambda|^2 |H^{c}|^{2} \|z_{g_i}\|_\sH^2}\\
&=   \frac{1}{|H^c|} \sum_{i=1}^{|H^c|} \sqrt{\|\vx_h\|_\sH^2 + 2\lambda \ls  \vx_h, \vz_h \rs +\lambda^2 (\|\vz_h\|_\sH^2 + |H^{c}|^{2} \|\vz_{g_i}\|_\sH^2)} \defin F(\lambda)
\end{split}
\end{equation}
We calculate the derivative $F'(0)$ : 
\begin{equation}
\begin{split}
F'(\lambda) &= -\frac{1}{|H^c|}\sum_{i=1}^{|H^c|} \frac{2 \ls \vx_h, \vz_h \rs +2 \lambda (\|\vz_h\|_\sH^2 + |H^{c}|^{2} \|\vz_{g_i}\|_\sH^2)}{2 \sqrt{ \| \vx_h\|_\sH^2 + 2\lambda \ls \vx_h, \vz_h \rs +\lambda^2 (\|\vz_h\|_\sH^2 + |H^{c}|^{2}\|\vz_{g_i}\|_\sH^2)}}\\ 
F'(0) &= - \frac{\ls \vx_h, \vz_h \rs }{ \|\vx_h\|_\sH}\\ 
\end{split}
\end{equation}
Besides, since $F(\lambda) > 1=F(0)$  for all $\lambda \neq 0$, we must have $F'(0)=0$, i.e. $\ls \vx_h, \vz_h \rs =0$. 

This being true for all normalized $1$-group sparse vectors with support $H = \{h\}$, $\vx_h$, we must have  $\vz = \vz_{H^c}$. This being true for all 1-group sparse supports $H$, implies $\vz = 0$. However $\vz=0$ is {\em not} in $\sT_\Sigma(\Sigma)^c$ therefore contradicting our assumption.  Consequently, $\sT_\Sigma(\Sigma)^c = \emptyset$ and $\sT_\Sigma(\Sigma) = \sE(\Sigma) $.
\end{proof}

The same fact holds with low rank matrices. 

\begin{proposition}[Low rank matrix recovery and the norm $\|\cdot\|_\Sigma$]
For $r \geq 2$, uniform recovery of $\Sigma_r$ (the set of matrices of rank at most $r$) with $\|\cdot\|_\Sigma$ is impossible. 
\end{proposition}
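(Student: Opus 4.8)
The plan is to follow the strategy of the structured-sparsity case (Proposition~\ref{prop:sigm_norm_structured}) and show that the descent set fills the whole space, $\sT_\Sigma(\Sigma) = \sE(\Sigma)$; by the impossibility criterion recalled just above, this rules out uniform recovery with any dimension-reducing $\mM$ (such an $\mM$ has non-trivial null space). First note that $\Sigma=\Sigma_r$ is a symmetric UoS, so $\|\cdot\|_\Sigma$ is a genuine norm, $\sE(\Sigma)$ is the whole matrix space (every matrix has finite $\|\cdot\|_\Sigma$), and $\sT_\Sigma(\Sigma)^c$ is invariant under multiplication by any nonzero real: by positive scalars because $\Sigma$ is a cone, and by $-1$ because $\Sigma=-\Sigma$ and $\|\cdot\|_\Sigma$ is a norm.

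Arguing by contradiction, I would take $\vz\in\sT_\Sigma(\Sigma)^c$ and an arbitrary normalized rank-one matrix $\vx=\vu_0\vv_0^{T}$ with $\|\vu_0\|_\sH=\|\vv_0\|_\sH=1$; then $\vx\in\Sigma_1\subset\Sigma_r$ and $\|\vx\|_\Sigma=\|\vx\|_\sH=1$. By the scaling invariance above, $\|\vx+\lambda\vz\|_\Sigma>1$ for every $\lambda\neq 0$. The crux is to exhibit a differentiable upper bound $F(\lambda)\geq\|\vx+\lambda\vz\|_\Sigma$ with $F(0)=1$: then $\lambda=0$ is a strict global minimizer of $F$, so $F'(0)=0$, and I will arrange $F'(0)=\re\ls\vx,\vz\rs$. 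Since this holds for every rank-one $\vx$ (the fixed matrix being $\vz$), the resulting orthogonality $\re\ls\vx,\vz\rs=0$ forces $\vz=0$, contradicting $\vz\in\sT_\Sigma(\Sigma)^c$ (as $0\in\sT_\Sigma(\Sigma)$).

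To construct $F$, I would complete $\vu_0$ to an orthonormal basis $(\vu_i)_{i=0}^{m-1}$ of $\bR^{m}$ and expand $\vz=\sum_{i=0}^{m-1}\vu_i\vw_i^{T}$ with $\vw_i\defin\vz^{T}\vu_i$. Isolating the component along $\vu_0$ gives $\vx+\lambda\vz=\vu_0(\vv_0+\lambda\vw_0)^{T}+\sum_{i=1}^{m-1}\lambda\vu_i\vw_i^{T}$. Mirroring the way the groups of $H^c$ were used in the sparse proof, I spread the ``head'' $\vu_0(\vv_0+\lambda\vw_0)^{T}$ evenly and set, for $i=1,\ldots,m-1$, $\vh_i\defin\tfrac{1}{m-1}\vu_0(\vv_0+\lambda\vw_0)^{T}+\lambda\vu_i\vw_i^{T}$. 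Each $\vh_i$ is a sum of two rank-one matrices with orthogonal column spaces, hence has rank at most $2\leq r$; thus $\vh_i\in\Sigma_r$, $\|\vh_i\|_\Sigma=\|\vh_i\|_\sH$, and since $\sum_{i=1}^{m-1}\vh_i=\vx+\lambda\vz$ the triangle inequality gives $\|\vx+\lambda\vz\|_\Sigma\leq\sum_{i=1}^{m-1}\|\vh_i\|_\sH=:F(\lambda)$. Orthogonality of the two summands yields the closed form $F(\lambda)=\sum_{i=1}^{m-1}\sqrt{\tfrac{1}{(m-1)^2}\|\vv_0+\lambda\vw_0\|_\sH^2+\lambda^2\|\vw_i\|_\sH^2}$, and plainly $F(0)=\|\vv_0\|_\sH=1$.

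What remains is routine: each summand is smooth near $\lambda=0$ (the radicand is positive there), and differentiation gives $F'(0)=\re\ls\vv_0,\vw_0\rs=\re\ls\vx,\vz\rs$, as wanted. By transposing if necessary (which preserves rank, $\|\cdot\|_\sH$, and $\|\cdot\|_\Sigma$), I may assume $m\geq 2$, the degenerate case $\min(m,n)<2$ where $\Sigma_r$ is the whole space being excluded. The only genuinely creative step, and the \emph{main obstacle}, is finding this rank-two decomposition so that $F$ is simultaneously an upper bound, differentiable at $0$, and first-order tight at $0$; once the SVD-adapted basis splitting is fixed, the argument parallels the structured-sparsity proof almost verbatim.
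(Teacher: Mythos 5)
Your proof is correct, and it follows the same overall template that the paper itself invokes (mirror the structured-sparsity argument: show $\sT_\Sigma(\Sigma) = \sE(\Sigma)$ via an upper-bounding function $F$ with $F(0)=1$, $F(\lambda)>1$ for $\lambda\neq 0$, and $F'(0)=\re\ls \vx,\vz\rs$), but your key decomposition differs from the paper's. The paper's one-line proof anchors everything on the singular value decomposition of $\vz$ itself: it takes $\vx$ to be the normalized rank-one matrix associated with the \emph{largest} singular value of $\vz$, so that the SVD supplies a bi-orthogonal splitting of $\vx+\lambda\vz$ into rank-$\leq 2$ pieces, and a \emph{single} application of the derivative argument yields $F'(0)=\sigma_1(\vz)=0$, hence $\vz=0$ immediately. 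You instead take an \emph{arbitrary} normalized rank-one $\vx=\vu_0\vv_0^{T}$, complete $\vu_0$ to an orthonormal basis, and exploit only column-space orthogonality ($\vu_0\perp\vu_i$) to get the rank-$\leq 2$ pieces $\vh_i$; you then need the extra (easy) spanning step that $\re\ls \vx,\vz\rs=0$ for all rank-one $\vx$ forces $\vz=0$. Your route is a more literal transcription of the sparse proof (where $\vx_h$ is likewise arbitrary) and avoids invoking the SVD of $\vz$ altogether, at the cost of quantifying over all rank-one $\vx$; the paper's route is shorter because the well-chosen $\vx$ makes the conclusion one-shot. Both are valid, and your treatment of the side conditions (scaling invariance of $\sT_\Sigma(\Sigma)^c$ via the UoS structure, positivity of the radicand at $\lambda=0$ for differentiability, the use of $r\geq 2$ to place each $\vh_i$ in $\Sigma_r$, and the degenerate case $\min(m,n)<2$) is careful — indeed more explicit than the paper's.
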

\begin{proof}
The proof is almost identical to that of Proposition~\ref{prop:sigm_norm_structured}, by using the singular value decomposition of $z$ and considering $x+z$ where $x$ is the rank-one matrix associated with the largest singular value of $z$.
\end{proof}

\section{Discussion and future work}\label{sec:discussion}

In this article, we gave a framework to assess the compatibility of a regularizer $f$ with a given generic model set $\Sigma$ (a cone). The number of examples that are treated show the flexibility and the effectiveness of this framework. Some additional questions emerge naturally from this work. 

\paragraph{Sharpness of $\delta_\Sigma(f)$} The sharpness of RIP constants can be considered in two ways : a weak sharpness or a strong sharpness. 

Given a model $\Sigma$ and a regularizer $f$, we say that the RIP constant $\delta_{\Sigma}^{strong}(f)$ has \emph{strong sharpness} if the RIP with $\delta <  \delta_{\Sigma}^{strong}(f)$ implies uniform recovery from $\Sigma$ with $f$, and there exists linear operators with RIP constant arbitrarily close to  $\delta_{\Sigma}^{strong}(f)$ for which uniform recovery from $\Sigma$ with $f$ is impossible. 

In the context of $K$-sparse recovery in dimension $N$, we consider a family of models $\Sigma_{K}^{N}$ and of regularizers $f_{N}(\cdot) = \|\cdot\|_{\ell^{N}_{1}}$. The constant  $\delta^{weak}$ has \emph{weak sharpness} over this family of models and regularizers if the RIP with $\delta <  \delta^{weak}$ implies uniform recovery from $\Sigma_{K}^{N}$ with $f_{N}$ for any $K$, $N$, and there exists dimensions $K,N$ and matrices with RIP constant arbitrarily close to  $\delta^{weak}$ for which uniform recovery from $\Sigma_{K}^{N}$ with $f_{N}$ is impossible. The notion of weak sharpness extends to many other families of models and regularizers, e.g. for $r$-rank recovery with the nuclear norm, or the recovery of permutation matrices with the gauge generated by the Birkhoff polytope of bi-stochastic matrices.  

For classical families of models and regularizers (sparse recovery with the $\ell^1$ norm and low-rank matrix recovery with the nuclear norm), as well as for structured sparsity and the associated mixed-norm, Theorem~\ref{th:sharp_RIP} and Theorem~\ref{th:robust_RIP_eucl} are weakly sharp. Indeed, we know that there exist RIP matrices with constant arbitrarily close to  $1/\sqrt{2}$ which do not permit uniform recovery \cite{Davies_2009}, and since $\delta_{\Sigma}(f) \geq 1/\sqrt{2}$ one must have $\delta^{weak} = 1/\sqrt{2}$. 
Thus for these families of models,  our results are weakly sharp in the sense that
\[\delta_\Sigma(f) \geq \delta^{weak} =  1/\sqrt{2}.\]

%\[
%\frac{1}{\sqrt{2}} \leq \delta_\Sigma(f) \geq \delta_{\Sigma_K^N}^{weak}(f) =  1/\sqrt{2}.
%\]

Considering block sparsity, further analysis would be needed to check weak shaprness $\delta_\Sigma(f)$. However one can already observe that it complies with the necessary dependency on the number of blocks $J$ \cite{Bastounis_2015}. 

The question of strong sharpness remains, i.e. do we have the pointwise equality (for {\em each} $f$ and $\Sigma$)
\[\delta_\Sigma(f)= \delta_{\Sigma}^{strong}(f)\ ?\]
A related open question for the family of sparse models is to determine whether the pointwise equality $\delta_{\Sigma_K^N}^{strong}(f)= \delta^{weak}$ holds. Answering this would close the discussion on RIP constants for such models.

\paragraph{Optimal decoders with respect to a model set $\Sigma$} Some new regularizer design appeared in Section \ref{sec:structured_sparsity}. We showed that adequately chosen weights for the block structured norm improve the RIP recovery guarantees for block structured sparse vectors. The admissible RIP constant $\delta_{\Sigma}(f)$ introduces a hierarchy in potential decoders that is also consistent with the quality of stability constants. From an algorithm design perspective, and given a constrained function class $f \in \mathcal{C}$, we can define :
\begin{equation}
\delta_\Sigma(\mathcal{C}) \defin \sup_{f \in \mathcal{C}} \delta_\Sigma(f)
\end{equation}
The design of regularizers under this perspective could be addressed by studying the following problems:
\begin{enumerate}
 \item Is $\delta_\Sigma(\mathcal{C}) > 0$?
 \item If yes, is the optimum reached at a function $f_0 \in \mathcal{C}$?
 \item Can we characterize $f_0$? Is it unique?
\end{enumerate}

\paragraph{Convex regularizer design}

It is particularly interesting to study the class $\mathcal{C}_{\text{convex}}$ of convex regularizers. We showed in Section~\ref{sec:definitions} that for (almost) any convex regularizer $f$ such that $\delta_\Sigma(f)\geq \delta$, there is an atomic norm $\|\cdot\|_\sA$ such that $\delta_\Sigma(\|\cdot\|_\sA) \geq \delta_\Sigma(f) \geq \delta$. Thus we can restrict the study to the class $\mathcal{C}_{\text{atomic}}$ of atomic norm regularizers $f$. 
Considering the function $\vz \mapsto \delta_\Sigma(\vz)$, we can define the set $T_\Sigma(\delta) \defin \{ \vz \in \sH : \delta_\Sigma(\vz) \geq  \delta\}$. Then $\delta_\Sigma(f) \geq \delta$ is equivalent to $\sT_{f} \subset T_\Sigma(\delta)$. This constraint on the descent set characterizes regularizers $f$ such that recovery with RIP $\delta$ is possible. In the case of 1-sparse vectors this condition can be characterized exactly and yields atomic norms that are close to the $\ell^1$ norm. When $\delta_\Sigma(f)\geq 1/\sqrt{2}$ is required, $f$ is necessarily a multiple of the $\ell^1$ norm. This route looks promising for the design of convex regularizers for model sets where such regularizers are unknown or sub-optimal.

\section*{Acknowledgments}
The authors would like to thank B. Adcock, M. Davies, N. Keriven, G. Kutyniok , G. Obozinski and the anonymous reviewers for their insightful comments which helped us improve this paper at various stages of its preparation. 

\section*{References}
\bibliographystyle{abbrv}
 \bibliography{yanntraonmilin}
\end{document}